\documentclass[11pt,letterpaper]{article}

\usepackage[letterpaper,margin=1in]{geometry}
\usepackage[T1]{fontenc}
\usepackage[utf8]{inputenc}
\usepackage{times}
\usepackage{fullpage}
\usepackage{microtype}
\usepackage{setspace}
\usepackage{amsmath,amssymb,amsthm,mathtools}
\usepackage{xparse}
\usepackage{xcolor}

\usepackage[ruled,vlined,linesnumbered]{algorithm2e}
\SetKw{Break}{break}
\SetKwData{Null}{null}
\SetKwProg{WithProb}{\normalfont with probability}{}{end}

\usepackage{array}
\usepackage{booktabs}
\usepackage{diagbox}
\usepackage{longtable}
\usepackage{multirow}
\usepackage{tabularx}
\usepackage{threeparttable}
\newcolumntype{Y}{>{\raggedright\arraybackslash}X}

\usepackage{enumitem}
\setlist{topsep=0.5em,itemsep=0.2em,parsep=0pt,partopsep=0pt}

\usepackage[
colorlinks=true,
linkcolor=blue!55!black,
citecolor=blue!55!black,
urlcolor=blue!55!black,
pdfauthor={Qin Zhang},
pdftitle={Streaming Algorithms for Gaussian Kernel Density Statistics}
]{hyperref}

\allowdisplaybreaks
\providecommand{\Description}[1]{}

\newtheorem{theorem}{Theorem}[section]
\newtheorem{lemma}[theorem]{Lemma}
\newtheorem{claim}[theorem]{Claim}

\theoremstyle{definition}
\newtheorem{definition}[theorem]{Definition}

\theoremstyle{remark}

\newcommand{\abs}[1]{{\left | #1 \right |}}

\newcommand{\Var}{\mathbf{Var}}
\newcommand{\eps}{\epsilon}

\renewcommand{\Pr}{\mathbf{Pr}}

\newcommand{\R}{\mathbb{R}}
\newcommand{\Z}{\mathbb{Z}}
\newcommand{\E}{\mathbf{E}}

\newcommand{\supp}{\operatorname{supp}}
\newcommand{\poly}{\operatorname{poly}}

\newcommand{\ip}[2]{\left\langle #1,#2\right\rangle}
\newcommand{\norm}[1]{\left\|#1\right\|}
\newcommand{\one}{\mathbf 1}

\newcommand{\calA}{\mathcal{A}}
\newcommand{\packA}{\beta}
\newcommand{\Bgood}{B_{\mathrm{good}}}
\newcommand{\dist}{\operatorname{dist}}

\newcommand{\DI}{{\textsf{DI}}}
\newcommand{\SUM}{{M_1}}

\newcommand{\Index}{\textsc{Index}}

\makeatletter
\let\oldnl\nl
\newcommand{\nonl}{\renewcommand{\nl}{\let\nl\oldnl}}
\makeatother

\title{Streaming Algorithms for Gaussian Kernel Density Statistics}

\author{
	Qin Zhang \\
	Computer Science Department\\
	Indiana University\\
	\texttt{qzhangcs@iu.edu}
}

\date{}

\begin{document}

		\maketitle
		
		\begin{abstract}
			Motivated by data produced by generative systems, \cite{LZ26b} formulates similarity-aware statistics via a weighted similarity graph, replacing equality with similarity in classical frequency-based statistics. Although this framework captures semantic relationships between nonidentical items, under general similarity functions even coarse one-pass approximation can require linear space. We therefore ask whether the geometric structure present in natural vector similarities can overcome this barrier. We answer this question affirmatively for the Gaussian kernel. For fixed-dimensional Euclidean vector streams, we study similarity-aware analogues of classical frequency statistics, including the number of distinct elements and frequency moments, through the diversity index and Gaussian density moments. We give one-pass sublinear-space approximation algorithms that exploit the geometric and analytic properties of the Gaussian kernel, and complement them with lower bounds. Our results show that geometric structure can fundamentally change the streaming complexity of similarity-aware statistical analysis.
		\end{abstract}

\section{Introduction}
\label{sec:intro}

Frequency is meaningful only after specifying when two items should count as the same.  Classical frequency-based statistics make this choice using exact equality: the frequency of an item is the number of dataset items identical to it.  Equivalently, each dataset item contributes either zero or one, according to the equality indicator.  For data produced by modern generative systems, however, exact equality is often too rigid.  Two texts or images may be distinct as data objects while expressing the same concept or content, and their similarity may be real-valued rather than binary.  For
frequency-based statistical analysis of such data, it is therefore natural to replace equality with similarity, allowing each dataset item to contribute according to its similarity to the item of interest.

Formally, let $S=(s_1,\ldots,s_n)\in\mathcal{U}^n$ be a dataset, and let
$sim:\mathcal{U}\times\mathcal{U}\to[0,1]$ be a symmetric similarity function
satisfying $sim(x,x)=1$.  We define the similarity-aware frequency, or weighted degree, of $s_i$ by
$$
D_i = \sum_{j\in[n]} sim(s_i, s_j).
$$
Equivalently, the dataset induces a complete weighted similarity graph with a self-loop at every vertex. The vertices are the input items, and the edge weights are their pairwise similarities. The quantity $D_i$ is the weighted degree of vertex $i$. 

Exact equality is recovered as the special case $sim(x,y)=\mathbf{1}[x=y]$.  Thus, the
degree moments $M_p=\sum_{i\in[n]}D_i^p$ generalize the classical frequency
moments $F_{p+1}=\sum_{a:f_a>0}f_a^{p+1}$, where $f_a$ is the number of
occurrences of item $a$ in the dataset.  The reciprocal-degree sum
$M_{-1}=\sum_{i\in[n]}D_i^{-1}$ similarly provides a similarity-aware analogue
of the number of distinct elements.

We study the computation of these statistics in the standard data stream model~\cite{FM85,AMS99}.  The items arrive sequentially, and the algorithm must
use a small amount of memory to compute a function of the input stream.  The
similarity graph is \emph{not} presented explicitly: only its vertices (the items) arrive in the stream, and an edge weight can be obtained only by evaluating $sim(\cdot, \cdot)$ on two items currently stored in memory.  Consequently, the algorithm can neither store the entire dataset nor explicitly construct the similarity graph.  It must instead estimate global statistics of the implicit graph using a compact streaming sketch.

The study of streaming statistics under general similarity functions was initiated in~\cite{LZ26b}. This generality comes at a significant cost: for general similarities, any one-pass constant-factor approximation to $M_{-1}$
requires $\Omega(n)$ bits of space, and analogous linear-space lower bounds hold for $M_p\ (p \ge 1)$~\cite{LZ26b}. Thus, unlike their classical
frequency-based counterparts, similarity-aware statistics under general similarity functions do {\em not} admit sublinear-space one-pass algorithms. This raises the question of whether structure in a specific, natural
similarity function can overcome the barrier. Our answer is affirmative for the Gaussian kernel: in fixed dimension, we give one-pass sublinear-space algorithms by exploiting analytic and geometric structure specific to the
Gaussian kernel.

Recent work~\cite{Zhang26} gave a complementary affirmative answer for cosine similarity, where low-rank Gram structure leads to one-pass algorithms whose space depends primarily on the representation dimension rather than on the stream length. Our results show that low-rank algebraic structure is {\em not} the only mechanism that can overcome the general-similarity barriers: the spatial decay and Euclidean geometry of the Gaussian kernel provide a different route to sublinear space. 

\vspace{2mm}
\noindent{\bf Gaussian kernel similarity and density statistics.\ }
Let $d$ be a fixed dimension, and let
$P=(p_1,\ldots,p_n)$ be a stream of $n$ points in $\mathbb R^d$.  For a
bandwidth parameter $\sigma>0$, the Gaussian kernel is
$$
K_\sigma(x,y) := \exp\left(-\frac{\norm{x-y}_2^2}{2\sigma^2}\right).
$$
The (unnormalized) Gaussian kernel density of a stream point $p_i$ is defined as
$
D_i^{K_\sigma}:=\sum_{j\in[n]}K_\sigma(p_i,p_j).
$
Because each point has unit similarity to itself, $1\le D_i^{K_\sigma}\le n$.  

We study three fundamental statistics.
\begin{itemize}
	\item {\em Diversity index\ }:
	$
	M_{-1}^{K_\sigma}(P) :=\sum_{i\in[n]}\left(D_i^{K_\sigma}\right)^{-1}.
	$
	To be consistent with prior work~\cite{LZ26b}, we denote $M_{-1}^{K_\sigma}$ by $\DI^{K_\sigma}$. Intuitively, the diversity index discounts multiplicity and measures
	the effective number of distinct regions represented by the stream.
	
	\item {\em First density moment\ }:
	$
	M_1^{K_\sigma}(P) := \sum_{i\in[n]}D_i^{K_\sigma}.
	$
	This quantity can be viewed as a soft self-similarity join size.  It measures the total amount of pairwise similarity in the stream.
	
	\item {\em Higher density moments\ }:
	$
	M_p^{K_\sigma}(P):=\sum_{i\in[n]}\left(D_i^{K_\sigma}\right)^p
	$
	for a fixed integer $p>1$. Higher density moments place greater weight on dense regions.
\end{itemize}
When the kernel is clear from context, we omit the superscript $K_\sigma$ and write the corresponding statistics simply as $D_i$, $\DI$ (or $M_{-1}$), $M_1$, and $M_p$.

\vspace{2mm}
\noindent{\bf Conventions.\ }
We consider insertion-only streams and treat the dimension $d$ as a fixed constant. When considering the higher moments $M_p$, we assume that $p>1$ is a fixed integer.

For our upper bounds, space is measured in words; each
coordinate of an input point and each scalar quantity occupies one word; hence a point in $\mathbb{R}^d$ occupies $O(d)=O(1)$ words. We assume access to independent random reals and truly random
hash functions. These idealizations can be removed using finite-precision discretization and sufficiently independent hash families, with the resulting
errors absorbed into the approximation and failure-probability budgets. Our lower bounds are measured in bits.

For a nonnegative quantity $X$, we call $\widetilde X$ a $(1+\eps,\delta)$-approximation to $X$ if $(1-\eps)X\le \widetilde X\le(1+\eps)X$ with probability at least $1-\delta$, and, for $\alpha\ge1$, an $\alpha$-approximation if $X/\alpha\le \widetilde X\le\alpha X$; we use the former for relative-error guarantees and the latter for constant-factor or coarser ones. We drop $\delta$ when it is clear from context, and call an algorithm
$\delta$-error if it succeeds with probability at least $1-\delta$. We assume $0<\eps<1/2$ throughout, without loss of generality: for $\eps\ge1/2$ one may run the algorithm with parameter $1/2$, at only a constant-factor cost in space.

We omit floor and ceiling operators when they do not affect the asymptotic complexity, and, unless otherwise specified, $\log$ denotes $\log_2$. Constants hidden in $O_d(\cdot)$ may depend on $d$, while those hidden in $O_{p,d}(\cdot)$ may depend on both $p$ and $d$.

\vspace{2mm}
\noindent{\bf Our results.\ }
In this paper, we give the following results.  
\begin{itemize}
	\item \emph{Diversity index.}
	We give a one-pass algorithm that outputs a $(1+\eps,\delta)$-approximation to $\DI(P)$ using
	$
	O_d\left(\eps^{-2d-4}\log^{O(d)}\frac{n}{\eps\delta}\right)
	$
	words of space; see Theorem~\ref{thm:DI}.  For fixed $d$, $\eps$, and
	$\delta$, the dependence on the stream length is only polylogarithmic. 
	We complement this upper bound by establishing an $\Omega_d\left(\log^{d/2} n\right)$-bit space lower bound for one-pass $(1.01, 1/3)$-approximation to $\DI(P)$; see Theorem~\ref{thm:DI-lb}.
	\smallskip
	
	\item \emph{First density moment.}
	We give a one-pass streaming algorithm that outputs a
	$(1+\eps,\delta)$-approximation to $\SUM(P)$ using
	$
	O\left(\eps^{-2}\log\frac{1}{\delta}\right)
	$
	words of space; see Theorem~\ref{thm:SS}.  We complement this upper bound by a lower bound: there is an absolute constant $c_0>0$ such that, for every fixed $\gamma>0$ and every $n^{-1/2+\gamma}\le\eps\le c_0$, any one-pass $1/3$-error algorithm that $(1+\eps)$-approximates $\SUM(P)$ requires $\Omega(\eps^{-2}\log n)$ bits of space; see Theorem~\ref{thm:SS-lb}.
	
	\smallskip
	
	\item \emph{Higher density moments.}
	For every fixed integer $p>1$, we give a one-pass streaming
	algorithm that outputs a $(1+\eps,\delta)$-approximation to $M_p(P)$ using
	$
	O_{p,d}\left(\eps^{-2}n^{1-1/(p+1)}\log n\cdot\log\frac{1}{\delta}\right)
	$
	words of space; see Theorem~\ref{thm:Fp}.  
	We complement this upper bound by a lower bound stating that, for every $C_p n^{-1/(p+1)} \le \eps \le 1/20$ where $C_p$ is a sufficiently large constant, any one-pass $1/3$-error algorithm that $(1+\eps)$-approximates $M_p(P)$ requires $\Omega_p(\eps^{-2} n^{1-2/(p+1)}/\log n)$ bits of space; see Theorem~\ref{thm:Fp-lb}.
\end{itemize}

Table~\ref{tab:general-vs-gaussian} gives a comparison between general
similarity functions and the Gaussian kernel.  The comparison shows that geometric
structure changes the landscape of one-pass space complexity.

\begin{table}[t]
	
	\renewcommand{\arraystretch}{1.18}
	\begin{tabular}{@{}p{0.23\linewidth} p{0.34\linewidth} p{0.37\linewidth}@{}}
		\toprule
		\textbf{Problem}
		& \textbf{General similarity functions}
		& \textbf{Gaussian kernel (this paper)} \\
		\midrule
		Diversity index $\DI$
		& An $O(1)$-approximation requires $\Omega(n)$ bits in one pass~\cite{LZ26b}.
		& One-pass $(1+\eps)$-approximation using
		$O_d\bigl(\eps^{-2d-4}\log^{O(d)}(n/\eps)\bigr)$ words. \\
		
		First density moment $M_1$
		& An $O(1)$-approximation requires $\Omega(n)$ bits in one pass~\cite{LZ26b}.
		& One-pass $(1+\eps)$-approximation using $O(\eps^{-2})$ words. \\
		
		$p$-th density moment, $M_p\ (p>1)$
		& An $O\bigl(n^{(p-1)/2}\bigr)$-approximation requires
		$\Omega(n)$ bits in one pass~\cite{LZ26b}.
		& One-pass $(1+\eps)$-approximation using
		$O_{p,d}\bigl(\eps^{-2}n^{1-1/(p+1)}\log n\bigr)$ words. \\
		\bottomrule
	\end{tabular}
	
	\caption{Streaming complexity for general similarity functions versus
		the Gaussian kernel.  The Gaussian bounds assume a fixed dimension $d$.  The success probability is set to $0.99$ throughout.}
	\label{tab:general-vs-gaussian}
\end{table}

\subsection{Related Work}

\paragraph{Classical streaming statistics.}
Distinct elements ($F_0$) and frequency moments ($F_p$) are central problems in data stream
algorithms (e.g.,~\cite{FM85,AMS99,BJKST02,KNW10}; see~\cite{Muthu05} for a
survey). As noted above, under the equality kernel, the diversity index, first
density moment, and $p$-th density moment reduce to $F_0$, $F_2$, and
$F_{p+1}$, respectively. Under the Gaussian kernel, however, similarity varies
smoothly with distance, so the corresponding density statistics depend on the full geometric
density structure of the point set rather than on discrete frequency classes.

\paragraph{Near-duplicate streams and general similarities.}
Prior work on streaming data with near-duplicates has studied robust distinct
elements and distinct sampling in low-dimensional Euclidean or structured
metric spaces~\cite{CZ16,CZ18,Zhang25,LZ26a}. These works mainly consider
settings in which the data are well-shaped and the induced threshold
similarity graph is close to a disjoint union of cliques. More recent
work~\cite{LZ26b} studied statistics of an implicit similarity graph under general similarity function. Closest to ours is~\cite{Zhang26}, which considered cosine similarity and obtained one-pass algorithms whose space depends primarily on the representation dimension, exploiting the low-rank Gram structure of normalized vectors. Our work specializes this framework to the
Gaussian kernel and investigates how its analytic and geometric structure
changes the one-pass streaming complexity.

\paragraph{Kernel sums and density estimation.}
Kernel density estimation and kernel-sum approximation are fundamental tools
in machine learning, computational geometry, and databases. A major line of
work develops compact coresets that approximate the kernel density uniformly
over all query points. Phillips and Tai established near-optimal coreset
bounds for broad classes of positive-definite kernels, and Tai subsequently
obtained an optimal-size coreset for Gaussian KDE in fixed
dimension~\cite{PT2020,Tai2022}. Charikar et al.\ developed high-dimensional
data structures for relative-error KDE based on density-constrained
near-neighbor search~\cite{CKNS2020}, while Coleman and Shrivastava designed
sketches for pointwise KDE queries~\cite{CS20}. These works approximate
kernel-density queries, either uniformly or at specified query points. In
contrast, we directly estimate global aggregates of the kernel densities
evaluated at the streamed data points.

\paragraph{Similarity-sensitive diversity measures.}
The diversity index is closely related to similarity-sensitive diversity
measures. Let $u$ be the uniform distribution on $P$, and let
$Z_{ij}=K_\sigma(p_i,p_j)$. The order-zero diversity of Leinster and Cobbold
satisfies
$
{}^0D^Z(u) = \sum_{i\in[n]} \frac{u_i}{(Zu)_i} = \sum_{i\in[n]} \frac{1}{D_i} = \DI(P),
$
so our diversity index is precisely their order-zero
similarity-sensitive diversity~\cite{LC12}. Under the equality kernel, this
quantity reduces to the distinct elements $F_0$. For a symmetric
Boolean similarity function, it becomes the Caro--Wei quantity
$\sum_v 1/(\deg(v)+1)$~\cite{Caro1979,Wei1981}. The latter has been studied in edge-arrival and
explicit vertex-arrival graph streams~\cite{CDK18}.\footnote{In the explicit vertex-arrival graph stream model, vertices arrive one by one along with their set of incident edges to previously arrived vertices.} In contrast, our graph is an
implicit complete weighted graph induced by a stream of points through the
Gaussian kernel.

\section{Diversity Index}
\label{sec:DI}

In this section, we give a one-pass upper bound and a dimension-dependent bit-space lower bound for the diversity index.

\begin{theorem}
	\label{thm:DI}
	There is a one-pass streaming algorithm that, given a stream $P=(p_1,\ldots,p_n)$ of $n$ points in $\mathbb R^d$ for a fixed dimension $d$, outputs a $(1+\eps,\delta)$-approximation to $\DI(P)$.
	The algorithm uses $O_d\left(\eps^{-2d-4}\log^{O(d)}\frac{n}{\eps\delta}\right)$ words of space.
\end{theorem}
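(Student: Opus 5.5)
Rescale so that $\sigma=1$. The idea is to reduce $\DI(P)=\sum_i 1/D_i$ to an $F_0$-style estimation over a geometric partition, combined with level-based subsampling. Impose a randomly shifted grid of side length $r=\Theta(\sqrt{\log(n/\eps)})$. The kernel decays fast: contributions from points at distance more than $R=O(\sqrt{\log(n/\eps)})$ total at most $n\cdot e^{-R^2/2}\le \eps$. Since $D_i\ge 1$, this is a relative error of at most $\eps$ in $D_i$. So $D_i$ is determined, up to a $(1\pm\eps)$ factor, by the points inside the $O_d(1)$ grid cells around $p_i$. Inside such a neighborhood, discretize space into $\eps$-fine sub-cells of side $\Theta(\eps/R)$. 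Moving a point within its sub-cell changes each kernel value by a factor $e^{\pm O(\eps)}$. A neighborhood therefore contains $(R/\eps)^d=\eps^{-d}\log^{O(d)}(n/\eps)$ sub-cells, and $D_i$ is well approximated by a function of the \emph{counts} in those sub-cells.

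The sum is then grouped by sub-cell. Let $c_u$ be the number of stream points in sub-cell $u$. All points of $u$ have nearly the same density $D_u$, so $\DI\approx\sum_{u} c_u/D_u$. We also have $D_u\ge c_u\cdot e^{-O(\eps^2)}$, which gives $c_u/D_u\le 1+O(\eps)$: each nonempty sub-cell contributes between roughly $1/n$ and $1$.

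For the sampling scheme, partition nonempty sub-cells by the scale of $D_u$ into $O(\log n)$ dyadic levels. Alternatively, following the $F_0$ sampling paradigm, hash coarse grid cells at geometric rates $2^{-j}$ for $j=0,\ldots,\log n$. At level $j$, the algorithm keeps \emph{exact} sub-cell counts for every point falling in a sampled coarse cell or in one of its $O_d(1)$ neighboring cells. It maintains this only while the number of stored nonempty sub-cells stays below a budget $T=\eps^{-d-2}\log^{O(d)}(n/\eps\delta)$; once the budget is exceeded, the level is abandoned, exactly as in the Bar-Yossef et al.\ distinct-elements algorithm~\cite{BJKST02}. At the end of the stream, take the smallest surviving level $j$ and output
\[
2^j\sum_{\text{sampled cells }C}\ \sum_{u\in C}\frac{c_u}{\widetilde D_u},
\]
where $\widetilde D_u$ is computed exactly from the stored counts of the neighborhood. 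This is possible because the neighbors of every sampled cell are stored too.

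Unbiasedness is immediate. The estimate for concentration is where the real difficulty lies. The per-cell contributions $X_C=\sum_{u\in C}c_u/D_u$ lie in $[0,(R/\eps)^d]$, because each sub-cell contributes at most about $1$. Every nonempty cell also satisfies $X_C\gtrsim$ the reciprocal of its local mass. The bound needed is that the number of stored nonempty sub-cells at a level is comparable to $2^{-j}\cdot(\text{number of nonempty sub-cells})\cdot O_d(1)$. The key inequality is that the number of nonempty sub-cells is at most $O(\eps^{-d}\log^{O(d)})\cdot\DI$, and the proof of this is the main obstacle. To prove it, observe that a nonempty sub-cell $u$ with local neighborhood mass $m$ has $D_u\le m$, while the whole neighborhood contributes at least $\sum c_u/D_u\ge m/\max D\ge \Omega(1)$. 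Indeed, the heaviest point set inside an $R$-ball has density bounded by its mass within, so every occupied coarse cell contributes $\Omega(1)$ to $\DI$. A given coarse cell has at most $(R/\eps)^d$ sub-cells, so charging sub-cells to the $\Omega(1)$ contribution of their coarse cell gives the claim.

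With this in hand, Chebyshev's inequality gives variance at most $2^j\cdot\max_C X_C\cdot\DI$. Combined with a surviving level where $2^{-j}\DI\gtrsim T/(R/\eps)^d$, this yields relative error $\eps$ with constant probability. Taking a median over $O(\log(1/\delta))$ repetitions and a union bound over levels finishes the argument. The space is $O(\log n)\cdot T$ words times $O(\log 1/\delta)$ repetitions, which matches $\eps^{-2d-4}\log^{O(d)}$ once the per-level budget is set to $\eps^{-2}(R/\eps)^{2d}$ for the variance bound. I expect the charging argument, together with the handling of cells near the boundaries of sampled regions, to need the most care.
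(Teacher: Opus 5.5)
\textbf{Gap: the packing inequality.} Your route differs from the paper's. You hash coarse cells, keep exact sub-cell counts in the $3^d$-neighborhood of every sampled coarse cell, recompute $\widetilde D_u$ deterministically, and pick a level BJKST-style. The paper instead uses truncated exponential lower envelopes over subsampled active cells plus an $F_0$ estimator. Your architecture is workable. But it rests entirely on the packing inequality $\#\{\text{nonempty sub-cells}\}\le\eps^{-d}\log^{O(d)}(n/\eps)\cdot\DI(P)$, and your proof of that inequality is wrong.

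\begin{itemize}
\item The claim that every occupied coarse cell contributes $\Omega(1)$ to $\DI$ already fails for a single point placed just across a cell boundary from $n-1$ coincident points. That cell contributes about $1/n$.
\item The neighborhood version also fails. For points in the non-central cells of $N(C)$, the density is governed by the mass of $N(N(C))$, not by $m$.
\item In fact no bound $|B|=O_d(\DI(P))$ on the number of occupied coarse cells can hold. Place clusters of sizes $L^0,L^1,\dots,L^L$ at unit spacing along a line, with $L^L\approx n$. Every cluster but the last contributes at most $e^{1/2}/L$, so $\DI(P)<3$. Yet the chain has length $L=\Theta(\log n/\log\log n)$, so for fixed $\eps$ it meets $\Omega(\sqrt{\log n}/\log\log n)$ coarse cells, and each coarse neighborhood in its middle contributes $o(1)$.
\end{itemize}
So a logarithmic loss is unavoidable, and a new argument is needed.

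\textbf{The missing idea: a doubling chain.} This is exactly the paper's good/bad box argument.
\begin{itemize}
\item Call an occupied coarse box $b$ \emph{good} if every occupied neighbor $b'$ satisfies $n_{b'}\le 2n_b$.
\item For $p_i\in b$, points outside the $3^d$-neighborhood $\mathcal N(b)$ contribute at most $1$ to $D_i$, so $D_i\le 2n_{\mathcal N(b)}$. For good $b$ this gives $D_i\le 4\cdot 3^d n_b$. Hence each good box contributes at least $1/(4\cdot 3^d)$, and $|\Bgood|\le 4\cdot 3^d\,\DI(P)$.
\item From a bad box, repeatedly move to a neighbor with more than twice its occupancy. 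Occupancies double at each move, so you reach a good box within $\log_2 n$ moves.
\item Hence every occupied box lies within $\ell_\infty$ box-distance $\log_2 n$ of a good box, and $|B|\le(2\log_2 n+1)^d\,|\Bgood|=O_d(\log^d n)\,\DI(P)$.
\end{itemize}
Multiply by the $O_d\bigl((R^2/\eps)^d\bigr)$ sub-cells per coarse cell; note this is $(R^2/\eps)^d$, not $(R/\eps)^d$. This gives the packing inequality with a $\log^{O(d)}$ loss that your budget absorbs. With this lemma supplied, the rest of your plan goes through within the stated space bound: the variance bound $2^j\max_C X_C\cdot\DI$, survival of the right level, the union bound over levels, and median amplification.
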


%\vspace{2mm}
\noindent{\bf Algorithmic overview.\ }
We first discretize $\mathbb R^d$ by a sufficiently fine grid.  For an occupied
cell $g$ with center $z_g$, let
$D_g=\sum_{j\in[n]}K_\sigma(z_g,p_j)$ and let $n_g$ be the number of stream
points in $g$.  The grid resolution ensures that $D_g=(1\pm O(\eps))D_i$ for
every $p_i\in g$, reducing the problem to estimating
$
\DI_\Delta(P):=\sum_{g\in G_{\mathrm{occ}}}{n_g}/{D_g}.
$
To estimate the reciprocal $1/D_g$, we use an {\em exponential lower envelope}.  If
$E_1,\ldots,E_n$ are independent unit-rate exponential random variables, then
$
\min_{j\in[n]}{E_j}/{K_\sigma(z_g,p_j)}
$
is exponential with rate $D_g$ and has expectation $1/D_g$.  Truncating
these envelopes makes every stream update local.  

We call a grid cell {\em active} if its center lies within the truncation radius of some stream point; these are precisely the cells that can receive a nonnegligible local update.
The key to our analysis is a geometric packing lemma showing that the number of active cells is at most $\packA\cdot\DI(P)$, where
$
\packA=O_d\left(\eps^{-d}\log^{2d}\frac{n}{\eps\delta}\right).
$
We can therefore subsample active cells and combine independent estimates using a
median-of-means procedure, leading to the bound in Theorem~\ref{thm:DI}.

\medskip
For ease of presentation, we assume that the stream length $n$ is known in advance. This assumption is not essential: it suffices to know any polynomial upper bound $N\ge n$, which can be used in place of $n$ in all algorithm parameters without changing the stated asymptotic space bound, since $n$ appears only through logarithmic factors.

\subsection{Grid and Cell Densities}

We begin with some geometric properties of the Gaussian kernel.
Let $\Gamma := C_\Gamma \log\left(\frac{n}{\eps\delta}\right)$ for a sufficiently large constant $C_\Gamma=C_\Gamma(d)$. Let $\Delta := \frac{\eps\sigma}{C_\Delta\sqrt\Gamma},$
where $C_\Delta=C_\Delta(d)$ is another sufficiently large constant. 
We impose the fixed axis-aligned grid
$$
G:=\left\{g_k:k\in\mathbb{Z}^d\right\}, \quad \text{where} \quad g_k:=\Delta\bigl(k+[0,1)^d\bigr).
$$
Thus, each cell $g_k$ has side length $\Delta$ and center
$
z_{g_k}:=\Delta\left(k+\frac{1}{2}\mathbf{1}\right).
$
We identify each cell $g_k$ with its index $k\in\mathbb{Z}^d$ whenever
the cell is used as a key for hashing or sampling.

Let $\rho:=\sqrt d\,\Delta$ be the diameter of each cell. For each cell $g$, let $P_g:=\{i:p_i\in g\}$ be the set of indices of points contained in $g$, and let $n_g:=\abs{P_g}$ be the number of points in $g$.  Let $G_{\mathrm{occ}}:=\{g:n_g>0\}$ be the set of occupied cells.  For each cell $g$, define its cell-center Gaussian density by 
$$D_g := \sum_{j \in [n]} K_\sigma(z_g,p_j).$$

The following lemma relates the density at a point to the density at the center of its grid cell. Intuitively, it shows that, when the grid is sufficiently fine, the cell-center density provides a good approximation to the density of every point in the cell.

\begin{lemma}
	\label{lem:center-density}
	For $C_\Delta$ sufficiently large, for every occupied cell $g$ and every $i\in P_g$,
	$
	\left(1-\frac{\eps}{20}\right)D_i \le D_g \le \left(1+\frac{\eps}{20}\right)D_i.
	$
	Consequently, if we define $\DI_\Delta(P) := \sum_{g \in G_{\mathrm{occ}}}\frac{n_g}{D_g}$, then
	$
	\DI_\Delta(P) = \left(1 \pm \frac{\eps}{10}\right)\DI(P).
	$
\end{lemma}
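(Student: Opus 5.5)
The plan is to compare the two densities term by term. Write $x=p_i$ and $z=z_g$, so $\norm{x-z}\le \rho/2\le\rho$. For each $j$ we compare $K_\sigma(z,p_j)$ with $K_\sigma(x,p_j)$ by splitting the stream points into \emph{near} points, with $\norm{x-p_j}\le R:=\sigma\sqrt{2\Gamma}$, and \emph{far} points, with $\norm{x-p_j}>R$. The far points contribute negligibly to both sums. Indeed $K_\sigma(x,p_j)<e^{-\Gamma}$, and since $\rho\ll R$ we have $\norm{z-p_j}\ge R-\rho\ge R/2$, hence $K_\sigma(z,p_j)\le e^{-\Gamma/4}$. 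Summed over at most $n$ far points, each total is at most $n e^{-\Gamma/4}\le \eps/100$ once $C_\Gamma$ is large enough. Because $D_i\ge 1$ (the self term), this is at most $(\eps/100)D_i$. The far contribution to $D_i$ is therefore absorbed as a relative error.

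For the near points I will use the identity
\[
\norm{z-p_j}^2-\norm{x-p_j}^2 = \norm{z-x}^2 + 2\ip{z-x}{x-p_j}.
\]
Its absolute value is at most $\rho^2+2\rho R$. Dividing by $2\sigma^2$, the exponent changes by at most
\[
\frac{\rho^2+2\rho R}{2\sigma^2}\le \frac{\rho R}{\sigma^2}\cdot O(1)=O\!\left(\frac{\sqrt d\,\eps\sqrt{2\Gamma}}{C_\Delta\sqrt\Gamma}\right)=O\!\left(\frac{\sqrt d\,\eps}{C_\Delta}\right).
\]
Here I substitute $\rho=\sqrt d\,\eps\sigma/(C_\Delta\sqrt\Gamma)$. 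Choosing $C_\Delta$ large in terms of $d$ makes this at most $\eps/100$. Then $K_\sigma(z,p_j)=e^{\pm\eps/100}K_\sigma(x,p_j)$ for every near $j$. Summing over the near points and adding the far-point bounds on both sides gives
\[
D_g \le e^{\eps/100}D_i+\tfrac{\eps}{100}D_i \le (1+\eps/20)D_i,
\]
and
\[
D_g\ge e^{-\eps/100}\bigl(D_i-\tfrac{\eps}{100}D_i\bigr)\ge(1-\eps/20)D_i,
\]
using $\eps<1/2$.

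The ``consequently'' part follows by inverting. From $D_g=(1\pm\eps/20)D_i$ we get $1/D_g\in[(1+\eps/20)^{-1},(1-\eps/20)^{-1}]\cdot 1/D_i\subseteq(1\pm\eps/10)/D_i$ for $\eps<1/2$. Then I sum over $i\in P_g$ and over the occupied cells. This uses $n_g/D_g=\sum_{i\in P_g}1/D_g$, and the cells partition the points.

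The main care point is the far regime. The naive bound $\norm{x-p_j}\le R$ does not hold for all $j$, so the cross term cannot be controlled uniformly. The fix is the truncation at $R=\Theta(\sigma\sqrt\Gamma)$ together with the lower bound $D_i\ge1$, which turns the additive $n e^{-\Omega(\Gamma)}$ error into a relative one. This is exactly why $\Delta$ carries the $\sqrt\Gamma$ factor.
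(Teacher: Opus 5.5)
Your proof is correct and follows essentially the same route as the paper: a near/far split at radius $\Theta(\sigma\sqrt{\log(n/\eps)})$, the bound $\rho^2+2\rho R$ on the change in squared distance for near points, an $O(\eps/n)$ per-point bound for far points, and $D_i\ge 1$ to turn the additive error into a relative one. The only cosmetic difference is the threshold. The paper splits at $R_0+\rho$ with $R_0=\sigma\sqrt{2\ln(40n/\eps)}$, so its far-term bound needs only $C_\Delta$ large. Your threshold $\sigma\sqrt{2\Gamma}$ also uses that $C_\Gamma$ is large, which is harmless since the paper fixes $C_\Gamma$ sufficiently large anyway.
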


\begin{proof}
	Fix $i\in P_g$. Then $\norm{z_g-p_i}_2\le \rho$. Define $R_0:=\sigma\sqrt{2\ln \left(\frac{40n}{\eps}\right)}$.
	
	For every $p_j$ with $\norm{p_i-p_j}_2\le R_0+\rho$,
	\begin{eqnarray*}
		\abs{\norm{z_g-p_j}_2^2-\norm{p_i-p_j}_2^2}
		&=& \abs{\norm{(z_g-p_i)+(p_i-p_j)}_2^2-\norm{p_i-p_j}_2^2} \\
		&\le& \norm{z_g-p_i}_2^2+2\norm{z_g-p_i}_2\norm{p_i-p_j}_2 \\
		&\le& 2\rho(R_0+\rho)+\rho^2.
	\end{eqnarray*}
	With $C_\Delta$ large enough, $2\rho(R_0+\rho)+\rho^2$ is at most $c_\Delta \eps\sigma^2$ for a sufficiently small absolute constant $c_\Delta$. For such a $j$, compare the two Gaussian terms by taking their ratio:
	$$
	\frac{K_\sigma(z_g,p_j)}{K_\sigma(p_i,p_j)} = \exp \left( -\frac{\norm{z_g-p_j}_2^2-\norm{p_i-p_j}_2^2}{2\sigma^2}\right).
	$$
	The exponent has absolute value at most $c_\Delta \eps/2$, and by taking $c_\Delta$ sufficiently small, equivalently $C_\Delta$ sufficiently large, we have
	$$
	1-\frac{\eps}{40} \le \frac{K_\sigma(z_g,p_j)}{K_\sigma(p_i,p_j)} \le 1+\frac{\eps}{40}.
	$$
	Hence, for all such $j$, $K_\sigma(z_g,p_j)=(1\pm\eps/40)K_\sigma(p_i,p_j)$.
	
	For every $p_j$ with $\norm{p_i-p_j}_2>R_0+\rho$, we have $\norm{p_i-p_j}_2>R_0$ and $\norm{z_g-p_j}_2>R_0$. Therefore, $K_\sigma(p_i,p_j)\le \eps/(40n)$ and $K_\sigma(z_g,p_j)\le \eps/(40n)$.
	
	Since $D_i\ge 1$, the upper bound follows from
	\begin{eqnarray*}
		D_g
		&=& \sum_{j:\norm{p_i-p_j}_2\le R_0+\rho} K_\sigma(z_g,p_j)
		+ \sum_{j:\norm{p_i-p_j}_2>R_0+\rho} K_\sigma(z_g,p_j) \\
		&\le& \left(1+\frac{\eps}{40}\right)\sum_{j:\norm{p_i-p_j}_2\le R_0+\rho} K_\sigma(p_i,p_j)+\frac{\eps}{40} \\
		&\le& \quad  \left(1+\frac{\eps}{20}\right)D_i.
	\end{eqnarray*}
	Similarly,
	\begin{equation*}
		D_g
		\ge \left(1-\frac{\eps}{40}\right)\sum_{j:\norm{p_i-p_j}_2\le R_0+\rho} K_\sigma(p_i,p_j) 
		\ge \left(1-\frac{\eps}{40}\right)\left(D_i-\frac{\eps}{40}\right) 
		\ge \left(1-\frac{\eps}{20}\right)D_i,
	\end{equation*}
	where the last step uses $D_i\ge 1$ and $\eps<1/2$. This proves the first claim.
	
	For the second claim, the first claim implies that for every $i\in P_g$,
	$
	\frac{1}{D_g}=\left(1\pm\frac{\eps}{10}\right)\frac{1}{D_i},
	$
	where we again use $\eps<1/2$. Summing over all occupied cells gives
	$$
	\sum_{g\in G_{\mathrm{occ}}}\frac{n_g}{D_g} = \sum_{g\in G_{\mathrm{occ}}}\sum_{i\in P_g}\frac{1}{D_g} = \left(1\pm\frac{\eps}{10}\right)\sum_{i \in [n]} \frac{1}{D_i}.
	$$
\end{proof}

We shall also use the following immediate consequences.

\begin{lemma}
	\label{lem:cell-density-lower}
	If a cell $g$ is occupied, then $D_g\ge n_g/2$. Moreover, if $i\in P_g$, then $D_i\le 2D_g$.
\end{lemma}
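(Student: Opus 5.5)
The first claim follows from a direct lower bound on the kernel between the cell center and the points inside the cell. Every $p_j$ with $j\in P_g$ satisfies $\norm{z_g-p_j}_2\le \rho/2\le\rho$, where $\rho=\sqrt d\,\Delta$ is the cell diameter. Hence
$$
K_\sigma(z_g,p_j)\ge \exp\left(-\frac{\rho^2}{2\sigma^2}\right).
$$
By the choice $\Delta=\eps\sigma/(C_\Delta\sqrt\Gamma)$ we have $\rho^2/(2\sigma^2)=d\eps^2/(2C_\Delta^2\Gamma)$. Since $\Gamma\ge 1$ and $C_\Delta$ is large, this quantity is at most $\ln 2$, so each such term is at least $1/2$. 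Dropping the nonnegative contributions of points outside $g$ then gives
$$
D_g\ge\sum_{j\in P_g}K_\sigma(z_g,p_j)\ge n_g/2.
$$

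For the second claim, I would simply invoke Lemma~\ref{lem:center-density}. For $i\in P_g$ it gives $D_g\ge(1-\eps/20)D_i$, and since $\eps<1/2$ we have $1-\eps/20\ge 1/2$. Therefore $D_i\le D_g/(1-\eps/20)\le 2D_g$.

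The argument is essentially routine. The only point that needs checking is that the constant $C_\Delta$ (which may depend on $d$) is large enough that $d/(2C_\Delta^2)\le\ln 2$. This is consistent with the paper's convention that $C_\Delta$ is ``sufficiently large''.
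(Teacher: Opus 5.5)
Your proof is correct and matches the paper's argument. The first claim lower-bounds each in-cell kernel term by $\exp(-\rho^2/(2\sigma^2))\ge 1/2$ via the choice of $\Delta$, and the second uses $D_g\ge(1-\eps/20)D_i\ge D_i/2$ from Lemma~\ref{lem:center-density}. Your explicit check that $d\eps^2/(2C_\Delta^2\Gamma)\le\ln 2$, and the sharper $\rho/2$ distance bound, only spell out what the paper leaves to the choice of $C_\Delta$.
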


\begin{proof}
	For the first claim, since $g$ is occupied, every point $p_i\in P_g$ satisfies $\norm{z_g-p_i}_2\le \rho$. Hence
	$$
	D_g \ge \sum_{i\in P_g} K_\sigma(z_g,p_i) \ge \exp \left(-\frac{\rho^2}{2\sigma^2}\right)n_g \ge \frac12 n_g,
	$$
	where the last inequality follows from the choice of the grid diameter $\rho$.
	
	For the second claim, Lemma~\ref{lem:center-density} gives $D_g\ge (1-\eps/20)D_i\ge D_i/2$ for every $i\in P_g$. Therefore, $D_i\le 2D_g$.
\end{proof}

\subsection{A Packing Bound for Active Cells}

For the Gaussian kernel, it suffices to consider only cells within a logarithmic radius of a stream point, because contributions from all remaining cells are negligible.

\begin{definition}[Active cell set]
	\label{def:active-cell}
	Let $R^*:=\sigma\sqrt\Gamma$, and define the active cell set as
	$$
	\calA:=\{g\in G:\dist(z_g,P)\le R^*\},
	$$
	where $\dist(z_g,P)=\min_{p\in P}\norm{z_g-p}_2$.
\end{definition}

The following lemma bounds the number of active cells in terms of $\DI(P)$. This bound is the key structural fact underlying both the algorithm and its analysis.

\begin{lemma}[Active-cell packing]
	\label{lem:active-cell}
	Let $\packA=C_\beta\eps^{-d}\Gamma^{2d}$ for a sufficiently large constant $C_\beta=C_\beta(d)$. We have $\abs{\calA}\le \packA\cdot \DI(P)$.
\end{lemma}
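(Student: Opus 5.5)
Our plan is to charge active cells to a carefully chosen set of ``clusters'' of stream points, each of which contributes a constant to $\DI(P)$. We partition $\R^d$ into coarse cubes of side $L:=\sigma$ (any constant multiple of $\sigma$ will do). Call a coarse cube $Q$ \emph{occupied} if it contains at least one stream point, and let $m_Q$ be the number of stream points in $Q$. Two observations drive the proof.

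\emph{Step 1 (counting active cells per coarse cube).} Every active cell $g$ has $\dist(z_g,P)\le R^*=\sigma\sqrt\Gamma$. So $z_g$ lies within distance $R^*$ of some point in some occupied coarse cube $Q$, and hence within the $R^*$-neighborhood of $Q$. That neighborhood is contained in a cube of side $L+2R^*=O(\sigma\sqrt\Gamma)$. It therefore contains at most
$$
\left(\frac{O(\sigma\sqrt\Gamma)}{\Delta}+1\right)^d=O_d\!\left(\left(\frac{C_\Delta\Gamma}{\eps}\right)^d\right)=O_d(\eps^{-d}\Gamma^{d})
$$
grid cells. Summing over occupied coarse cubes gives $\abs{\calA}\le O_d(\eps^{-d}\Gamma^d)\cdot N_{\mathrm{occ}}$, where $N_{\mathrm{occ}}$ is the number of occupied coarse cubes. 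The exponent $\Gamma^{2d}$ in the statement leaves us slack of a factor $\Gamma^d$, which we may spend in Step 2 if needed.

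\emph{Step 2 (each occupied coarse cube contributes $\Omega(1)$ to $\DI$).} Fix an occupied coarse cube $Q$ and a point $p_i\in Q$. We have
$$
D_i=\sum_j K_\sigma(p_i,p_j)\le \sum_{Q'} m_{Q'}\max_{x\in Q,y\in Q'}K_\sigma(x,y).
$$
This bound mixes in the neighbors' counts, so it cannot be used pointwise. Instead we lower-bound $\sum_{i\in Q}1/D_i$. For $p_i\in Q$ we have $D_i\le \sum_{Q'} m_{Q'}\,w(Q,Q')$, where
$$
w(Q,Q'):=\exp\bigl(-\max(0,\dist(Q,Q'))^2/(2\sigma^2)\bigr).
$$
Hence $\sum_{i\in Q}1/D_i\ge m_Q/S_Q$ with $S_Q:=\sum_{Q'}m_{Q'}w(Q,Q')$. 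We then need $\sum_Q m_Q/S_Q\gtrsim N_{\mathrm{occ}}/\mathrm{polylog}$. A pointwise bound fails here: a sparse cube adjacent to a huge cluster has $m_Q/S_Q\approx 0$. This is the main obstacle.

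The plan for the obstacle is to restrict to \emph{locally maximal} cubes. Only cubes within distance $R^*$ matter up to negligible error, so we truncate $w$ to neighbors within $O(\sigma\sqrt\Gamma)$. We then greedily select a cube $Q$ of largest $m_Q$, charge to it all occupied cubes within distance $O(R^*)$, delete them, and repeat. For a selected $Q$, every cube within the window has $m_{Q'}\le m_Q$, and the window contains at most $O_d(\Gamma^{d/2})$ cubes. Far cubes contribute at most $n\cdot e^{-\Gamma/2}\le \eps$ by the choice of $C_\Gamma$. So $S_Q\le O_d(\Gamma^{d/2})m_Q+1$, and $Q$ contributes $\Omega_d(\Gamma^{-d/2})$ to $\DI(P)$. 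Each selected cube is charged $O_d(\Gamma^{d/2})$ occupied cubes, and the selected cubes have disjoint point sets. Hence
$$
N_{\mathrm{occ}}\le O_d(\Gamma^{d/2})\cdot\#\text{selected}\le O_d(\Gamma^{d})\,\DI(P).
$$
Combining this with Step 1 yields $\abs{\calA}\le O_d(\eps^{-d}\Gamma^{2d})\DI(P)$, which is the statement for $C_\beta$ large enough.
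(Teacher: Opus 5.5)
There is a genuine gap in Step 2, at the sentence ``For a selected $Q$, every cube within the window has $m_{Q'}\le m_Q$.'' This holds for the first selected cube, but not for later ones. Deleting a cube removes it from the pool of candidates, not from $P$, so its points still contribute to $D_i$ for $p_i\in Q$. Its count is bounded only by the count of the earlier center that deleted it, not by $m_Q$. No choice of charging radius repairs this, because a deleted cube can lie just inside the charging radius of an earlier center $Q_1$ and be adjacent to a later center $Q$ lying just outside it.

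Concretely, take $Q_1$ with $M$ points, $Q'$ with $M-1$ points just inside the charging radius of $Q_1$, and $Q$ with one point adjacent to $Q'$ but outside the charging radius of $Q_1$. The greedy selects $Q_1$, deletes $Q'$, then selects $Q$. But the point in $Q$ has density $\Omega_d(M)$, so $Q$ contributes $O_d(1/M)$ to $\DI(P)$, not $\Omega_d(\Gamma^{-d/2})$. Hence $\#\text{selected}\le O_d(\Gamma^{d/2})\DI(P)$ is not established. Step 1 is fine.

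The missing ingredient is a way to handle cubes whose neighborhoods contain much heavier cubes. The paper handles this with a doubling argument instead of a greedy one:
\begin{itemize}
\item It uses coarse boxes of side $R^*$, so points outside the $3^d$ neighboring boxes contribute at most $ne^{-\Gamma/2}\le 1$ to $D_i$.
\item A box is \emph{good} if every occupied neighbor has at most twice its count. Each good box then contributes at least $1/(4\cdot 3^d)$ to $\DI(P)$.
\item From a bad box one can repeatedly step to a neighbor with more than double the count. Since counts lie in $[1,n]$, this reaches a good box within $\log_2 n$ steps.
\end{itemize}
Thus every occupied box is within index distance $\log_2 n$ of a good box, so $\abs{B}\le O_d(\log^d n)\,\DI(P)$. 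Your Step 1, which gives the same $O_d((\Gamma/\eps)^d)$ active cells per box when the side is $R^*$, then finishes the proof.

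Running this repair on your side-$\sigma$ cubes would be lossy. Good cubes would only contribute $\Omega_d(\Gamma^{-d/2})$ each, and chains of $\log_2 n$ steps of length $R^*=\sqrt{\Gamma}\,\sigma$ sweep $O_d(\Gamma^{3d/2})$ cubes. This gives only $\abs{\calA}\le O_d(\eps^{-d}\Gamma^{3d})\DI(P)$. The coarse side $R^*$ is what makes the exponent come out to $\Gamma^{2d}$.
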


\begin{proof}
	Partition $\mathbb{R}^d$ into axis-aligned boxes of side length $R^*$, and let $B$ be the set of boxes containing at least one input point. For a box $b\in B$, let $n_b$ be its number of points. Identifying boxes with their integer grid indices, define the neighboring-box set
	$$
	\mathcal N(b):=\{b'\in B:\norm{b-b'}_\infty\le 1\},
	\quad \text{and} \quad
	n_{\mathcal{N}(b)}:=\sum_{b'\in\mathcal N(b)} n_{b'}.
	$$
	The number of boxes in $\mathcal N(b)$ is at most $3^d$.
	
	If $p_i\in b$, then points outside $\mathcal N(b)$ are at distance at least $R^*$ from $p_i$. Hence, their total contribution to $D_i$ is at most $n \cdot \exp\left(-\frac{(R^*)^2}{2\sigma^2}\right)=ne^{-\Gamma/2}\le 1$
	for a large enough constant $C_\Gamma$. Therefore, $D_i\le n_{\mathcal{N}(b)}+1\le 2n_{\mathcal{N}(b)}$. Thus
	\begin{equation}
		\DI(P) \ge \frac12\sum_{b\in B}\frac{n_b}{n_{\mathcal{N}(b)}}.
		\label{eq:hk-box-lower}
	\end{equation}
	
	Call an occupied box $b$ {\em good} if every occupied neighboring box $b'\in\mathcal N(b)$ satisfies $n_{b'}\le 2n_b$, and {\em bad} otherwise.  Let $\Bgood\subseteq B$ denote the set of good boxes.  If $b$ is good, then $n_{\mathcal{N}(b)}\le 2\cdot 3^d n_b$, so $n_b/n_{\mathcal{N}(b)}\ge 1/(2\cdot 3^d)$. Restricting the sum in \eqref{eq:hk-box-lower} to good boxes gives
	$$
	\DI(P) \ge \frac12\sum_{b\in\Bgood}\frac{n_b}{n_{\mathcal{N}(b)}} \ge \frac12\cdot\abs{\Bgood}\cdot\frac{1}{2\cdot 3^d} = \frac{\abs{\Bgood}}{4\cdot 3^d}.
	$$
	Consequently,
	\begin{equation}
		\abs{\Bgood}\le 4\cdot 3^d\DI(P).
		\label{eq:good-box-count}
	\end{equation}
	
	Every bad box has a neighboring occupied box with more than twice its occupancy. Iterating this step, the occupancies double at each move, so the chain reaches a good box after at most $\log_2 n$ moves. Hence every occupied box lies within $\ell_\infty$ box-distance $O(\log n)$ of a good box. Therefore,
	$
	\abs{B}\le O_d\left(\log^d n \cdot \abs{\Bgood}\right).
	$
	Combining this with \eqref{eq:good-box-count} gives
	\begin{equation}
		\abs{B}\le O_d\left(\log^d n \cdot \DI(P)\right).
		\label{eq:box-count}
	\end{equation}
	
	Now pass from occupied $R^*$-boxes to active fine-grid cells. If a point lies in a box $b$, then every active cell caused by that point has center within distance $R^*$ of the point. Thus each occupied box can create active cells only within an $O_d(R^*)$-neighborhood of the box. The number of such grid cells is at most
	$
	O_d\left(\left(\frac{R^*}{\Delta}\right)^d\right) =	O_d\left(\left(\frac{\Gamma}{\eps}\right)^d\right).
	$
	Using \eqref{eq:box-count} and the fact that $\log n=O(\Gamma)$, we have
	$$
	\abs{\calA} \le O_d\left(\left(\frac{\Gamma}{\eps}\right)^d\Gamma^d\DI(P)\right) = O_d\left(\eps^{-d}\Gamma^{2d}\DI(P)\right).
	$$
	The lemma follows by taking $C_\beta$ sufficiently large.
\end{proof}

\subsection{Exponential Lower Envelopes and Local Updates}

We estimate $\DI_\Delta(P)=\sum_{g\in G_{\mathrm{occ}}}n_g/D_g$, where $D_g=\sum_{j=1}^n K_\sigma(z_g,p_j)$. Recall that by Lemma~\ref{lem:center-density}, $\DI_\Delta(P)$ is a good approximation to $\DI(P)$.

For each occupied cell $g$, we define a random quantity whose expectation approximates $1/D_g$. Multiplying this quantity by the cell count $n_g$ then gives an estimator for $n_g/D_g$.

Sample $E_1,\ldots,E_n\sim\operatorname{Exp}(1)$ independently, and fix a cell center $z_g$.  We call $E_j$ the \emph{exponential rank} of $p_j$.
By the scaling property of exponential random variables,
$E_j/K_\sigma(z_g,p_j)\sim\operatorname{Exp}(K_\sigma(z_g,p_j))$.

Define the lower envelope at cell $g$ by
$$
Y_g^\circ := \min_{j\in[n]}\frac{E_j}{K_\sigma(z_g,p_j)}.
$$
This is the minimum of independent exponential random variables with rates $K_\sigma(z_g,p_j)$. Hence, for every $t\ge 0$,
\begin{equation*}
	\Pr[Y_g^\circ>t] = \prod_{j=1}^n \Pr\left[\frac{E_j}{K_\sigma(z_g,p_j)}>t
	\right]  = \prod_{j=1}^n e^{-K_\sigma(z_g,p_j)t} = \exp\left(-t\sum_{j=1}^n K_\sigma(z_g,p_j)\right) = e^{-D_g t}.
\end{equation*}
Thus $Y_g^\circ\sim\operatorname{Exp}(D_g)$, and $\E[Y_g^\circ]=1/D_g$. Consequently,
$
\E\left[\sum_{g\in G_{\mathrm{occ}}}n_gY_g^\circ\right] = \sum_{g\in G_{\mathrm{occ}}}\frac{n_g}{D_g} = \DI_\Delta(P).
$

The challenge is that the exact quantity $Y_g^\circ$ cannot be maintained using local streaming updates. In principle, every arriving point has a candidate value for every grid cell, so maintaining all exact minima would be infeasible. We therefore truncate the envelope at $T:=C_T\log(1/\eps)$, where $C_T$ is a sufficiently large absolute constant, and define
$$
Y_g:=\min\{T,Y_g^\circ\},
\quad \text{and} \quad
W:=\sum_{g\in G_{\mathrm{occ}}}n_gY_g.
$$
The truncation makes updates local: once the stored value is capped at $T$, a point $p_j$ can matter for cell $g$ only if the quantity $E_j/K_\sigma(z_g,p_j)$ is below $T$.

The following lemma shows that truncation introduces only a small bias.

\begin{lemma}
	\label{lem:truncation-bias}
	$\E[W]=(1\pm\eps/4)\DI(P)$.
\end{lemma}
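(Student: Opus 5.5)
We compute $\E[Y_g]$ exactly for each occupied cell and compare it with $1/D_g$. By Lemma~\ref{lem:center-density} it then suffices to show $\E[W]=(1\pm\eps/10)\DI_\Delta(P)$, since $(1\pm\eps/10)^2\subseteq(1\pm\eps/4)$ for $\eps<1/2$.

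Fix an occupied cell $g$. We showed $Y_g^\circ\sim\operatorname{Exp}(D_g)$, so
$$
\E[Y_g]=\int_0^T \Pr[Y_g^\circ>t]\,dt=\int_0^T e^{-D_g t}\,dt=\frac{1-e^{-D_gT}}{D_g}.
$$
This immediately gives $\E[Y_g]\le 1/D_g$, and hence the upper bound $\E[W]\le \DI_\Delta(P)$.

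For the lower bound we need $e^{-D_gT}\le \eps/10$ for every occupied cell. Here we use the earlier density bounds rather than a bound in terms of $n$. Because $g$ is occupied, there is some $i\in P_g$. Lemma~\ref{lem:center-density} gives $D_g\ge(1-\eps/20)D_i$, and since $D_i\ge 1$ we get $D_g\ge 1/2$. (Alternatively, Lemma~\ref{lem:cell-density-lower} gives $D_g\ge n_g/2\ge 1/2$ directly.) Therefore
$$
e^{-D_gT}\le e^{-T/2}=e^{-(C_T/2)\ln(1/\eps)\cdot(\ln 2)^{\pm1}}\le \eps/10,
$$
once the absolute constant $C_T$ is large enough. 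The $\ln 2$ factor accounts for $\log$ being base $2$; it only affects constants, using $\eps<1/2$. We conclude $\E[Y_g]\ge(1-\eps/10)/D_g$.

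By linearity of expectation, with the $n_g$ deterministic,
$$
\E[W]=\sum_{g\in G_{\mathrm{occ}}} n_g\,\E[Y_g]\in\left[(1-\tfrac{\eps}{10})\DI_\Delta(P),\ \DI_\Delta(P)\right].
$$
Combining this with $\DI_\Delta(P)=(1\pm\eps/10)\DI(P)$ yields
$$
(1-\eps/10)^2\DI(P)\le \E[W]\le(1+\eps/10)\DI(P),
$$
and both sides lie within $(1\pm\eps/4)\DI(P)$.

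The only delicate point is the lower bound on $D_g$. We need it to be an absolute constant, so that $T=O(\log(1/\eps))$ suffices with no dependence on $n$; occupancy of the cell supplies exactly this.
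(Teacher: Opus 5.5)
Your proof is correct and matches the paper's argument step for step: the exact formula $\E[\min\{T,Y_g^\circ\}]=(1-e^{-D_gT})/D_g$, the occupancy bound $D_g\ge n_g/2\ge 1/2$ (Lemma~\ref{lem:cell-density-lower}) giving $e^{-D_gT}\le e^{-T/2}\le\eps/10$, and then composition with Lemma~\ref{lem:center-density}. The only blemish is cosmetic: since $T=C_T\log_2(1/\eps)$, one has exactly $e^{-T/2}=\eps^{C_T/(2\ln 2)}$, so your hedged factor $(\ln 2)^{\pm1}$ should simply be $(\ln 2)^{-1}$.
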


\begin{proof}
	For $Y\sim\operatorname{Exp}(\lambda)$,
	$$
	\E[\min\{T,Y\}] = \int_0^T \Pr[Y>t]dt = \int_0^T e^{-\lambda t}dt = \frac{1-e^{-\lambda T}}{\lambda}.
	$$
	Thus truncating loses exactly $1/\lambda-\E[\min\{T,Y\}]=e^{-\lambda T}/\lambda$. By Lemma~\ref{lem:cell-density-lower}, every occupied cell has $D_g\ge n_g/2\ge 1/2$. Hence
	$$
	0 \le \frac1{D_g}-\E[Y_g] = \frac{e^{-D_gT}}{D_g} \le \frac{e^{-T/2}}{D_g}.
	$$
	Multiplying by $n_g$ and summing over occupied cells gives
	\begin{equation}
		0 \le \DI_\Delta(P)-\E[W] \le e^{-T/2}\DI_\Delta(P) \le \frac{\eps}{10}\DI_\Delta(P),
		\label{eq:truncation-bias}
	\end{equation}
	where the last inequality holds by choosing $C_T$ large enough so that $e^{-T/2}\le \eps/10$. Therefore, $(1-\eps/10)\DI_\Delta(P)\le \E[W]\le \DI_\Delta(P)$. The lemma follows from Lemma~\ref{lem:center-density}.
\end{proof}

The next lemma bounds the variance of $W$. The main observation is that the square of the cell sum can be controlled by the number of occupied cells. Since every occupied cell is active, this number is at most the number of active cells. Lemma~\ref{lem:active-cell} then converts this geometric bound into a bound in terms of $\DI(P)$, with a loss of the packing factor $\packA$.

\begin{lemma}
	\label{lem:second-moment}
	$\E\left[\sum_{g\in G_{\mathrm{occ}}}(n_gY_g)^2\right]\le 8\DI(P)$, and $\E[W^2]\le 8\packA\cdot\DI(P)^2$.
\end{lemma}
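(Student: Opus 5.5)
First part: since $Y_g\le Y_g^\circ$ and $Y_g^\circ\sim\operatorname{Exp}(D_g)$, we have $\E[Y_g^2]\le \E[(Y_g^\circ)^2]=2/D_g^2$. Hence
$$
\E\Bigl[\sum_{g\in G_{\mathrm{occ}}}(n_gY_g)^2\Bigr]\le \sum_{g\in G_{\mathrm{occ}}}\frac{2n_g^2}{D_g^2}.
$$
We control one factor $n_g/D_g$ by Lemma~\ref{lem:cell-density-lower}, which gives $D_g\ge n_g/2$, so $n_g/D_g\le 2$. This yields the bound $\sum_g 4n_g/D_g = 4\DI_\Delta(P)$. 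Lemma~\ref{lem:center-density} then gives $\DI_\Delta(P)\le(1+\eps/10)\DI(P)\le 2\DI(P)$, so the total is at most $8\DI(P)$.

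Second part: we apply Cauchy--Schwarz to the sum defining $W$, namely $W^2=(\sum_{g\in G_{\mathrm{occ}}} n_gY_g)^2\le |G_{\mathrm{occ}}|\sum_{g}(n_gY_g)^2$. This holds pointwise, and $|G_{\mathrm{occ}}|$ is deterministic, so taking expectations and using the first part gives $\E[W^2]\le 8|G_{\mathrm{occ}}|\,\DI(P)$.

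It remains to bound $|G_{\mathrm{occ}}|$. Every occupied cell $g$ contains some point $p_i$ with $\|z_g-p_i\|_2\le\rho$. Since $\rho=\sqrt d\,\Delta\le R^*$ for $C_\Delta$ large, we have $g\in\calA$. Lemma~\ref{lem:active-cell} then gives $|G_{\mathrm{occ}}|\le|\calA|\le\packA\,\DI(P)$, and hence $\E[W^2]\le 8\packA\,\DI(P)^2$.

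There is no real obstacle here. The only points needing care are that truncation only decreases the second moment, and the constant bookkeeping needed to reach exactly $8$. The crude bound $n_g/D_g\le 2$ suffices for the latter.
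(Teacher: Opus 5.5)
Your proof is correct and follows essentially the paper's argument. The ingredients are the same: the bound $\E[Y_g^2]\le 2/D_g^2$ from the untruncated exponential envelope, one factor $n_g/D_g\le 2$ from Lemma~\ref{lem:cell-density-lower}, and Cauchy--Schwarz combined with $G_{\mathrm{occ}}\subseteq\calA$ and Lemma~\ref{lem:active-cell}. The only cosmetic difference is how you pass from $\sum_g n_g/D_g=\DI_\Delta(P)$ to $\DI(P)$: you use the second claim of Lemma~\ref{lem:center-density}, while the paper uses the pointwise bound $D_i\le 2D_g$, which comes from the same lemma.
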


\begin{proof}
	Since $Y_g\le Y_g^\circ$ and $Y_g^\circ\sim\operatorname{Exp}(D_g)$, we have $\E[Y_g^2]\le \E[(Y_g^\circ)^2]=2/D_g^2$. Therefore,
	$$
	\E\left[\sum_{g\in G_{\mathrm{occ}}}(n_gY_g)^2\right] \le 2\sum_{g\in G_{\mathrm{occ}}}\frac{n_g^2}{D_g^2}.
	$$
	For an occupied cell $g$, Lemma~\ref{lem:cell-density-lower} gives $n_g/D_g\le 2$ and $D_i\le 2D_g$ for every $i\in P_g$. Hence
	$$
	\frac{n_g^2}{D_g^2} = \frac{n_g}{D_g}\cdot\frac{n_g}{D_g} \le 2\cdot\frac{n_g}{D_g} = 2\sum_{i\in P_g}\frac1{D_g} \le 4\sum_{i\in P_g}\frac1{D_i}.
	$$
	Summing over all cells gives $\sum_{g\in G_{\mathrm{occ}}}n_g^2/D_g^2\le 4\DI(P)$, and the first claim follows.
	
	For the second claim, by Cauchy--Schwarz,
	$$
	W^2 = \left(\sum_{g\in G_{\mathrm{occ}}}n_gY_g\right)^2 \le \abs{G_{\mathrm{occ}}}\sum_{g\in G_{\mathrm{occ}}}(n_gY_g)^2.
	$$
	Since every occupied cell has its center within distance $\rho\le R^*$ of an input point, $G_{\mathrm{occ}}\subseteq\calA$. Lemma~\ref{lem:active-cell} gives $\abs{G_{\mathrm{occ}}}\le\abs{\calA}\le\packA\cdot\DI(P)$. Taking expectations and using the first claim gives $\E[W^2]\le 8\packA\cdot\DI(P)^2$.
\end{proof}

\vspace{2mm}
\noindent{\bf Local updates suffice.\ }
We now show that the truncated envelope can be maintained by local updates. Suppose a new point $p_j$ with exponential rank $E_j$ arrives. Its candidate value for cell $g$ is $E_j/K_\sigma(z_g,p_j)$. Because the stored value at $g$ is $Y_g=\min\{T,Y_g^\circ\}\le T$, this point can change the stored value only if $E_j/K_\sigma(z_g,p_j)<T$. Using $K_\sigma(z_g,p_j)=\exp\left(-\frac{\norm{z_g-p_j}_2^2}{2\sigma^2}\right)$, this condition is equivalent to
$
E_j\exp\left(\frac{\norm{z_g-p_j}_2^2}{2\sigma^2}\right)<T,
$
and hence to
\begin{equation}
	\norm{z_g-p_j}_2<\sigma\sqrt{2\ln(T/E_j)}.
	\label{eq:local-update-radius}
\end{equation}
Thus, a point with rank $E_j<T$ only needs to be tested against cells in the ball of radius $\sigma\sqrt{2\ln(T/E_j)}$ around it. If $E_j\ge T$, no cell can be updated.

Note that the radius in \eqref{eq:local-update-radius} is random, since it depends on the random exponential rank $E_j$. In the final algorithm (see Section~\ref{sec:algo}), we will use a median-of-means estimator with a total of $n_{\mathrm{rep}}=O(\packA\eps^{-2}\log(1/\delta))$ repetitions. Accordingly, for each point $p_j$, we generate independent exponential ranks $E_j^{(1)},\ldots,E_j^{(n_{\mathrm{rep}})}$ and use one independent rank for each repetition. The following lemma shows that, with high probability, the fixed active radius $R^*$ suffices for all relevant updates, simultaneously over all stream points and all repetitions used by the algorithm.

\begin{lemma}
	\label{lem:local-update}
	With probability at least $1-\delta/20$, for every input point $p_j$, every repetition $r\in[n_{\mathrm{rep}}]$, and every cell $g$, the value $Y_g^{(r)}$ can change because of $p_j$ only if $\norm{z_g-p_j}_2\le R^*$. Moreover, no cell is updated if $E_j^{(r)}\ge T$.
\end{lemma}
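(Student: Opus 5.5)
By \eqref{eq:local-update-radius}, for repetition $r$ the point $p_j$ can change $Y_g^{(r)}$ only if $E_j^{(r)}<T$ and $\norm{z_g-p_j}_2<\sigma\sqrt{2\ln(T/E_j^{(r)})}$. If $E_j^{(r)}\ge T$, the right-hand side is empty or negative, so no cell changes. This holds deterministically and gives the second assertion. For the first assertion it therefore suffices to show that, with probability at least $1-\delta/20$, every pair $(j,r)$ satisfies
\[
\sigma\sqrt{2\ln(T/E_j^{(r)})}\le R^*=\sigma\sqrt\Gamma,
\qquad\text{equivalently}\qquad
E_j^{(r)}\ge T e^{-\Gamma/2}.
\]
When this lower bound on all ranks holds, any cell $g$ affected by $p_j$ lies at distance below $R^*$, as claimed.

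The proof is a union bound over the $n\cdot n_{\mathrm{rep}}$ ranks. For a single $E\sim\operatorname{Exp}(1)$ we have $\Pr[E<Te^{-\Gamma/2}]=1-\exp(-Te^{-\Gamma/2})\le Te^{-\Gamma/2}$. The total failure probability is therefore at most $n\,n_{\mathrm{rep}}\,T\,e^{-\Gamma/2}$.

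It remains to show this is at most $\delta/20$. We have $T=C_T\log(1/\eps)$ and $n_{\mathrm{rep}}=O(\packA\eps^{-2}\log(1/\delta))$ with $\packA=C_\beta\eps^{-d}\Gamma^{2d}$. Hence
\[
n\,n_{\mathrm{rep}}\,T=O_d\!\left(n\,\eps^{-d-2}\,\Gamma^{2d}\log(1/\delta)\log(1/\eps)\right).
\]
Every factor except $\Gamma^{2d}$ is polynomial in $n/(\eps\delta)$. The factor $\Gamma^{2d}$ is polylogarithmic in that quantity, so it is also bounded by a polynomial in $n/(\eps\delta)$ with degree depending only on $d$. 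On the other side, $e^{-\Gamma/2}=(n/(\eps\delta))^{-C_\Gamma/(2\ln 2)}$ because $\log=\log_2$. Choosing $C_\Gamma=C_\Gamma(d)$ large enough makes this exponent dominate the polynomial degree plus one, with room to absorb the constant $20$. The only subtlety is that $n_{\mathrm{rep}}$ depends on $\Gamma$ through $\packA$. The dependence is polylogarithmic, so a sufficiently large $C_\Gamma$ absorbs it and no circularity arises. This bounds the failure probability by $\delta/20$ and proves the lemma.
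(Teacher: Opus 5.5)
Your proof is correct and takes essentially the paper's approach: a union bound over the $n\cdot n_{\mathrm{rep}}$ exponential ranks against a small lower threshold, followed by choosing $C_\Gamma$ large. The only cosmetic difference is the direction: the paper fixes the threshold $\alpha=\delta/(20n\,n_{\mathrm{rep}})$ and then checks $2\ln(T/\alpha)\le\Gamma$, whereas you fix the threshold $Te^{-\Gamma/2}$ dictated by $R^*$ and check that the failure probability $n\,n_{\mathrm{rep}}\,Te^{-\Gamma/2}$ is at most $\delta/20$ (the $C_\Gamma^{2d}$ inside $\Gamma^{2d}$ is beaten by $(n/(\eps\delta))^{-C_\Gamma/(2\ln 2)}$ because $n/(\eps\delta)>2$, which is the same point as the paper's $O_d(\log\frac{n}{\eps\delta}+\log\Gamma)$ remark).
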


\begin{proof}
	Define $\alpha:=\delta/(20n \cdot n_{\mathrm{rep}})$. For each $r\in[n_{\mathrm{rep}}]$, let $E_i^{(r)}$ be the exponential rank assigned to $p_i$ in repetition $r$. Since $\Pr[E_i^{(r)}<\alpha]\le \alpha$ for an $\operatorname{Exp}(1)$ random variable, a union bound gives
	$$
	\Pr\left[\min_{i\in[n],\ r\in[n_{\mathrm{rep}}]}E_i^{(r)}<\alpha\right] \le n\cdot n_{\mathrm{rep}}\cdot\alpha = \frac{\delta}{20}.
	$$
	If $\min_{i\in[n],\ r\in[n_{\mathrm{rep}}]}E_i^{(r)} \ge  \alpha$, then every relevant update radius is at most $\sigma\sqrt{2\ln(T/\alpha)}$. We choose $C_\Gamma$ in the definition of $\Gamma =C_\Gamma\log(n/(\eps\delta))$ large enough so that $2\ln(T/\alpha)\le \Gamma$. This is possible because
	$
	\log(T/\alpha) = O_d\left(\log\frac{n}{\eps\delta}+\log\Gamma\right).
	$
	Therefore, except with probability at most $\delta/20$, every cell that can ever be affected by a truncated-envelope update is contained in $\calA=\{g\in G:\dist(z_g,P)\le R^*\}$, where $R^*=\sigma\sqrt\Gamma$.
\end{proof}

\subsection{Sampling Active Cells}

The active set $\calA$, whose size is bounded by $\packA\DI(P)$, may still be too large to store explicitly. We thus perform subsampling. Let $U:=\abs{\calA}$. We maintain a standard $F_0$ estimator for $U$, such as the algorithm of~\cite{KNW10}, based on updates to active cells and using randomness independent of all cell-sampling hashes and exponential ranks. Active-cell updates are generated by enumerating all grid cells whose centers are within distance $R^*$ of each arriving point. Using $O(\log(n/\delta))$ words, the $F_0$-estimator maintains a value $\widetilde U$ such that, with probability at least $1-\delta/20$,
\begin{equation}
	\frac{1}{2}U \le \widetilde U\le 2U.
	\label{eq:f0-estimate}
\end{equation}

Let $L_{\max}:=\lceil\log_2 n\rceil+1$ and $\mathcal L:=\{0,1,\ldots,L_{\max}-1\}$. The algorithm runs geometric sampling levels $p_\ell=2^{-\ell}$ for $\ell\in\mathcal L$, and $n_{\mathrm{rep}}$ repetitions for each level. Set
$K_0:=C_K\packA\eps^{-2}\log\left(\frac{n_{\mathrm{rep}}L_{\max}}{\delta}\right)$, where $C_K$ is a sufficiently large constant. For each repetition $r\in[n_{\mathrm{rep}}]$ and level $\ell\in\mathcal L$, we use an independent uniform hash function $h_{r,\ell}:G\to[0,1]$ to sample a cell $g$ when $h_{r,\ell}(g)\le p_\ell$. The dictionary $\mathcal{D}_{r, \ell}$ for each pair $(r,\ell)$ stores sampled active cells, capped at $C_{\mathrm{cap}}K_0$ cells. Each stored cell is represented as a tuple $(g,n_g,Y_g^{(r)})$: the cell index $g$, its count $n_g$, and its current truncated envelope value $Y_g^{(r)}$.

At the end of the stream, we choose a level $\ell^*$ using $\widetilde U$. If $\widetilde U\le K_0$, take $\ell^*=0$, so $p_{\ell^*}=1$. Otherwise, choose the smallest level $\ell^*\in\mathcal L$ satisfying $p_{\ell^*}\le K_0/\widetilde U$; if no such level exists, take $\ell^*=L_{\max}-1$. On the event in~\eqref{eq:f0-estimate}, we have
\begin{equation}
	\min\left\{1,\frac{K_0}{4U}\right\} \le p_{\ell^*} \le \min\left\{1,\frac{4K_0}{U}\right\}.
	\label{eq:sampling-level}
\end{equation}

If the selected level satisfies $p_{\ell^*}=1$, then $U\le 2K_0$ on the event \eqref{eq:f0-estimate}. Otherwise, $p_{\ell^*}<1$, and by \eqref{eq:sampling-level} the expected number of sampled active cells at the selected level is $p_{\ell^*}U\le 4K_0$. By a Chernoff bound, for a sufficiently large constant $C_{\mathrm{cap}}$, the number of sampled active cells exceeds $C_{\mathrm{cap}}K_0$ with probability at most $\delta/(20n_{\mathrm{rep}})$ in any repetition. Taking a union bound over all $n_{\mathrm{rep}}$ repetitions, the sampled set at the selected level $\ell^*$ exceeds the storage cap in some repetition with probability at most $\delta/20$.

For each repetition $r\in[n_{\mathrm{rep}}]$, let $S_{r,\ell^*}$ denote the uncapped set of active cells sampled at the selected level, and define the conceptual estimator
$$
\widehat W_r := \frac1{p_{\ell^*}} \sum_{g\in S_{r,\ell^*}}n_gY_g^{(r)}.
$$
On the event that the selected dictionary does not overflow, the implemented estimator equals this conceptual estimator. Conditional on the selected level and the exponential ranks, $\widehat W_r$ is an unbiased estimator of $W_r:=\sum_{g\in G_{\mathrm{occ}}}n_gY_g^{(r)}$.

\begin{lemma}
	\label{lem:W-variance}
	On the event in~\eqref{eq:f0-estimate}, for each selected level $\ell^*$ and each $r\in[n_{\mathrm{rep}}]$, writing $\mu:=\E[W_r]$, $\E[(\widehat W_r-\mu)^2\mid\ell^*] \le 16\packA \cdot \DI(P)^2$.
\end{lemma}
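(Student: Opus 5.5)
We bound the mean squared error by splitting it into a sampling-noise term and an envelope-noise term. Conditional on $\ell^*$ and the ranks $E^{(r)}$, $\widehat W_r$ is unbiased for $W_r$. The sampling hashes are independent of the ranks, and $\ell^*$ depends only on $\widetilde U$, whose randomness is independent of both. The cross term therefore vanishes, and
$$
\E\left[(\widehat W_r-\mu)^2\mid \ell^*\right]=\E\left[(\widehat W_r-W_r)^2\mid\ell^*\right]+\E\left[(W_r-\mu)^2\right].
$$
The second term is at most $\E[W_r^2]\le 8\packA\cdot\DI(P)^2$ by Lemma~\ref{lem:second-moment}, which was proved for a single repetition's envelope.

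For the sampling term, each active cell is included independently with probability $p:=p_{\ell^*}$, since the hash values are independent uniforms. Fixing the ranks, this gives
$$
\Var\left[\widehat W_r\mid E^{(r)},\ell^*\right]=\frac{1-p}{p}\sum_{g\in G_{\mathrm{occ}}}\bigl(n_gY_g^{(r)}\bigr)^2\le \frac{1-p}{p}\sum_{g\in G_{\mathrm{occ}}}\bigl(n_gY_g^{(r)}\bigr)^2 .
$$
Only occupied cells contribute, because $n_g=0$ otherwise. If $p=1$, this term is zero. If $p<1$, then by \eqref{eq:sampling-level} on the event \eqref{eq:f0-estimate} we have $1/p\le 4U/K_0$. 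Since $U=\abs{\calA}\le\packA\cdot\DI(P)$ by Lemma~\ref{lem:active-cell}, this gives $1/p\le 4\packA\DI(P)/K_0$. Taking expectations and using the first claim of Lemma~\ref{lem:second-moment}, the sampling term is at most
$$
\frac{4\packA\DI(P)}{K_0}\cdot 8\DI(P)=\frac{32\packA}{K_0}\DI(P)^2\le 8\packA\cdot\DI(P)^2,
$$
using $K_0\ge 4$. This holds comfortably because $K_0$ contains the factor $C_K\packA\eps^{-2}$.

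Adding the two terms gives $16\packA\cdot\DI(P)^2$. The main subtlety is justifying the conditioning. The event \eqref{eq:f0-estimate} and $\ell^*$ are functions of the $F_0$-sketch randomness only, which is independent of the cell hashes and ranks. So conditioning on $\ell^*$, or on this event, does not change the distribution of $(h_{r,\ell^*},E^{(r)})$. This is why the unconditional bounds from Lemma~\ref{lem:second-moment} still apply, and why the cross term is zero.
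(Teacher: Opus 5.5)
Your proof is correct and is essentially the paper's argument: conditioning on $\ell^*$ and the ranks $\mathbf E^{(r)}$, you bound the sampling variance by $p_{\ell^*}^{-1}\sum_{g\in G_{\mathrm{occ}}}(n_gY_g^{(r)})^2$, use $p_{\ell^*}^{-1}\le 4\packA\DI(P)/K_0$ together with the first claim of Lemma~\ref{lem:second-moment}, and then add the envelope term $\E[W_r^2]\le 8\packA\cdot\DI(P)^2$ via the law of total variance. Your use of the exact factor $(1-p)/p$, and your explicit remark that $\ell^*$ depends only on the $F_0$-sketch randomness, are small refinements of the same argument rather than a different route.
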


\begin{proof}
	Fix a repetition $r$ and the selected level $\ell^*$, and write $\mathbf E^{(r)}:=(E_j^{(r)})_{j\in[n]}$ for the vector of exponential ranks. Under this conditioning, the values $Y_g^{(r)}$ and $W_r=\sum_{g\in G_{\mathrm{occ}}}n_gY_g^{(r)}$ are fixed. The only remaining randomness in $\widehat W_r$ is the sampling of cells.
	Let $I_g$ be the indicator that cell $g$ is sampled at level $\ell^*$, and write $\theta := p_{\ell^*}$. Then
	$$
	\widehat W_r=\frac1\theta\sum_{g\in G_{\mathrm{occ}}}I_gn_gY_g^{(r)}.
	$$
	The indicators $I_g$ are independent Bernoulli random variables with mean $\theta$. Hence
	$
	\E[\widehat W_r\mid\ell^*,\mathbf E^{(r)}]=W_r,
	$
	and
	\begin{equation*}
		\Var\left(\widehat W_r\mid \ell^*,\mathbf E^{(r)}\right) \le \frac1\theta\sum_{g\in G_{\mathrm{occ}}}\left(n_gY_g^{(r)}\right)^2.
	\end{equation*}
	If $\theta=1$, this conditional variance is zero. If $\theta<1$, then~\eqref{eq:sampling-level} gives $1/\theta\le 4U/K_0$. Since $U=\abs{\calA}\le\packA\cdot\DI(P)$ by Lemma~\ref{lem:active-cell}, we have
	\begin{eqnarray}
		\E\left[\Var\left(\widehat W_r\mid \ell^*,\mathbf E^{(r)}\right)\mid\ell^*\right]
		&\le&
		\frac{4\packA\cdot\DI(P)}{K_0} \E\left[\sum_{g\in G_{\mathrm{occ}}}\left(n_gY_g^{(r)}\right)^2\right] \nonumber\\
		&\le& \frac{4\packA\cdot\DI(P)}{K_0}\cdot 8\DI(P)
		\le 8\packA\cdot\DI(P)^2,
		\label{eq:Var-W-1}
	\end{eqnarray}
	where the second inequality uses Lemma~\ref{lem:second-moment}, and the last uses $K_0\ge4$.
	
	It remains to account for the randomness of the exponential lower envelope. Because the selected level is independent of the exponential ranks, the conditional law of total variance gives
	\begin{equation*}
		\E\left[\left(\widehat W_r-\E[W_r]\right)^2\mid\ell^*\right] =
		\E\left[\Var\left(\widehat W_r\mid \ell^*,\mathbf E^{(r)}\right)\mid\ell^*\right] +
		\E\left[\left(W_r-\E[W_r]\right)^2\right].
	\end{equation*}
	By Lemma~\ref{lem:second-moment},
	\begin{equation}
		\label{eq:Var-W-2}
		\E\left[\left(W_r-\E[W_r]\right)^2\right] \le \E[W_r^2] \le 8\packA\cdot\DI(P)^2.
	\end{equation}
	Combining \eqref{eq:Var-W-1} and \eqref{eq:Var-W-2} gives the lemma.
\end{proof}

\subsection{Pseudocode of the Streaming Algorithm}
\label{sec:algo}

We now give the complete one-pass implementation of the algorithm. Because the description is lengthy, we split the algorithm into three subroutines. The initialization subroutine (Algorithm~\ref{alg:DI-init}) fixes the grid, repetitions, sampling levels, and capped dictionaries. The update subroutine (Algorithm~\ref{alg:DI-update}) is called once for each arriving stream point. The query subroutine (Algorithm~\ref{alg:DI-query}) selects one sampling level and applies median-of-means to the repetition estimates.

\begin{algorithm}[!t]
	\caption{Initialize the grid sketch}
	\label{alg:DI-init}
	\DontPrintSemicolon
	\KwIn{Stream of $n$ points in $\mathbb{R}^d$ for a fixed dimension $d$, bandwidth $\sigma$, accuracy parameter $\eps$, failure probability $\delta$.}
	\KwOut{Initialized sketch state.}
	
	Set $\Gamma\gets C_\Gamma\log(n/(\eps\delta))$, $R^*\gets\sigma\sqrt\Gamma$, and $\Delta\gets\eps\sigma/(C_\Delta\sqrt\Gamma)$\tcp*[r]{active radius and grid side length}
	
	Set $T\gets C_T\log(1/\eps)$ \tcp*[r]{truncation threshold} 
	
	Set $\packA\gets C_\beta\eps^{-d}\Gamma^{2d}$\tcp*[r]{packing factor}
	
	Set $b_m\gets C_b\packA\eps^{-2}$, $\kappa_m \gets C_\kappa\log(1/\delta)$, and $n_{\mathrm{rep}}\gets b_m\kappa_m$\tcp*[r]{median-of-means block size, number of blocks, and total repetitions per sampling level}
	
	Set $L_{\max}\gets\lceil\log_2 n\rceil+1$, $\mathcal L\gets\{0,1,\ldots,L_{\max}-1\}$, and $p_\ell\gets 2^{-\ell}$ for every $\ell\in\mathcal L$\;
	
	Set $K_0\gets C_K\packA\eps^{-2}\log\left(n_{\mathrm{rep}}L_{\max}/\delta\right)$\tcp*[r]{sampled-cell budget}
	
	Initialize an $F_0$ estimator $\mathsf F_0$ for the active-cell set $\calA$\;
	\For{$r=1,\ldots,n_{\mathrm{rep}}$}{
		\For{$\ell\in\mathcal L$}{
			Draw a random hash function $h_{r,\ell}:G\to[0,1]$\tcp*[r]{cell sampling at level $\ell$}
			Initialize an empty dictionary $\mathcal D_{r,\ell}$ with cap $C_{\mathrm{cap}}K_0$ and overflow flag set to false\;
		}
	}
\end{algorithm}

For a point $x$ and radius $R$, write $\mathsf{Near}(x,R):=\{g\in G:\norm{z_g-x}_2\le R\}$. We also write $g(x)$ for the grid cell containing $x$. For each repetition $r$ and level $\ell$, the capped dictionary $\mathcal D_{r,\ell}$ stores records of the form $(g,n_g,Y_g^{(r)})$, where $g$ is the cell index, $n_g$ is the current number of stream points in $g$, and $Y_g^{(r)}$ is the current truncated lower envelope value for repetition $r$. If an insertion would exceed the cap, the dictionary sets an overflow flag; this event is charged to the failure probability in the analysis.

\begin{algorithm}[!t]
	\caption{Process one stream point}
	\label{alg:DI-update}
	\DontPrintSemicolon
	\KwIn{The $j$-th stream point $x$.}
	
	Let $g_x\gets g(x)$\;
	Draw independent ranks $E_j^{(r)}\sim\operatorname{Exp}(1)$ for all $r\in[n_{\mathrm{rep}}]$\tcp*[r]{one rank per repetition}
	\BlankLine
	\tcp{Create sampled records for active cells caused by $x$}
	
	\ForEach{$g\in\mathsf{Near}(x,R^*)$}{
		Feed $g$ to $\mathsf F_0$\tcp*[r]{support update for $\calA$}
		\For{$r=1,\ldots,n_{\mathrm{rep}}$}{
			\For{$\ell\in\mathcal L$}{
				\If{$h_{r,\ell}(g)\le p_\ell$ and $g\notin\mathcal D_{r,\ell}$}{
					\eIf{$\mathcal D_{r,\ell}$ has fewer than $C_{\mathrm{cap}}K_0$ records}{
						Insert $(g,0,T)$ into $\mathcal D_{r,\ell}$\tcp*[r]{count starts at $0$; envelope is truncated at $T$}
					}{
						Set the overflow flag of $\mathcal D_{r,\ell}$ to true\;
					}
				}
			}
		}
	}
	\BlankLine
	\tcp{Update the count of the point's own cell in every sampled dictionary that stores it}
	\For{$r=1,\ldots,n_{\mathrm{rep}}$}{
		\For{$\ell\in\mathcal L$}{
			\If{$g_x$ is stored in $\mathcal D_{r,\ell}$}{
				Increase the stored count $n_{g_x}$ by $1$\;
			}
		}
	}
	\BlankLine
	\tcp{Apply the local lower envelope updates}
	\For{$r=1,\ldots,n_{\mathrm{rep}}$}{
		\If{$E_j^{(r)}<T$}{
			Set $R_j^{(r)}\gets\sigma\sqrt{2\ln \left(T/E_j^{(r)}\right)}$\tcp*[r]{only cells within this radius can improve}
			\ForEach{$g\in\mathsf{Near}(x,R_j^{(r)})$}{
				\ForEach{$\ell\in\mathcal L$ such that $g$ is stored in $\mathcal D_{r,\ell}$}{
					Update $Y_g^{(r)}\gets\min\{Y_g^{(r)},E_j^{(r)}/K_\sigma(z_g,x)\}$ in $\mathcal D_{r,\ell}$\;
				}
			}
		}
	}
\end{algorithm}

\begin{algorithm}[!t]
	\caption{Query the grid sketch}
	\label{alg:DI-query}
	\DontPrintSemicolon
	\KwOut{A $(1+\eps)$-approximation to $\DI(P)$.}
	
	Query $\mathsf F_0$ to obtain $\widetilde U$\;
	\eIf{$\widetilde U\le K_0$}{
		Set $\ell^*\gets 0$\tcp*[r]{store all active cells at level $0$}
	}{
		\eIf{$\{\ell\in\mathcal L:p_\ell\le K_0/\widetilde U\}$ is nonempty}{
			Set $\ell^*\gets\min\{\ell\in\mathcal L:p_\ell\le K_0/\widetilde U\}$\tcp*[r]{$\Theta(K_0)$ expected sampled cells}
		}{
			Set $\ell^*\gets L_{\max}-1$\tcp*[r]{fallback used only on a failure event}
		}
	}
	\If{some selected dictionary $\mathcal D_{r,\ell^*}$ has its overflow flag set}{
		\KwRet{$0$}\tcp*[r]{this is a low-probability failure event}
	}
	\For{$r=1,\ldots,n_{\mathrm{rep}}$}{
		Set $\widehat W_r\gets p_{\ell^*}^{-1}\sum_{g\in\mathcal D_{r,\ell^*}}n_gY_g^{(r)}$\;
	}
	\For{$s=1,\ldots,\kappa_m$}{
		Set $B_s\gets\{(s-1)b_m+1,\ldots,sb_m\}$\;
		Set $\overline W_s\gets b_m^{-1}\sum_{r\in B_s}\widehat W_r$\;
	}
	\KwRet{$\widehat{\DI} \gets \operatorname{median}\{\overline W_1,\ldots,\overline W_{\kappa_m}\}$}.
\end{algorithm}

In Algorithm~\ref{alg:DI-update}, records for cells within distance $R^*$ of the arriving point are created before the count of the point's own cell is incremented; therefore, a sampled active cell does not miss the first point that makes it occupied. The lower envelope update is performed after this insertion step. If $E_j^{(r)}\ge T$, the point cannot improve any truncated envelope value; otherwise, Lemma~\ref{lem:local-update} shows that, with a good probability, the radius $R_j^{(r)}$ includes every cell that the point can impact.

\subsection{Proof of Theorem~\ref{thm:DI}}
\label{subsec:DI-proof}

\begin{proof}
	By Lemma~\ref{lem:truncation-bias}, each repetition satisfies $\mu:=\E[W_r]=(1\pm\eps/4)\DI(P)$. Let $\mathcal E_0$ be the event in~\eqref{eq:f0-estimate}. Conditional on $\mathcal E_0$ and on the selected level $\ell^*$, Lemma~\ref{lem:W-variance} gives
	\begin{equation}
		\label{eq:Var-Wr}
		\E\left[\left(\widehat W_r-\mu\right)^2\mid\ell^*\right] \le 16\packA\cdot\DI(P)^2.
	\end{equation}
	For $s=1,\ldots,\kappa_m$, define
	$
	\overline W_s:=\frac1{b_m}\sum_{r=(s-1)b_m+1}^{s \cdot b_m}\widehat W_r.
	$
	Conditional on $\ell^*$, the block averages are independent and, by~\eqref{eq:Var-Wr}, satisfy
	$$
	\Var(\overline W_s\mid\ell^*) \le \frac{16\packA}{b_m}\DI(P)^2 = \frac{16\eps^2}{C_b}\DI(P)^2.
	$$
	Choose $C_b$ large enough so that this variance is at most $\eps^2\DI(P)^2/160$. Chebyshev's inequality then gives
	$$
	\Pr\left[\left.\abs{\overline W_s-\mu}>\frac{\eps}{4}\DI(P)\,\right|\ell^*\right] \le \frac{1}{10}
	$$
	for every block $s$.
	
	Let $Z_s:=\mathbf 1\left\{\abs{\overline W_s-\mu}>\frac{\eps}{4}\DI(P)\right\}$ for $s=1,\ldots,\kappa_m$.
	Let $Z:=\sum_{s\in[\kappa_m]}Z_s$. Conditional on $\ell^*$, the variables $Z_s$ are independent and satisfy $\Pr[Z_s=1\mid\ell^*]\le 1/10$. A Chernoff bound therefore gives
	$
	\Pr[Z\ge\kappa_m/2\mid\ell^*]\le e^{-c\kappa_m}
	$
	for an absolute constant $c>0$. Choosing $C_\kappa$ sufficiently large ensures that, conditional on $\mathcal E_0$, the median of the conceptual block averages differs from $\mu$ by at most $\frac{\eps}{4}\DI(P)$ with probability at least $1-\delta/2$.
	
	On the event that no selected dictionary overflows, the implemented block averages equal the conceptual ones. Combining the median guarantee with the failure probabilities in Lemma~\ref{lem:local-update}, \eqref{eq:f0-estimate}, and the storage-cap analysis gives an overall failure probability of at most $\delta$.
	
	It remains to analyze space. For each repetition and each sampling level, the capped dictionary stores at most $O(K_0)$ cell records. There are $L_{\max}$ levels and
	$
	n_{\mathrm{rep}} = O\left(\packA\eps^{-2}\log\frac1\delta\right)
	$
	repetitions. Therefore, the number of stored cell records is
	\begin{equation*}
		O(n_{\mathrm{rep}}K_0L_{\max}) = O\left( \packA^2\eps^{-4}L_{\max}\log\frac1\delta \cdot \log\left(\frac{\packA\eps^{-2}\log(1/\delta)L_{\max}}{\delta}\right)\right).
	\end{equation*}
	Substituting $\packA=C_\beta\eps^{-d}\Gamma^{2d}$ and $\Gamma=C_\Gamma\log(n/(\eps\delta))$ gives $\eps^{-2d-4}\log^{O(d)}\left(\frac{n}{\eps\delta}\right)$ words for fixed $d$, plus lower-order space for the $F_0$ estimator. This proves the theorem.
\end{proof}
\subsection{A Dimension-Dependent Space Lower Bound}
\label{subsec:DI-lower-bound}

We next show an $\Omega_d \left(\log^{d/2} n\right)$ lower bound for constant approximation in every fixed dimension $d$.  Throughout this subsection, we fix the Gaussian bandwidth to $\sigma=1$ and write
$
K(x,y):=\exp\left(-\frac{\norm{x-y}_2^2}{2}\right).
$
The result for any other fixed bandwidth follows by scaling all coordinates.

We use the following fixed-weight variant of the standard one-way
\Index\ problem.  The well-known randomized one-way communication
complexity of \Index is $\Omega(m)$~\cite{KN97}.  The same lower bound holds when Alice's input is promised to have Hamming weight exactly $m/2$; we include a proof in Appendix~\ref{sec:proof-lem-index} for completeness.

\begin{lemma}
	\label{lem:fixed-weight-index}
	For an even $m$, Alice receives a vector $x\in\{0,1\}^m$ satisfying $\norm{x}_1=m/2$, and Bob receives an index $i\in[m]$.  Bob must output $x_i$ after receiving one message from Alice.  Every randomized one-way protocol with error probability at most $1/3$ communicates $\Omega(m)$ bits.
\end{lemma}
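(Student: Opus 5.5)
The plan is to reduce standard \Index\ on strings of length $m' = m/2$ (whose randomized one-way complexity is $\Omega(m')$ by~\cite{KN97}) to the fixed-weight version. Let $\Pi$ be a fixed-weight protocol on length $m$ with error at most $1/3$ and cost $c$.

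The encoding: given $y\in\{0,1\}^{m'}$, Alice forms $x\in\{0,1\}^{m}$ by concatenating $y$ with its bitwise complement $\bar y$. Then $\norm{x}_1 = \norm{y}_1 + (m'-\norm{y}_1) = m'= m/2$, so $x$ satisfies the promise. Alice sends $\Pi$'s message on $x$. Bob, holding $i\in[m']$, runs $\Pi$'s decoder on index $i$. Since $x_i=y_i$, he learns $y_i$ with error at most $1/3$.

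This yields a one-way protocol for unrestricted \Index\ on $m'$ bits with the same cost $c$ and the same error, so $c=\Omega(m')=\Omega(m)$. Public versus private randomness is unaffected, since the reduction is deterministic and uses no extra randomness.

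I do not expect a real obstacle here. The only points to check are that the weight is exactly $m/2$, which holds because each pair $(y_k,\bar y_k)$ contributes exactly one, and that evenness of $m$ makes $m'$ an integer. If one prefers a self-contained proof, one could instead reprove the $\Omega(m)$ bound directly via the standard information-theoretic argument with $x$ uniform over weight-$m/2$ strings. That route needs the entropy $\log\binom{m}{m/2}=m-O(\log m)$ together with a Fano-type bound per coordinate. The reduction above avoids all of this, so I would present the reduction.
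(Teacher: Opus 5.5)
Your proof is correct, and it takes a different route from the paper's. The encoding $x=y\bar y$ has weight exactly $m/2$. The fixed-weight protocol must succeed with probability at least $2/3$ on every promise input and every $i\in[m]$, including $(y\bar y,i)$ with $i\le m/2$. So it yields an \Index\ protocol on $m/2$ bits with the same cost and error, and the $\Omega(m/2)$ bound from~\cite{KN97} transfers.

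The paper does not reduce. It proves the bound directly, by essentially the alternative you mention and set aside:
\begin{itemize}
\item It takes $X$ uniform over weight-$m/2$ strings and $I$ uniform and independent.
\item It applies binary Fano to get $H(X_I\mid \mathcal M,\mathcal R,I)\le h_2(1/3)$.
\item It uses independence of $I$ and subadditivity to get $H(X\mid \mathcal M,\mathcal R)\le m\,h_2(1/3)$.
\item It compares this with $H(X)=\log\binom{m}{m/2}=m-O(\log m)$, so $I(X;\mathcal M\mid\mathcal R)=\Omega(m)$.
\end{itemize}

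What each buys: your reduction is shorter and fully black-box. Any known bound for unrestricted \Index\ carries over, at the cost of halving the length. The paper's argument is self-contained and proves a slightly stronger distributional statement. It shows that the uniform distribution on the promise set itself, with a uniform index, is already hard, and it gets the constant $1-h_2(1/3)$ on $m$ rather than on $m/2$. For the lemma as stated, either argument suffices.
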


In this section, we prove the following theorem.

\begin{theorem}
	\label{thm:DI-lb}
	For every fixed dimension $d\ge 1$, any one-pass streaming algorithm that, on every insertion-only stream $P$ of at most $n$ points in $\mathbb R^d$, outputs a $1.01$-approximation to $\DI(P)$ with probability at least $2/3$ must use at least
	$
	\Omega_d \left(\log^{d/2} n\right)
	$
	bits of space. 
\end{theorem}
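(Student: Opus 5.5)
We reduce from the fixed-weight \Index\ problem of Lemma~\ref{lem:fixed-weight-index} with $m=\Theta_d(\log^{d/2} n)$. Fix a separation radius $L:=c\sqrt{\log n}$ for a small constant $c=c(d)$, and let $Q=\{q_1,\ldots,q_m\}$ be the points of the cubic lattice $\Lambda:=L\cdot\mathbb{Z}^d$ inside a box of side $\Theta(\sqrt{\log n})$. This box has side a constant multiple of $L$ chosen to make $m$ large, and more carefully we use lattice spacing $s$ a large constant (not $L$). The box then has side $\Theta(\sqrt{\log n})$ and contains $m=\Theta_d(\log^{d/2}n)$ sites. Any two sites are at distance at least $s$, so $K(q_a,q_b)\le e^{-s^2/2}$, and the total cross-kernel mass at a site, $\sum_{b\ne a}K(q_a,q_b)$, is at most $\eta(s)$, which can be made as small as we like by taking $s$ large.

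\textbf{Protocol.} Alice inserts one copy of $q_a$ for every $a$ with $x_a=1$, giving exactly $m/2$ points. She runs the streaming algorithm and sends its memory state to Bob. Bob, holding index $i$, appends $N:=n-m/2$ (at least $n/2$) copies of $q_i$ and then queries the algorithm.

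\textbf{Case $x_i=0$.} Alice's $m/2$ points have densities $1\pm\eta$, because the heavy cluster at $q_i$ is at distance at least $s$ from each of them. That cluster contributes $N\cdot K(q_a,q_i)$ to the density of Alice's point $q_a$. We must handle this: near points $q_a$ with $\|q_a-q_i\|$ small get density about $N e^{-\|q_a-q_i\|^2/2}$, which is huge. This is exactly why the box has side $\sqrt{\log n}$. The contribution of the heavy cluster falls below a constant only when $\|q_a-q_i\|\ge\sqrt{2\ln N}$, so only points at distance $\gtrsim\sqrt{2\ln n}$ stay ``free''.

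The construction therefore has to be refined. Bob instead places the $N$ heavy copies at a point $y_i$ far away along an extra direction, or, better, he uses the heavy cluster only to \emph{count}. Bob can also insert points himself to kill all sites except $q_i$: he inserts $N$ copies at every site $q_b$ with $b\neq i$. This costs $(m-1)N\le n$ points, so we take $N=n/m$. All sites other than $q_i$ then have density at least $N$, and since there are at most $m$ of them with multiplicity up to $N+1$, they contribute about $(m-1)\cdot(N+1)/N\approx m-1$ to \DI, a known quantity up to an $\eta$-fraction. The site $q_i$ contributes about $1$ if $x_i=1$ and $0$ otherwise. To make this gap visible against the baseline of roughly $m$, we scale: let each of Alice's points and Bob's padding be replaced by clusters so that the contributions are comparable. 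The honest issue is that a $1.01$-approximation cannot detect an additive $1$ on top of $m$.

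\textbf{Main obstacle and fix.} The key step is therefore to amplify the signal to a constant fraction of \DI. I would use the standard trick of making Bob's heavy masses erase only the sites he is \emph{not} asking about, weighted so that they contribute $o(1)$ in total. Bob places $N$ copies at each $q_b$, $b\ne i$. Each such site then contributes $(N+x_b)/(N+x_b+\text{small})\approx 1$ as a group, so the total erased mass is still about $m-1$ and unavoidable. The alternative is to erase via a single far heavy cluster whose Gaussian influence covers all sites except a neighborhood of $q_i$. With $L_0=\sqrt{2\ln N}$, a heavy point at distance $r$ lifts densities to $N e^{-r^2/2}$. Choosing Bob's heavy points on a shell around $q_i$ reduces every site except $q_i$'s contribution to $o(1/m)$ once $Ne^{-r^2/2}\ge m^2$, which is possible for all sites within the $\Theta(\sqrt{\log n})$ box while keeping $q_i$ at density $1+o(1)$. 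The heavy clusters themselves contribute $O(\#\text{clusters})$ to \DI. So I would place just $O_d(1)$ heavy clusters at distance about $\sqrt{2\ln N-4\ln m}$ from all non-$i$ sites but farther from $q_i$. Achieving this geometry in $\mathbb{R}^d$ is the crux I expect to be hardest. If it works, \DI\ is $a+x_i\pm o(1)$ with $a$ a known constant $O_d(1)$, so a $1.01$-approximation recovers $x_i$ once $a$ is small relative to $100$. The $\Omega(m)$ bound of Lemma~\ref{lem:fixed-weight-index} then gives $\Omega_d(\log^{d/2}n)$ bits.
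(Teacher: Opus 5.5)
\textbf{There is a genuine gap: the additive readout.} Your final construction does not prove the theorem as stated, and the obstruction is not the geometric step you flag.

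That step is routine. Set $\rho^2:=2\ln N-4\ln m$ and $\rho'^2:=2\ln N+2\ln(2d\ln n)$, place $N$ copies at each of the $2d$ points $q_i\pm\rho' e_k$, and keep all sites in a box of side at most $\rho'/d$. A site whose largest coordinate gap from $q_i$ is $t\in[s,\rho'/d]$ is then, for large $n$, within distance $\sqrt{\rho'^2-\rho' t}\le\rho$ of the cluster on that side. So it receives heavy density at least $m^2$, while $q_i$ receives only $1/\ln n$.

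The real problem is the readout. Your signal is a single point, so \DI\ is about $a$ versus $a+1$, where $a$ is the number of separated heavy clusters, each contributing about $1$. And $a$ cannot be made small. A kill ball $B(y,\rho)$ that misses $q_i$ lies in the open half-space $\{x:\langle x-q_i,y-q_i\rangle>0\}$. To capture every site surrounding an interior $q_i$, the cluster directions must therefore positively span $\mathbb{R}^d$, which takes at least $d+1$ clusters; your explicit choice uses $2d$. A $1.01$-approximation separates $a$ from $a+1$ only if $(a+1)/a>1.01^2$, i.e., $a\le 49$. So the argument gives the bound only for $d$ below a few dozen, or for an approximation factor $1+\Theta(1/d)$. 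The theorem claims it for every fixed $d$ with the fixed factor $1.01$.

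\textbf{The missing idea is to amplify the signal rather than shrink the baseline.} One queried bit must move the reciprocal densities of $\Theta(q)$ points, where $q=\Theta_d(\log^{d/2}n)$ is also the \Index\ size. The paper does this by reversing the roles.

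Alice encodes each $1$-bit as $N_c=\lfloor\sqrt n\rfloor$ identical points at $c_j=jLe_1$, with $L=r+10\sqrt{\ln n}$ and $r=\sqrt{2\ln N_c}$. The clusters are collinear and far apart, so no covering geometry is needed. Bob appends the probe cloud $c_i+\mathcal Q$: the $q=(2t+1)^d$ points of $\lambda\mathbb Z^d$, with $\lambda=\sqrt{2\ln(8d)}$, in a cube inside radius $r$ of $c_i$.

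Each probe has self-density in $[1,2)$. The cluster at $c_i$, if present, adds at least $N_ce^{-r^2/2}=1$, so every reciprocal drops by at least $1/6$. With fixed-weight \Index\ on $m=2q$ coordinates, \DI\ is at most $2q$ in both cases while the gap is at least $q/6$. This separation is independent of $d$. Here the $\log^{d/2}n$ counts probe points within Gaussian reach of a single heavy center, not \Index\ sites packed into a box.

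Your own framework can be repaired the same way. Replace each of Alice's points by a cube of $3^d$ points of $\lambda\mathbb Z^d$, space the cubes a few cube-widths apart, and push the $2d$ kill clusters outward by the cube radius. The signal is then at least $3^d/2$ against a baseline of about $2d$.
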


\vspace{2mm}
\noindent{\bf Proof intuition.\ }
The reduction encodes a fixed-weight \Index\ instance using well-separated clusters of points, one cluster for each $1$-bit. To query a coordinate, Bob appends a sparse $d$-dimensional lattice cloud around the corresponding cluster center. The cloud contains $q=\Theta_d\left(\log^{d/2} n\right)$ points, while its Gaussian self-density remains bounded by a constant. If the queried bit is $0$, each cloud point contributes a constant amount to the diversity index; if it is $1$, the corresponding cluster adds a constant amount to every cloud point's density, decreasing each reciprocal-density contribution by a constant. This creates an $\Omega(q)$ gap in a statistic of size $O(q)$, while the large separation between cluster centers makes all unintended interactions negligible. Hence, a constant-accuracy estimate reveals the queried bit, and the one-way \Index\ lower bound implies an $\Omega_d\left(\log^{d/2} n\right)$-bit space lower bound.

\begin{proof}
	We reduce the fixed-weight \Index{} problem to constant-accuracy estimation of the Gaussian diversity index.  Set
	$
	\lambda:=\sqrt{2\ln(8d)}
	$
	and consider the lattice $\lambda\mathbb Z^d$.  Its total Gaussian similarity sum with respect to the origin is bounded by an absolute constant:
	\begin{eqnarray*}
		B_d
		&:=& \sum_{z \in \lambda\mathbb Z^d} K(0, z) = \sum_{z \in \lambda\mathbb Z^d}
		\exp\left(-\frac{\norm{z}_2^2}{2}\right) = \left(\sum_{k\in\mathbb Z}
		\exp\left(-\frac{\lambda^2k^2}{2}\right) \right)^d\\
		&=& \left(1+2\sum_{k=1}^{\infty}(8d)^{-k^2}\right)^d\\
		&\le& \left(1+\frac{2}{8d-1}\right)^d \le \exp\left(\frac{2d}{8d-1}\right) <2.
	\end{eqnarray*}
	
	Let $N_c:=\lfloor\sqrt n\rfloor$, $r:=\sqrt{2\ln N_c}$, and $t:=\left\lfloor\frac{r}{\lambda\sqrt d}\right\rfloor$, and define the ``probe cloud'' 
	$$
	\mathcal Q:=\left\{\lambda z: z\in\mathbb Z^d, \ \norm{z}_\infty\le t\right\}.
	$$
	Write $q:=\abs{\mathcal Q}=(2t+1)^d$.  For fixed $d$,
	\begin{equation}
		\label{eq:DI-lb-cloud-size}
		q=\Theta_d(r^d)=\Theta_d\left(\log^{d/2} n\right).
	\end{equation}
	Moreover, every $z\in\mathcal Q$ satisfies
	\begin{equation}
		\label{eq:DI-lb-cloud-radius}
		\norm{z}_2\le\lambda\sqrt d t\le r.
	\end{equation}
	
	For $z\in\mathcal Q$, let
	$
	a_z:=\sum_{y\in\mathcal Q}K(z,y)
	$
	be the density contributed by the cloud itself.  Since $\mathcal Q\subseteq\lambda\mathbb Z^d$,
	\begin{equation}
		\label{eq:DI-lb-cloud-density}
		1\le a_z\le B_d<2.
	\end{equation}
	Now suppose that $N_c$ points are placed at the origin.  Their contribution to the density of $z$ is
	$$
	b_z := N_c \cdot K(0,z) = N_c\exp\left(-\frac{\norm{z}_2^2}{2}\right).
	$$
	By~\eqref{eq:DI-lb-cloud-radius},
	\begin{equation}
		\label{eq:DI-lb-center-mass}
		b_z\ge N_c e^{-r^2/2}=1.
	\end{equation}
	It follows from~\eqref{eq:DI-lb-cloud-density} and~\eqref{eq:DI-lb-center-mass} that the $N_c$ points at the origin decrease the reciprocal-density contribution of every cloud point by a constant amount:
	\begin{equation}
		\label{eq:DI-lb-per-point-gap}
		\frac{1}{a_z}-\frac{1}{a_z+b_z} = \frac{b_z}{a_z(a_z+b_z)} \ge \frac{1}{6}.
	\end{equation}
	
	\vspace{2mm}
	\noindent{\bf The reduction.\ }
	Consider fixed-weight \Index\ with $m:=2q$, so Alice's vector has exactly $q$ ones.  Let $e_1$ be the first standard basis vector of $\mathbb R^d$, set $L:=r+10\sqrt{\ln n}$, and define centers $c_j:=jLe_1$ for every $j\in[m]$.
	
	Given $x\in\{0,1\}^m$ with $\norm{x}_1=q$, Alice inserts $N_c$ copies of $c_j$ for every $j$ satisfying $x_j=1$.  She then sends the streaming algorithm's memory state to Bob.  Given $i\in[m]$, Bob appends the translated cloud
	$$
	c_i+\mathcal Q = \{c_i+z:z\in\mathcal Q\}.
	$$
	The resulting stream contains $q N_c+q=q(N_c+1)$ points.  By~\eqref{eq:DI-lb-cloud-size} and the choice $N_c=\lfloor\sqrt n\rfloor$, this is at most $n$ for a fixed $d$ and sufficiently large $n$.
	
	\vspace{2mm}
	\noindent{\bf Analysis with the idealized instances.\ }
	We first ignore all interactions between distinct Alice clusters, and all interactions between the probe cloud and Alice clusters other than the queried cluster at $c_i$.  Let $\overline V_b$ denote the resulting idealized diversity index when $x_i=b$.
	
	If $x_i=0$, all $q$ Alice clusters are isolated.  Each cluster consists of $N_c$ identical points and therefore contributes exactly one to the diversity index.  Hence
	\begin{equation}
		\label{eq:DI-lb-V0}
		\overline V_0 = q + \sum_{z\in\mathcal Q}\frac{1}{a_z}.
	\end{equation}
	
	If $x_i=1$, there are $q-1$ unaffected Alice clusters.  Let
	$
	S_{\mathcal Q}:=\sum_{z\in\mathcal Q}K(0,z).
	$
	The $N_c$ points at the queried center have density $N_c+S_{\mathcal Q}$ and together contribute $N_c/(N_c+S_{\mathcal Q})$.  The cloud point indexed by $z$ has density $a_z+b_z$.  Therefore,
	\begin{equation}
		\label{eq:DI-lb-V1}
		\overline V_1 = (q-1) + \frac{N_c}{N_c+S_{\mathcal Q}} + \sum_{z\in\mathcal Q}\frac{1}{a_z+b_z}.
	\end{equation}
	Subtracting~\eqref{eq:DI-lb-V1} from~\eqref{eq:DI-lb-V0} and using~\eqref{eq:DI-lb-per-point-gap}, we obtain
	\begin{equation}
		\label{eq:DI-lb-additive-gap}
		\overline V_0-\overline V_1 = 1-\frac{N_c}{N_c+S_{\mathcal Q}} + \sum_{z\in\mathcal Q}
		\left(\frac{1}{a_z}-\frac{1}{a_z+b_z}\right) \ge \frac{q}{6}.
	\end{equation}
	Also, since $a_z\ge1$, we have
	\begin{equation}
		\label{eq:DI-lb-size-bound}
		\overline V_1\le \overline V_0\le 2q.
	\end{equation}
	
	\vspace{2mm}
	\noindent{\bf  The actual instance.\ }
	Two distinct Alice centers are separated by at least $L$.  Every cloud point lies within distance $r$ of $c_i$, so its distance from any Alice center $c_j$ with $j\ne i$ is at least
	$
	L-r=10\sqrt{\ln n}.
	$
	Thus every interaction omitted from the idealized instance has Gaussian weight at most
	$
	\exp\left(-\frac{(L-r)^2}{2}\right)=n^{-50}.
	$
	There are at most $n$ points in the stream, so the total directed Gaussian mass of all omitted interactions is at most $n^{-50} \cdot n^2 = n^{-48}$.
	
	For a point $v$, let $\overline D_v$ be its density in the idealized instance and write its actual density as $D_v=\overline D_v+E_v$, where $E_v\ge0$ is the contribution of the omitted interactions.  Since $\overline D_v,D_v\ge1$,
	$$
	0 \le \frac{1}{\overline D_v}-\frac{1}{D_v} = \frac{E_v}{\overline D_vD_v} \le E_v.
	$$
	Summing over all points shows that, uniformly over all valid \Index{} inputs with $x_i=b$,
	\begin{equation}
		\label{eq:DI-lb-ideal-error}
		\overline V_b-n^{-48} \le \DI(P) \le \overline V_b.
	\end{equation}
	
	Choose $\eps_0:=0.01$.  By~\eqref{eq:DI-lb-additive-gap}, \eqref{eq:DI-lb-size-bound}, and~\eqref{eq:DI-lb-ideal-error},
	\begin{equation*}
		(1-\eps_0)(\overline V_0-n^{-48})-(1+\eps_0)\overline V_1 \ge \frac{q}{6} -\eps_0(\overline V_0+\overline V_1) -n^{-48} \ge \left(\frac{1}{6}-4\eps_0\right)q-n^{-48} >0
	\end{equation*}
	for all sufficiently large $n$.  Consequently, the possible $(1+\eps_0)$-approximation ranges in the cases $x_i=0$ and $x_i=1$ are disjoint.  Bob can recover $x_i$ by comparing the output with any fixed threshold between these two ranges.
	Thus, Lemma~\ref{lem:fixed-weight-index} implies that Alice's memory state must have size at least
	$
	\Omega(m)=\Omega(q)=\Omega_d\left(\log^{d/2} n\right)
	$
	bits, where the last equality follows from~\eqref{eq:DI-lb-cloud-size}.
\end{proof}

\section{First Density Moment}
\label{sec:SS}

In this section, we study the first density moment and establish the following result.

\begin{theorem}
	\label{thm:SS}
	There is a one-pass streaming algorithm that, given a stream $P=(p_1,\ldots,p_n)$ of $n$ points in $\mathbb R^d$ for a fixed dimension $d$, outputs a $(1+\eps,\delta)$-approximation to $\SUM(P)$.
	The algorithm uses
	$
	O\left(\eps^{-2}\log\frac1\delta\right)
	$
	words of space. 
\end{theorem}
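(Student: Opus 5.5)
The approach is to use the positive-definiteness of the Gaussian kernel through random Fourier features, which turns $\SUM(P)$ into a squared norm of a linear sketch. We have
$$
\SUM(P)=\sum_{i,j}K_\sigma(p_i,p_j).
$$
By Bochner's theorem,
$$
K_\sigma(x,y)=\E_{\omega}\left[\cos(\ip{\omega}{x-y})\right] \quad\text{with}\quad \omega\sim N(0,\sigma^{-2}I_d).
$$
Writing $\cos(\ip{\omega}{x-y})=\operatorname{Re}\bigl(e^{\mathrm{i}\ip{\omega}{x}}\overline{e^{\mathrm{i}\ip{\omega}{y}}}\bigr)$ gives
$$
\SUM(P)=\E_\omega\Bigl[\,\Bigl|\sum_{j}e^{\mathrm{i}\ip{\omega}{p_j}}\Bigr|^2\,\Bigr].
$$
So one sample $\omega$ yields a nonnegative unbiased estimator
$$
X_\omega:=\abs{S_\omega}^2=C_\omega^2+S'^2_\omega,
$$
where $C_\omega=\sum_j\cos\ip{\omega}{p_j}$ and $S'_\omega=\sum_j\sin\ip{\omega}{p_j}$. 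Each of these is a linear sum maintainable in $O(1)$ words in one pass. The algorithm draws $k=O(\eps^{-2}\log(1/\delta))$ independent frequencies, keeps their two running sums, and outputs a median of means of the $X_\omega$.

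The main step is the variance bound $\E[X_\omega^2]=O(\SUM(P)^2)$, which then gives a $(1+\eps)$-approximation via Chebyshev with blocks of $O(\eps^{-2})$ and a median over $O(\log(1/\delta))$ blocks. First expand
$$
\E\abs{S_\omega}^4=\sum_{a,b,c,e}\E\bigl[e^{\mathrm{i}\ip{\omega}{p_a-p_b+p_c-p_e}}\bigr]=\sum_{a,b,c,e}\exp\Bigl(-\tfrac{\norm{p_a-p_b+p_c-p_e}_2^2}{2\sigma^2}\Bigr),
$$
using the Gaussian characteristic function. Next reparametrize by $u=p_a-p_b$ and $v=p_e-p_c$, so the summand is $\exp(-\norm{u-v}^2/(2\sigma^2))$. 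Then write this Gaussian of $u-v$ as a convolution:
$$
e^{-\norm{u-v}^2/(2\sigma^2)}=c_d\int e^{-\norm{u-w}^2/\sigma^2}e^{-\norm{v-w}^2/\sigma^2}\,dw,
$$
where $c_d=(\pi\sigma^2/2)^{-d/2}$. The sum then becomes $c_d\int F(w)^2\,dw$ with $F(w)=\sum_{a,b}e^{-\norm{p_a-p_b-w}^2/\sigma^2}$.

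To bound $\int F^2$, I would use $F(w)\le\SUM'$-type pointwise bounds plus $\int F\,dw=(\pi\sigma^2)^{d/2}n^2$. That alone is too weak, so the main obstacle is to get a dimension-only constant $C_d$ with
$$
\E X_\omega^2\le C_d\,\SUM(P)^2.
$$
My first attempt would be Cauchy--Schwarz in the form
$$
\sum_{a,b,c,e}K(p_a-p_b,p_e-p_c)\le\sum_{a,c}\sqrt{\textstyle\sum_{b,e}\cdots}\,,
$$
together with the translation comparison $K_\sigma(x,y+t)\le$ (Gaussian averages), reducing to $\bigl(\sum_{a,b}K_{\sigma'}(p_a,p_b)\bigr)^2$ at a comparable bandwidth $\sigma'$. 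Kernel sums at bandwidths differing by a constant factor are comparable up to $O_d(1)$, by covering a ball of radius $\sigma'$ with $O_d(1)$ balls of radius $\sigma$. If a clean constant fails, the fallback is to bound the fourth moment by $\SUM\cdot\max_\omega$ terms.

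Once $\Var(X_\omega)=O_d(\SUM^2)$ is established, the rest is routine. Choose the block size $C\eps^{-2}$ so that Chebyshev gives failure probability at most $1/10$ per block, and take $O(\log(1/\delta))$ blocks with a Chernoff bound on the median. The total space is $2k=O(\eps^{-2}\log(1/\delta))$ words for the sums, plus the stored $\omega$'s, each of $O(d)=O(1)$ words.
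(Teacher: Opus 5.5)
\textbf{Gap: the variance bound you need is false.} The step that fails is the one you flag as the main obstacle. The inequality $\E[X_\omega^2]\le C_d\,\SUM(P)^2$ does not hold, so no Cauchy--Schwarz or bandwidth-comparison argument can establish it.

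Your own expansion shows why. The quantity $\E\abs{S_\omega}^4=\sum_{a,b,c,e}\exp(-\norm{p_a-p_b+p_c-p_e}_2^2/(2\sigma^2))$ counts near-coincidences among the \emph{difference vectors} $p_a-p_b$, which is an additive-energy quantity. By contrast, $\SUM(P)$ only counts near-coincidences among the points themselves.

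A concrete counterexample: take $d=1$ and $p_j=jL$ for $j\in[n]$, with $L\ge 2\sigma\sqrt{\ln n}$. Then $\SUM(P)\le n+1$. However, every quadruple with $a+c=b+e$ contributes exactly $1$, and there are $(2n^3+n)/3$ such quadruples. Hence $\E[X_\omega^2]=\Omega(n)\cdot\SUM(P)^2$.

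This is not an artifact of a loose bound. Write $\theta=\omega L$, which is essentially uniform modulo $2\pi$. Then $X_\omega=\sin^2(n\theta/2)/\sin^2(\theta/2)$, and its mean $\approx n$ is carried by the event that $\theta$ lies within $O(1/n)$ of $2\pi\mathbb{Z}$, which has probability $O(1/n)$. With only $O(\eps^{-2}\log(1/\delta))\ll n$ frequencies, this event is missed with probability close to $1$. The median of means then substantially underestimates $\SUM(P)$. Your fallback gives only $\E[X_\omega^2]\le n^2\,\SUM(P)\le n\,\SUM(P)^2$. That bound is tight in this example and forces $\Omega(n\eps^{-2})$ frequencies, i.e., linear space.

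\textbf{What the paper does instead.} The missing idea is to make the squared norm of the feature vector equal $\SUM(P)$ up to a factor $1\pm\eta$ \emph{deterministically}, and to put all the randomness into a sketch whose variance is controlled for every input vector.

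The paper takes the exact square-root feature $q_x(z)=(2/(\pi\sigma^2))^{d/4}\exp(-\norm{z-x}_2^2/\sigma^2)$, which satisfies $\ip{q_x}{q_y}=K_\sigma(x,y)$. It samples this on a grid of spacing $\Theta(\sigma/\sqrt{\log(n/\eps)})$ and truncates it at radius $O(\sigma\sqrt{\log(n/\eps)})$. Poisson summation and a Gaussian tail bound then give $\abs{\ip{\phi(x)}{\phi(y)}-K_\sigma(x,y)}\le\eta/n$ for all $x,y$. Since $\SUM(P)\ge n$, this yields $\norm{V}_2^2=(1\pm\eta)\SUM(P)$.

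An AMS sketch of $V=\sum_i\phi(p_i)$, with $4$-wise independent signs on grid coordinates, has variance at most $2\norm{V}_2^4$ regardless of how $P$ is arranged. So $O(\eps^{-2}\log(1/\delta))$ counters suffice, and each arriving point touches only $O_d(\log^d(n/\eps))$ coordinates. (Pooling $\Theta(n\eps^{-2})$ random frequencies into a single feature vector before applying AMS would also make the norm concentrate, but at an update cost linear in $n$.)
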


\noindent{\bf Algorithmic overview.\ }  The Gaussian kernel has an exact continuous square-root feature identity: there
are functions $q_x\in L_2(\mathbb R^d)$ such that
$\langle q_x,q_y\rangle=K_\sigma(x,y)$.  Consequently, for
$Q=\sum_{i\in[n]}q_{p_i}$,
$
\SUM(P)=\norm{Q}_{L_2(\mathbb R^d)}^2.
$
We discretize and truncate the continuous features to obtain a deterministic
sparse local feature map $\phi$ satisfying a uniform additive kernel error of
$\eta/n$.  If $V=\sum_{i\in[n]}\phi(p_i)$, then $\norm{V}_2^2=(1\pm\eta)\SUM(P)$.  The additive feature error becomes relative
because $\SUM(P)\ge n$, and an AMS second-moment sketch estimates
$\norm{V}_2^2$ in one pass using $O(\eps^{-2}\log\frac{1}{\delta})$ words.

\medskip

Again, for ease of presentation, we assume that the stream length $n$ is known in advance. This assumption is not essential: it suffices to know any polynomial upper bound $N\ge n$, which can be used in place of $n$ in the feature-map parameters without changing the stated asymptotic space bound, since $n$ appears only through logarithmic factors in these parameters.

\vspace{2mm}
\noindent{\bf Continuous feature identity.\ }
The starting point of our algorithm is an exact continuous feature representation of the Gaussian kernel.  For every $x\in\R^d$, define
$$
q_x(z)=\left(\frac{2}{\pi\sigma^2}\right)^{d/4} \exp\left(-\frac{\norm{z-x}_2^2}{\sigma^2}\right),
\qquad \forall z\in\R^d.
$$
This is a square-root Gaussian feature: the inner product of two such functions equals the Gaussian kernel.  For completeness, we prove the identity in Appendix~\ref{sec:Gaussian-identity}.

\begin{lemma}
	\label{lem:SS-continuous-feature}
	For all $x,y\in\R^d$,
	$
	\ip{q_x}{q_y}_{L_2(\R^d)}=K_\sigma(x,y).
	$
	Consequently, if
	$
	Q(z)=\sum_{i \in [n]} q_{p_i}(z),
	$
	then
	$
	\SUM(P)=\norm{Q}_{L_2(\R^d)}^2.
	$
\end{lemma}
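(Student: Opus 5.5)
The plan is to compute $\ip{q_x}{q_y}$ directly as a Gaussian integral and then expand $\norm{Q}^2$ bilinearly. First, expand
$$
\ip{q_x}{q_y}_{L_2(\R^d)}=\left(\frac{2}{\pi\sigma^2}\right)^{d/2}\int_{\R^d}\exp\left(-\frac{\norm{z-x}_2^2+\norm{z-y}_2^2}{\sigma^2}\right)dz .
$$
The key algebraic step is the parallelogram-type identity
$$
\norm{z-x}_2^2+\norm{z-y}_2^2=2\norm{z-m}_2^2+\tfrac12\norm{x-y}_2^2,\qquad m:=\tfrac{x+y}{2}.
$$
This holds because $z-x=(z-m)-\tfrac{x-y}{2}$ and $z-y=(z-m)+\tfrac{x-y}{2}$, so the cross terms cancel. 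The integrand then factors as $\exp(-\norm{x-y}_2^2/(2\sigma^2))\cdot\exp(-2\norm{z-m}_2^2/\sigma^2)$. The first factor is exactly $K_\sigma(x,y)$ and does not depend on $z$.

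Next, translate $z\mapsto z+m$ and use the standard Gaussian integral $\int_{\R^d}e^{-a\norm{u}_2^2}du=(\pi/a)^{d/2}$ with $a=2/\sigma^2$. This gives $(\pi\sigma^2/2)^{d/2}$, which cancels the prefactor $(2/(\pi\sigma^2))^{d/2}$ exactly, so $\ip{q_x}{q_y}=K_\sigma(x,y)$. One could also factor coordinatewise and reduce to the one-dimensional integral, but the $d$-dimensional form is just as direct. Every $q_x$ is a Gaussian, hence in $L_2$, and all integrals converge absolutely, so Fubini and the translation are justified.

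For the consequence, $Q$ is a finite sum of $L_2$ functions, so bilinearity of the inner product gives
$$
\norm{Q}_{L_2}^2=\sum_{i,j\in[n]}\ip{q_{p_i}}{q_{p_j}}=\sum_{i,j\in[n]}K_\sigma(p_i,p_j)=\sum_{i\in[n]}D_i=\SUM(P).
$$
There is no real obstacle here. The only place needing care is tracking the constants: the completion of the square and the normalization $(2/(\pi\sigma^2))^{d/4}$ must combine to give exactly the kernel with no leftover scalar.
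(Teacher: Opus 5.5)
Your proposal is correct and matches the paper's proof essentially step for step. Both complete the square around $(x+y)/2$, evaluate the Gaussian integral with $a=2/\sigma^2$ so that the normalizing constants cancel exactly, and then expand $\norm{Q}_{L_2(\R^d)}^2$ bilinearly to obtain $\sum_{i,j}K_\sigma(p_i,p_j)=\SUM(P)$.
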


\vspace{2mm}
\noindent{\bf A shared local feature grid.\ }
Our algorithm uses a deterministic local feature map for the Gaussian kernel.  This is in the same spirit as explicit finite-dimensional feature approximations for shift-invariant kernels, such as random Fourier features~\cite{RR07}.  The key difference is that our construction is grid-based and local: each point updates only a bounded neighborhood of grid cells, which is what makes the map suitable for streaming updates.

We now define a discretized and truncated version of $q_x$.  Let $\eta=c_\eta\eps$ be a feature-approximation parameter for a sufficiently small constant $c_\eta$.  Set the feature-grid scale
$
\Lambda_\eta=C_\Lambda\log\frac{dn}{\eta},
$
the feature-grid spacing
$
h=\frac{\sigma}{C_h\sqrt{\Lambda_\eta}},
$
and the feature truncation radius
$
L=C_L\sigma\sqrt{\Lambda_\eta}.
$

For $x\in\R^d$, define the local support
$
\mathcal S(x)=\{t\in\Z^d:\norm{ht-x}_\infty\le L\}
$
and the feature vector $\phi(x)\in\ell_2(\Z^d)$ by
$$
\phi(x)_t = h^{d/2} \left(\frac{2}{\pi\sigma^2}\right)^{d/4} \exp\left(-\frac{\norm{ht-x}_2^2}{\sigma^2}\right) \one[t\in\mathcal S(x)], \qquad \forall t\in\Z^d.
$$
The map is nonnegative and local.  Its support size satisfies
$$
\abs{\supp(\phi(x))}=\abs{\mathcal S(x)}\le\left(\frac{2L}{h}+1\right)^d= O_d(\Lambda_\eta^d).
$$

The following lemma shows that the inner product of two feature vectors provides a good approximation to the Gaussian kernel.
\begin{lemma}
	\label{lem:SS-kernel-approx}
	For sufficiently large choices of $C_\Lambda,C_h,C_L$, the local feature map satisfies, for all $x,y\in\R^d$,
	$
	\left|\ip{\phi(x)}{\phi(y)}_{\ell_2(\Z^d)}-K_\sigma(x,y)\right| \le \frac{\eta}{n}.
	$
\end{lemma}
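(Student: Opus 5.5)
The plan is to compare the discrete inner product with the continuous identity of Lemma~\ref{lem:SS-continuous-feature} and split the error into a truncation error and a Riemann-sum error. Write $c_\sigma:=(2/(\pi\sigma^2))^{d/2}$ and
\[
\ip{\phi(x)}{\phi(y)}=h^d c_\sigma\sum_{t\in\mathcal S(x)\cap\mathcal S(y)} \exp\left(-\frac{\norm{ht-x}_2^2+\norm{ht-y}_2^2}{\sigma^2}\right).
\]
We also introduce the untruncated lattice sum $\Sigma:=h^d c_\sigma\sum_{t\in\Z^d}(\cdots)$. Then
\[
\abs{\ip{\phi(x)}{\phi(y)}-K_\sigma(x,y)}\le \abs{\Sigma-\ip{\phi(x)}{\phi(y)}}+\abs{\Sigma-K_\sigma(x,y)}.
\]
We bound each term by $\eta/(2n)$.

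\textbf{Riemann-sum term.} Complete the square: $\norm{u-x}^2+\norm{u-y}^2=2\norm{u-m}^2+\norm{x-y}^2/2$ with $m=(x+y)/2$. Hence
\[
\Sigma=K_\sigma(x,y)\cdot c_\sigma h^d\sum_{t\in\Z^d} e^{-2\norm{ht-m}_2^2/\sigma^2},
\]
while the identity of Lemma~\ref{lem:SS-continuous-feature} says $c_\sigma\int e^{-2\norm{u-m}^2/\sigma^2}du=1$. The lattice sum factorizes over coordinates into one-dimensional sums $h\sum_{k\in\Z} e^{-2(hk-m_i)^2/\sigma^2}$. By Poisson summation, each such sum equals the integral times $1+\sum_{j\ne0}e^{-\pi^2 j^2\sigma^2/(2h^2)}\cos(\cdot)$. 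The relative error is therefore at most $O(e^{-\pi^2\sigma^2/(2h^2)})=O(e^{-\pi^2 C_h^2\Lambda_\eta/2})$. Taking the $d$-fold product and choosing $C_h$ large gives $\abs{\Sigma-K_\sigma(x,y)}\le K_\sigma(x,y)\cdot O(d)\,(\eta/(dn))^{10}\le \eta/(2n)$. This step is uniform in the shift $m$, which is exactly why Poisson summation is preferable to a derivative-based Riemann error bound.

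\textbf{Truncation term.} This is the sum over $t\notin\mathcal S(x)\cap\mathcal S(y)$. By symmetry it suffices to handle $t$ with $\norm{ht-x}_\infty>L$. There we bound the summand by $e^{-\norm{ht-x}^2/\sigma^2}$, dropping the $y$-factor, which is at most $1$. The remaining sum is a Gaussian tail over lattice points outside an $\ell_\infty$-box of radius $L=C_L\sigma\sqrt{\Lambda_\eta}$. Comparing coordinatewise with a one-dimensional tail gives
\[
c_\sigma h^d\sum_{\norm{ht-x}_\infty>L} e^{-\norm{ht-x}^2/\sigma^2}\le d\cdot O(1)^d\, e^{-L^2/(2\sigma^2)}.
\]
The lattice sums in the other coordinates are $O(\sigma)$ in $h$-units, since $h\le\sigma$. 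This is $\le d\,O(1)^d (\eta/(dn))^{C_L^2C_\Lambda/2}\le\eta/(4n)$ once $C_L$ and $C_\Lambda$ are large, for fixed $d$.

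\textbf{Main obstacle.} I expect the delicate step to be the Riemann/Poisson error. The grid spacing $h$ scales only like $\sigma/\sqrt{\Lambda_\eta}$, so we must use the exponentially small aliasing error $e^{-\Theta(\sigma^2/h^2)}$ rather than a polynomial discretization error in $h$. We must also make sure that the product of $d$ factors $(1\pm\text{tiny})$, together with the constant $c_\sigma$ and the completed-square bookkeeping, yields an absolute error at most $\eta/(2n)$ uniformly in $x,y$. Both the Poisson step and the truncation step produce errors of the form $\exp(-\Omega(C\Lambda_\eta))=(\eta/(dn))^{\Omega(C)}$, so all constants can be absorbed by taking $C_\Lambda,C_h,C_L$ large.
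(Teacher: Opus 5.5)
Your proof is correct and follows the paper's route: Poisson summation on the untruncated lattice sum $\langle\psi(x),\psi(y)\rangle$ (you factorize it coordinatewise, while the paper works on $\mathbb Z^d$ directly), plus a separate Gaussian-tail bound for the truncation. The only difference is in the truncation step. You bound the omitted nonnegative terms $\sum_{t\notin\mathcal S(x)\cap\mathcal S(y)}\psi(x)_t\psi(y)_t$ directly by dropping one Gaussian factor, whereas the paper bounds $\|\psi(u)-\phi(u)\|_2$ and applies Cauchy--Schwarz with $\|\psi(u)\|_2,\|\phi(u)\|_2\le 2$; both rest on the same coordinatewise union bound and one-dimensional shifted-lattice tail estimate.
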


\begin{proof}
	First ignore truncation and define
	$
	\psi(x)_t=h^{d/2}q_x(ht)\ (t\in\Z^d).
	$
	Then
	$$
	\ip{\psi(x)}{\psi(y)} = h^d\sum_{t\in\Z^d}q_x(ht)q_y(ht).
	$$
	The summand is a Gaussian centered at $(x+y)/2$ with integral $K_\sigma(x,y)$.  By Poisson summation,
	\begin{equation*}
		h^d\sum_{t\in\Z^d}q_x(ht)q_y(ht) = K_\sigma(x,y) \sum_{k\in\Z^d} \exp\left(-\frac{\pi^2\sigma^2\norm{k}_2^2}{2h^2}\right) \exp\left(-\frac{2\pi i}{h}k\cdot\frac{x+y}{2}\right).
	\end{equation*}
	The $k=0$ term is exactly $K_\sigma(x,y)$.  The remaining terms have total absolute value at most
	$$
	K_\sigma(x,y) \sum_{k\ne 0} \exp\left(-\frac{\pi^2\sigma^2\norm{k}_2^2}{2h^2}\right) \le \sum_{k\ne 0} \exp\left(-\frac{\pi^2\sigma^2\norm{k}_2^2}{2h^2}\right).
	$$
	Since $h=\sigma/(C_h\sqrt{\Lambda_\eta}) = \sigma/(C_h\sqrt{C_\Lambda \log(dn/\eta)})$, choosing $C_h$ and $C_\Lambda$ large enough (as functions of the fixed dimension $d$) makes this last sum at most $\eta/(3n)$, uniformly in $x$ and $y$.
	
	It remains to account for truncation. We use the following property of Gaussian density; the proof can be found in Appendix~\ref{app:discrete-gaussian-tail}.
	\begin{claim}
		\label{lem:discrete-gaussian-tail}
		For the feature maps $\psi$ and $\phi$ defined above, if
		$L=C_L\sigma\sqrt{\Lambda_\eta}$ and $C_L,C_\Lambda$ are sufficiently large
		as functions of the fixed dimension $d$, then
		$
		\|\psi(u)-\phi(u)\|_2 \leq \frac{\eta}{12n}
		$
		uniformly for all $u\in\mathbb{R}^d$.
	\end{claim}
	
	The same Poisson-summation estimate with $x=y=u$ gives $\norm{\psi(u)}_2\le 2$, and $\phi(u)$ is obtained from $\psi(u)$ by deleting coordinates, so $\norm{\phi(u)}_2\le 2$.  Hence, for arbitrary $x,y$,
	\begin{eqnarray*}
		\left|\ip{\psi(x)}{\psi(y)} - \ip{\phi(x)}{\phi(y)} \right|
		&\le&
		\norm{\psi(x)-\phi(x)}_2\norm{\psi(y)}_2 + \norm{\phi(x)}_2\norm{\psi(y)-\phi(y)}_2 \\
		&\le&
		2\cdot\frac{\eta}{12n}+2\cdot\frac{\eta}{12n} = \frac{\eta}{3n}.
	\end{eqnarray*}
	Combining the discretization and truncation errors gives the claimed bound $\eta/n$.
\end{proof}

For the stream $P$, define the implicit aggregate feature vector
$$
V=\sum_{i \in [n]}\phi(p_i)\in\ell_2(\Z^d).
$$
The vector $V$ is never stored explicitly; the algorithm stores only linear sketches of it.

\begin{lemma}
	\label{lem:SS-feature-consequences}
	Let
	$
	\widetilde D_i=\ip{\phi(p_i)}{V}_{\ell_2(\Z^d)}
	$
	and
	$
	\widetilde{\SUM}=\norm{V}_2^2=\sum_{i \in [n]}\widetilde D_i.
	$
	Then
	$
	\widetilde D_i=(1\pm\eta)D_i
	$
	and
	$
	\widetilde{\SUM}=(1\pm\eta)\SUM(P).
	$
\end{lemma}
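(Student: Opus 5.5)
The plan is to expand $\widetilde D_i$ by linearity of the inner product and apply Lemma~\ref{lem:SS-kernel-approx} term by term. Since $V=\sum_{j\in[n]}\phi(p_j)$ and each $\phi(p_j)$ has finite support, so all sums are finite, we have
$$
\widetilde D_i=\ip{\phi(p_i)}{V}=\sum_{j\in[n]}\ip{\phi(p_i)}{\phi(p_j)}.
$$
Lemma~\ref{lem:SS-kernel-approx} gives $\abs{\ip{\phi(p_i)}{\phi(p_j)}-K_\sigma(p_i,p_j)}\le \eta/n$ for each of the $n$ indices $j$. Summing these $n$ errors yields $\abs{\widetilde D_i-D_i}\le \eta$.

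To turn this additive error into a relative one, we use the self-term: $D_i\ge K_\sigma(p_i,p_i)=1$ because all kernel values are nonnegative. Hence $\abs{\widetilde D_i-D_i}\le\eta\le\eta D_i$, which is exactly $\widetilde D_i=(1\pm\eta)D_i$.

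For the aggregate, bilinearity gives $\norm{V}_2^2=\ip{V}{V}=\sum_i\ip{\phi(p_i)}{V}=\sum_i\widetilde D_i$, which confirms the stated identity $\widetilde{\SUM}=\sum_i\widetilde D_i$. Summing the per-point bounds gives $\abs{\widetilde{\SUM}-\SUM(P)}\le\sum_i\eta D_i=\eta\,\SUM(P)$. Equivalently, the total additive error is at most $n\eta$, and $\SUM(P)\ge n$.

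There is no real obstacle here. The only points to check are the absorption step, which relies on $D_i\ge1$ (equivalently $\SUM(P)\ge n$), and that the inner products in $\ell_2(\Z^d)$ are finite sums because each feature vector has finite support.
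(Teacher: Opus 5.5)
Your proof is correct and follows essentially the same route as the paper: expand $\widetilde D_i$ by linearity, apply Lemma~\ref{lem:SS-kernel-approx} to each of the $n$ terms to obtain additive error $\eta$, and convert this to relative error via $D_i\ge 1$ (with $\SUM(P)\ge n$ for the aggregate). Summing the per-point relative bounds directly, as you also do, is an equally valid way to reach the aggregate bound.
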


\begin{proof}
	For every $i$,
	\begin{equation*}
		\abs{\widetilde D_i-D_i}
		= \biggl|\sum_{j \in [n]} \left(\ip{\phi(p_i)}{\phi(p_j)}-K_\sigma(p_i,p_j)\right) \biggr| 
		\le \sum_{j \in [n]} \left|\ip{\phi(p_i)}{\phi(p_j)}-K_\sigma(p_i,p_j)\right| \le n\cdot\frac{\eta}{n} = \eta.
	\end{equation*}
	Since $D_i\ge 1$, this is a relative $\eta$ error for each density.  Summing over $i$ gives
	$
	\abs{\widetilde{\SUM}-\SUM(P)}\le \eta n.
	$
	Because all diagonal kernel terms (i.e., $K_\sigma(p_i, p_i)$ for $i \in [n]$) equal $1$, $\SUM(P)\ge n$, so this is also a relative $\eta$ error for the first Gaussian density moment.
\end{proof}

We estimate $\SUM(P)$ by applying an AMS second-moment sketch~\cite{AMS99} to the implicit vector $V$.

\begin{lemma}[AMS sketch~\cite{AMS99}]
	\label{lem:SS-ams}
	Let $V\in\ell_2(\Z^d)$ be fixed with finite support, and let $\chi:\Z^d\to\{-1,+1\}$ be a 4-wise independent sign function.  Define an AMS counter
	$
	A:=\sum_{t\in\Z^d}\chi(t)V_t.
	$
	For any $0<\alpha<1$, suppose $b_m=C_b\alpha^{-2}$ and $\kappa_m=C_\kappa\log(1/\delta)$ for sufficiently large constants $C_b$ and $C_\kappa$.  Group $b_m\kappa_m$ independent AMS counters into $\kappa_m$ blocks of size $b_m$, average their squared values within each block, and return the median of the block averages.  This median-of-means estimator is a $(1+\alpha)$-approximation to $\norm{V}_2^2$ with probability at least $1-\delta$.
\end{lemma}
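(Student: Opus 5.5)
The plan is the standard AMS second-moment argument applied to the finitely supported vector $V$. First I show that a single AMS counter $A=\sum_t \chi(t)V_t$ is an unbiased estimator of $\norm{V}_2^2$. Expanding,
\[
A^2=\sum_{t,u}\chi(t)\chi(u)V_tV_u ,
\]
and pairwise independence of $\chi$ (implied by 4-wise independence) gives $\E[\chi(t)\chi(u)]=\mathbf 1[t=u]$. Hence $\E[A^2]=\sum_t V_t^2=\norm{V}_2^2$. All sums are finite because $V$ has finite support, so no convergence issues arise.

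Next I bound the variance of $A^2$ using 4-wise independence. The expansion
\[
\E[A^4]=\sum_{t,u,v,w}\E[\chi(t)\chi(u)\chi(v)\chi(w)]\,V_tV_uV_vV_w
\]
keeps only the terms in which the indices pair up. This gives
\[
\E[A^4]=\sum_t V_t^4+3\sum_{t\ne u}V_t^2V_u^2\le 3\norm{V}_2^4 .
\]
Consequently $\Var(A^2)\le 2\norm{V}_2^4$. This is the only step requiring care: the pairing count must be done correctly, and the 4-wise independence assumption is exactly what it needs.

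For the median-of-means step, each block average of $b_m=C_b\alpha^{-2}$ independent squared counters has variance at most $2\alpha^2\norm{V}_2^4/C_b$. Choosing $C_b\ge 20$, Chebyshev's inequality bounds the probability that a block average deviates from $\norm{V}_2^2$ by more than $\alpha\norm{V}_2^2$ by $1/10$. The $\kappa_m$ blocks use independent counters, so the indicators of bad blocks are independent with mean at most $1/10$. A Chernoff bound shows that at least half of the blocks are bad with probability at most $e^{-c\kappa_m}\le\delta$ once $C_\kappa$ is large. Whenever fewer than half of the blocks are bad, the median lies in $(1\pm\alpha)\norm{V}_2^2$, which yields the claimed $(1+\alpha)$-approximation with probability at least $1-\delta$. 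This is the same median-of-means template already used in the proof of Theorem~\ref{thm:DI}. I expect no real obstacle beyond the fourth-moment computation.
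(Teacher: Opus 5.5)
Your proposal is correct and is precisely the standard AMS argument: $\E[A^2]=\norm{V}_2^2$ by pairwise independence, $\E[A^4]\le 3\norm{V}_2^4$ by 4-wise independence, then Chebyshev within each block and a Chernoff bound for the median. The paper does not reprove this lemma and simply cites the original AMS work, and your median-of-means step matches the Chebyshev-plus-Chernoff template the paper uses in the proof of Theorem~\ref{thm:DI}.
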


\begin{algorithm}[!t]
	\caption{One-pass AMS sketch for the first Gaussian density moment}
	\label{alg:SS-ams}
	\DontPrintSemicolon
	\KwIn{Stream of $n$ points in $\mathbb{R}^d$ for a fixed dimension $d$, bandwidth $\sigma$, accuracy parameter $\eps$, failure probability $\delta$.}
	\KwOut{A $(1+\eps)$-approximation to $\SUM(P)$.}
	
	Set $\eta\gets c_\eta\eps$ for a sufficiently small absolute constant $c_\eta$\tcp*[r]{feature error}
	
	Set $\Lambda_\eta\gets C_\Lambda\log(dn/\eta)$, $h\gets\sigma/(C_h\sqrt{\Lambda_\eta})$, and $L\gets C_L\sigma\sqrt{\Lambda_\eta}$\tcp*[r]{feature grid scale, spacing, and truncation radius}
	
	Set $a_\sigma\gets (2/(\pi\sigma^2))^{d/4}$\tcp*[r]{normalization}
	
	Set $b_m\gets C_b\eps^{-2}$ and $\kappa_m\gets C_\kappa\log(1/\delta)$\tcp*[r]{median-of-means block size and number of blocks}
	
	Choose mutually independent sign functions $\chi_{s,r}:\Z^d\to\{-1,+1\}$, each drawn from a 4-wise independent family, for $s=1,\ldots,\kappa_m$ and $r=1,\ldots,b_m$\tcp*[r]{AMS signs}
	
	Initialize $A_{s,r}\gets 0$ for all $s,r$\tcp*[r]{AMS counters}
	\For{each stream point $x$}{
		Enumerate $\mathcal S(x)=\{t\in\Z^d:\norm{ht-x}_\infty\le L\}$\tcp*[r]{local support}
		\ForEach{$t\in\mathcal S(x)$}{
			Set $w\gets h^{d/2}a_\sigma\exp(-\norm{ht-x}_2^2/\sigma^2)$\tcp*[r]{$w=\phi(x)_t$}
			\For{$s=1,\ldots,\kappa_m$}{
				\For{$r=1,\ldots,b_m$}{
					$A_{s,r}\gets A_{s,r}+\chi_{s,r}(t)w$
				}
			}
		}
	}
	\For{$s=1,\ldots,\kappa_m$}{
		Set $Z_s\gets b_m^{-1}\sum_{r \in [b_m]}A_{s,r}^2$
	}
	\KwRet{$\widehat{\SUM}\gets\operatorname{median}\{Z_1,\ldots,Z_{\kappa_m}\}$}.
\end{algorithm}

Our algorithm is presented in Algorithm~\ref{alg:SS-ams}.  The algorithm uses a deterministic local feature map to reduce the first density moment to an $\ell_2$-norm estimation problem.  Each point $x$ is mapped to a sparse vector $\phi(x)$ supported on nearby grid coordinates $\mathcal S(x)$, and for $V=\sum_{i\in[n]}\phi(p_i)$ we have $\norm{V}_2^2=(1\pm O(\eps))\SUM(P)$.  The stream is processed by maintaining an AMS sketch of this implicit vector $V$: when $x$ arrives, the algorithm enumerates $\mathcal S(x)$, computes the weights $\phi(x)_t$, and updates the sketch.  At the end, the AMS estimator returns an approximation to $\norm{V}_2^2$, which gives the desired estimate of $\SUM(P)$.

\begin{proof}[Proof of Theorem~\ref{thm:SS}]
	By Lemma~\ref{lem:SS-feature-consequences},
	$
	\norm{V}_2^2=\widetilde{\SUM}=(1\pm\eta)\SUM(P).
	$
	By Lemma~\ref{lem:SS-ams} (setting $\alpha = \eps/4$), the AMS median-of-means estimator in Algorithm~\ref{alg:SS-ams} returns a $(1 + \eps/4)$-approximation to $\norm{V}_2^2$ with probability at least $1-\delta$.  Taking $c_\eta=1/4$, the product of the two approximation factors, one from the feature approximation and one from the AMS sketch, is bounded by $(1\pm \eps)$. Therefore,
	$
	\Pr\left[(1-\eps)\SUM(P) \le \widehat{\SUM} \le (1+\eps)\SUM(P) \right] \ge 1-\delta.
	$
	
	Algorithm~\ref{alg:SS-ams} maintains $b_m\kappa_m=O(\eps^{-2}\log(1/\delta))$ AMS counters, which dominate the space usage.
\end{proof}

\subsection{Lower Bound}
\label{sec:SS-lb}

We next show that estimating the first Gaussian density moment is at least as hard as estimating the classical second frequency moment.  For a stream over a universe $U$ with $\abs{U}=\poly(n)$, let $f_a$ denote the frequency of item $a\in U$, and recall that
$
F_2(f):=\sum_{a\in U}f_a^2.
$
For $\eps \ge n^{-1/2+\gamma}$, the one-pass space complexity of obtaining a $(1+\eps)$-approximation to $F_2$ on a stream of length $n$ is
$
\Omega\left(\eps^{-2}\log n\right)
$
bits~\cite{BZ25}.

\begin{theorem}
	\label{thm:SS-lb}
	There is an absolute constant $c_0>0$ such that, for every fixed constant $\gamma>0$ and all
	$
	n^{-1/2+\gamma}\le\eps\le c_0,
	$
	any one-pass streaming algorithm that, on every stream of at most $n$ points on the real line, returns a $(1+\eps)$-approximation to $\SUM(P)$ with probability at least $2/3$ requires
	$
	\Omega\left(\eps^{-2}\log n\right)
	$
	bits of space.
\end{theorem}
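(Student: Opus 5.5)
We reduce one-pass $(1+\eps)$-approximation of $F_2$ to one-pass $(1+\eps')$-approximation of $\SUM$ for points on the line, with $\eps'=\Theta(\eps)$, and then invoke the $\Omega(\eps^{-2}\log n)$ bound of~\cite{BZ25}.

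\textbf{Embedding.} Take an $F_2$ stream of length $n$ over a universe $U=[u]$ with $u=\poly(n)$. Fix $\sigma=1$ and map item $a\in U$ to the point $x_a:=a\cdot L$ with $L:=C\sqrt{\ln n}$. Each arrival of $a$ is simulated by inserting $x_a$ into the streaming algorithm for $\SUM$. This mapping is deterministic and needs no memory, so any space-$S$ algorithm for $\SUM$ yields a space-$S$ algorithm for $F_2$ on the same stream. The resulting point stream $P$ has exactly $n$ points, and copies of the same item coincide.

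\textbf{Identity up to negligible error.} We have
$$
\SUM(P)=\sum_{i,j}K(p_i,p_j)=\sum_{a}f_a^2+\sum_{a\ne b}f_af_b\,e^{-L^2(a-b)^2/2}.
$$
The first sum is $F_2(f)$. In the second, each weight is at most $e^{-L^2/2}=n^{-C^2/2}$, and $\sum_{a\ne b}f_af_b\le n^2$, so the cross term is at most $n^{2-C^2/2}\le n^{-1}$ once $C$ is large. Since $F_2\ge n$, we get $\SUM(P)=(1\pm n^{-2})F_2(f)$. Hence a $(1+\eps/2)$-approximation of $\SUM(P)$ is a $(1+\eps)$-approximation of $F_2$ for $\eps\ge n^{-1/2+\gamma}\gg n^{-2}$, with the same success probability $2/3$. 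Rescaling $\eps$ by a constant does not change the $\Omega(\eps^{-2}\log n)$ bound, and $c_0$ only has to sit inside the range where~\cite{BZ25} applies.

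\textbf{Main obstacle.} The delicate point is precision and word model. Coordinates are $aL$ with $a\le\poly(n)$ and $L$ irrational. We can replace $L$ by an integer $\lceil C\sqrt{\ln n}\rceil$, so that the points are integers of $O(\log n)$ bits and form a legitimate stream of points on the real line. We also need to confirm that the lower bound of~\cite{BZ25} is stated for universes of size $\poly(n)$, for the same parameter regime $n^{-1/2+\gamma}\le\eps\le c_0$, and against algorithms with error probability $1/3$, so that it transfers directly. Once this is checked, the theorem follows.
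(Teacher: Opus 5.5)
Your proposal is correct and uses essentially the same reduction as the paper: put the $f_a$ copies of item $a$ at the integer-spaced location $aL$ with $L=\Theta(\sqrt{\log n})$, bound the cross-location Gaussian mass as a negligible additive term, turn it into a relative error via $F_2\ge \ell\ge 1$, and invoke the $F_2$ lower bound. The only difference is cosmetic: the paper ties $L$ to $\eta=\eps/100$ to get relative error $\eta$ directly, whereas your $\eps$-independent $L=\lceil C\sqrt{\ln n}\rceil$ gives polynomially small error that is absorbed using $\eps\ge n^{-1/2+\gamma}$.
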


\begin{proof}
	We reduce the insertion-only $F_2$ problem to estimating the first Gaussian density moment.  Let $f=(f_a)_{a\in U}$ be the final frequency vector of the stream, and let $\ell:=\sum_{a\in U}f_a\le n$ be its length.  Fix the Gaussian bandwidth to $\sigma=1$ and write $K:=K_1$.  Fix an arbitrary injective indexing $\pi:U\to\mathbb N$.
	Set $\eta:=\frac{\eps}{100}$, and $L:=\left\lceil\sqrt{2\ln\frac n\eta}\right\rceil$. For every occurrence of item $a\in U$, insert the point
	$$
	z_a:=\pi(a)L\in\mathbb R.
	$$
	Thus, the $f_a$ occurrences of item $a$ become $f_a$ coincident Gaussian points at location $z_a$.  Distinct item locations are separated by at least $L$, and hence, for all $a\ne b$,
	$$
	K(z_a,z_b) = \exp\left(-\frac{|z_a-z_b|^2}{2}\right)\le \exp\left(-\frac{L^2}{2}\right) \le \frac{\eta}{n}.
	$$
	
	The contribution of pairs at the same location is exactly $F_2(f)$.  Therefore,
	$$
	\SUM(P) = \sum_{a,b\in U}f_af_bK(z_a,z_b) = \sum_{a\in U}f_a^2 + \sum_{\substack{a,b\in U\\a\ne b}}f_af_bK(z_a,z_b).
	$$
	The cross-location contribution is nonnegative and is at most
	$$
	\frac{\eta}{n} \sum_{\substack{a,b\in U\\a\ne b}}f_af_b \le \frac{\eta}{n} \left(\sum_{a\in U}f_a\right)^2 = \frac{\eta\ell^2}{n} \le \eta\ell.
	$$
	Since every $f_a$ is a nonnegative integer,
	$
	F_2(f) = \sum_{a\in U}f_a^2 \ge \sum_{a\in U}f_a = \ell.
	$
	Consequently,
	$$
	F_2(f) \le \SUM(P) \le (1+\eta)F_2(f).
	$$
	
	Suppose there is a one-pass algorithm using $s$ bits that returns a $(1+\eps/20)$-approximation $\widehat S$ to $\SUM(P)$.  Then
	$
	(1-\eps/20)F_2(f) \le \widehat S \le (1+\eps/20)(1+\eta)F_2(f).
	$
	For $\eta=\eps/100$, this is a $(1+\eps/10)$-approximation to $F_2(f)$.  The Gaussian algorithm would therefore give a one-pass $F_2$ estimator using the same $s$ bits of space.  The frequency-moment lower bound implies
	$
	s = \Omega\left(\eps^{-2}\log n \right).
	$
	Replacing $\eps/20$ by $\eps$ changes only the constants.
\end{proof}

\section{Higher Density Moments}
\label{sec:density-moments}

The algorithm for the first density moment relies on a linear sketch of a feature-space
aggregate and does not extend directly to $p>1$.  In this section, we give a
different one-pass estimator for higher density moments and establish the following result.

\begin{theorem}
	\label{thm:Fp}
	Let $p>1$ be a fixed integer and $d$ a fixed dimension.  There is a one-pass streaming algorithm that, given a stream $P=(p_1,\ldots,p_n)$ of $n$ points in $\mathbb R^d$, produces a $(1+\eps,\delta)$-approximation to $M_p(P)$.
	The algorithm uses
	$
	O_{p,d}\left(\eps^{-2} n^{1-1/(p+1)}\log n\cdot \log\frac1\delta\right)
	$
	words of space.
\end{theorem}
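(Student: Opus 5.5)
We estimate $M_p(P)=\sum_i D_i^p$ by sampling stream positions and estimating the density of each sampled point with exact sums over points arriving later, in the spirit of the AMS estimator for $F_k$ and using the sparsity of the Gaussian kernel. The target space $n^{1-1/(p+1)}$ matches the classical $F_{p+1}$ bound, since $M_p$ generalizes $F_{p+1}$. Throughout we use the grid discretization of Lemma~\ref{lem:center-density}, so densities need only be estimated up to $(1\pm O(\eps))$ at cell centers.

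\emph{Step 1: an unbiased estimator.} Expand $D_i^p=\sum_{j_1,\dots,j_p}\prod_{k}K_\sigma(p_i,p_{j_k})$, so $M_p=\sum_{(i,j_1,\dots,j_p)}\prod_k K_\sigma(p_i,p_{j_k})$ is a weighted count of $(p+1)$-tuples. Sample a uniform position $i$ (or a sample of size $s$ via reservoir sampling) and, after $p_i$ arrives, maintain the exact suffix density $r_i=\sum_{j\ge i}K_\sigma(p_i,p_j)$. This costs one word per sample and an $O(1)$ kernel evaluation per arrival. The AMS-style telescoping estimator
\[
X=n\bigl(r_i^{\,p+1}-(r_i-1)^{p+1}\bigr)
\]
is not directly unbiased for a weighted kernel. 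Instead we symmetrize. Since every unordered pair is counted exactly once via the earlier index,
\[
M_p=\sum_i D_i^p,\qquad D_i=\ell_i+r_i,
\]
where $\ell_i$ is the prefix part. I would therefore use the estimator $X=n\,g(r_i)$ with $g$ chosen so that $\sum_i g(r_i)$ approximates $\sum_i D_i^p$ within constant-factor bias. This requires a comparison lemma stating that $\sum_i r_i^{p}\le M_p\le C_p\sum_i(\text{symmetrized})$.

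Since the comparison lemma only gives a constant-factor bias, I would replace it by an exact route. Sample $i$ uniformly and estimate $D_i^p$ through $p$ independent uniformly random other indices, where each factor $D_i=n\,\E_j K(p_i,p_j)$. Tracking these indices in a stream requires both past and future points. The key device is the following: sample each point independently into a set $S$ with rate $\tau=n^{-1/(p+1)}$, and for each sampled point maintain its future density exactly. Past contributions from sampled points are handled because each sampled point also stores the count of previously sampled points near it. Combined with the locality given by the active-radius argument (Lemma~\ref{lem:active-cell}-style truncation at radius $\sigma\sqrt{\Gamma}$), we get $D_i\approx r_i+\tau^{-1}\cdot(\text{sampled earlier neighbors})$, which is unbiased for $D_i$. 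Raising it to the $p$-th power introduces bias, so we use falling-factorial-type products of independent subsamples (with $p$ independent sampling sets) to get an unbiased estimator of $D_i^p$.

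\emph{Step 2: variance.} The main obstacle is bounding the variance by $O_p\bigl(n^{1-1/(p+1)}\bigr)M_p^2/|S|$-type quantities. Following the $F_k$ analysis, the variance is at most $n\sum_i D_i^{2p}\cdot(\text{factor})$. We then need the Hölder-type inequality $n\sum_i D_i^{2p}\le n^{1-1/(p+1)}M_p^2$. Its analogue for frequencies is $F_1F_{2k-1}\le n^{1-1/k}F_k^2$. Here it should follow by applying power-mean to $D_i$ together with $\sum_i D_i^{2p}\le (\max_i D_i)^p M_p$ and $\max_i D_i^{p+1}\lesssim M_p$. The last bound holds because the points near the densest point have densities comparable to it, up to a factor of $2$ after the grid reduction (Lemma~\ref{lem:cell-density-lower}). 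I expect this step, making "neighbors of a dense point are also dense" rigorous for the Gaussian kernel, to be the crux. I would prove it by showing that every point within distance $\sigma$ of $p_i$ has density at least $c_d D_i$, via a covering of the $\sigma\sqrt{\Gamma}$-ball by $O_d(\Gamma^{d/2})$ unit balls. This may cost a $\log n$ factor, matching the stated bound.

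\emph{Step 3.} Use $O(\eps^{-2})$ repetitions with median-of-means over $\log(1/\delta)$ groups, and apply Chebyshev. Each repetition stores $O_{p,d}(n^{1-1/(p+1)}\log n)$ words, which gives the claimed space.
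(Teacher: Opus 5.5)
\textbf{Gap in Step 2: the variance analysis ignores shared-probe correlations.} Your final estimator keeps candidates at rate $n^{-1/(p+1)}$, tracks the exact suffix density $r_i$, and uses $p$ independent rate-$n^{-1/(p+1)}$ samples $S_\ell$ for the prefix part. It is unbiased, and it is a variant of the paper's sample-and-probe estimator. The past/future split is unnecessary: the paper keeps reservoir samples of candidates and of $p$ probe sets, and evaluates all kernels after the stream.

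The problem is the variance. The bound $n\sum_i D_i^{2p}\cdot(\text{factor})$ is what you would get if each sampled candidate's $D_i^p$ were known exactly, as in AMS for $F_k$. In your scheme the prefix estimates are random, and the same sets $S_\ell$ are reused by every candidate. Hence $\E[Z_iZ_j]\le\bigl(D_iD_j+n^{1/(p+1)}S_{ij}\bigr)^p$ with $S_{ij}=\sum_u K_\sigma(p_i,p_u)K_\sigma(p_j,p_u)$, and the variance contains the cross term
\[
\sum_{q=1}^{p}\binom{p}{q}\,n^{q/(p+1)}\sum_{i,j\in[n]}S_{ij}^{\,q}\,(D_iD_j)^{p-q}.
\]
The candidate sampling also contributes diagonal terms $n^{(q+1)/(p+1)}\sum_i D_i^{2p-q}$ for $q\ge 1$. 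The paper names this covariance among candidates as the main technical issue and handles it in Lemma~\ref{lem:gaussian-correlation-bound}. There the inner sum is rewritten as $\sum_{U\in[n]^q}T_U^2\le(\max_U T_U)\,M_p(P)$, where $T_U=\sum_i D_i^{p-q}\prod_{a\in[q]}K_\sigma(p_i,p_{u_a})$, and $\max_U T_U$ is bounded uniformly. Your proposal never bounds this term, so the Chebyshev and median-of-means steps are not yet justified.

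\textbf{The crux lemma is right, but your proof of it is not.} The bound you identify, $\max_i D_i^{p+1}\lesssim M_p(P)$, is the paper's Lemma~\ref{lem:density-self-regularity}. Your H\"older step from it is correct. It would also close the cross term above, since $T_U\le(\max_i D_i)^{p-q}D_{u_1}\le(\max_iD_i)^{p-q+1}$ and $M_p(P)\ge n$.

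Your argument for the lemma fails, however. The claim that every point within distance $\sigma$ of $p_i$ has density at least $c_dD_i$ is false in general. Take $\sigma=1$, $d=1$, $N\approx n$ points at $-R=-\sqrt{\ln n}$, $p_i$ at $0$, and another stream point at $1$. The density ratio is about $e^{-R-1/2}\to 0$.

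Lemma~\ref{lem:cell-density-lower} does not rescue this. It only compares points inside one fine grid cell, and that cell may contain just the dense point.

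An unweighted pigeonhole over $O_d(\Gamma^{d/2})$ unit balls does work, but it only yields $D_i^{p+1}\le O(\Gamma^{d(p+1)/2})M_p(P)$. That puts a $\Gamma^{dp/2}$ factor into the variance, not the single $\log n$ you claim.

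The paper obtains a constant $c_K$ from the Gaussian-weighted annulus bound of Lemma~\ref{lem:low-density-packing}. Points of density at most $\theta$ contribute at most $C_Q\theta$ kernel mass at any query point. So points of density above $D_i/(2C_Q)$ supply at least half of $D_i$, which means there are at least $D_i/2$ of them, and $M_p(P)\ge c_KD_i^{p+1}$.
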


\noindent{\bf Algorithmic overview.\ }  We use a sample-and-probe approach.  The algorithm samples candidate points uniformly from the stream and maintains $p$ independent sets of probe samples.  For a candidate point $p_i$, each probe set provides an unbiased estimate of $D_i$; multiplying the $p$ independent estimates gives an unbiased estimate of $D_i^p$.  Averaging over candidates and rescaling then gives an unbiased estimator of $M_p(P)$.

The main challenge is variance.  Reusing the same probe samples across many
candidates introduces significant correlation, and a black-box moment argument
is insufficient.  We establish geometric regularity properties of Gaussian
densities: a low-density set cannot contribute too much Gaussian mass near a
fixed query point, a point of high density forces many points to have comparable
density, and Gaussian overlap sums control correlations between different
candidates.  These bounds give a sample size of
$O_{p,d}\left(\eps^{-2}n^{1-1/(p+1)}\log n\right)$ per independent copy and, after median amplification, imply Theorem~\ref{thm:Fp}.

\medskip

We will first assume that the stream length is known in advance.  We then explain in Section~\ref{sec:unknown-length} how to remove this assumption without changing the estimator or its asymptotic space complexity.

\subsection{Geometric Regularity of Gaussian Densities}
\label{sec:geo-Fp}

We begin by establishing several properties of density moments in fixed-dimensional Gaussian geometry.  For $x\in\mathbb R^d$, write
$
D(x):=\sum_{j\in[n]}K_\sigma(x,p_j),
$
so that $D(p_i)=D_i$.  The following lemma states that a set of low-density input points cannot place too much Gaussian mass near any fixed query point.

\begin{lemma}
	\label{lem:low-density-packing}
	Let $Q$ be a subset of stream points satisfying $D(q)\le\theta$ for every
	$q\in Q$.  Then for every $x\in\mathbb R^d$,
	$
	\sum_{q\in Q}K_\sigma(x,q)\le C_Q \theta
	$
	for a sufficiently large constant $C_Q=C_Q(d)$.
\end{lemma}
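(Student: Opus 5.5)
We partition space into boxes of side $s:=c\sigma/\sqrt d$ for a small absolute constant $c$, so that any two points in the same box are within distance $c\sigma$ and hence have kernel value at least $e^{-c^2/2}\ge 1/2$. The argument has two steps: a local bound inside each box, then a summation of boxes against the Gaussian tail.

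\emph{Local bound.} Fix a box $b$ and let $Q_b:=Q\cap b$. If $Q_b\neq\emptyset$, pick any $q\in Q_b$. Every point of $Q_b$ (indeed every stream point in $b$) contributes at least $1/2$ to $D(q)$. Since $q\in Q$, we get $\abs{Q_b}/2\le D(q)\le\theta$, so $\abs{Q_b}\le 2\theta$. This step is where the low-density hypothesis is used, and it is the crux of the proof: the density constraint at a single representative caps the occupancy of its whole box.

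\emph{Summation.} Fix $x$ and index boxes by $k\in\Z^d$ relative to the box containing $x$. A point in box $k$ lies at Euclidean distance at least $s\cdot\max(0,\norm{k}_\infty-1)$ from $x$. Its kernel value is therefore at most $\exp\bigl(-s^2\max(0,\norm{k}_\infty-1)^2/(2\sigma^2)\bigr)$. Combining this with the local bound gives
$$
\sum_{q\in Q}K_\sigma(x,q)\le 2\theta\sum_{k\in\Z^d}\exp\left(-\frac{c^2\max(0,\norm{k}_\infty-1)^2}{2d}\right).
$$
The number of $k$ with $\norm{k}_\infty=m$ is $O_d(m^{d-1})$. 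Against the Gaussian decay in $m$, the right-hand sum converges to a constant depending only on $d$ (once $c$ is fixed, e.g.\ $c=1$). Setting $C_Q$ equal to twice this constant gives $\sum_{q\in Q}K_\sigma(x,q)\le C_Q\theta$.

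There is no real obstacle beyond choosing the box side so that within-box kernel values are bounded below by a constant. Note that the bound does not need $x$ to be a stream point, and $Q$ need not contain all points of a box.
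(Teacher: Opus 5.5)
Your proof is correct and follows the same strategy as the paper: you cap the number of points of $Q$ in any region of diameter at most a small constant times $\sigma$ by $O(\theta)$, using the density bound at one representative point, and then sum these counts against the Gaussian decay over shells around $x$. The only difference is cosmetic: you use a fixed axis-aligned grid with $\ell_\infty$-shells, whereas the paper uses Euclidean annuli of width $\sigma$, each covered by $O_d((t+1)^d)$ balls of radius $c\sigma$.
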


\begin{proof}
	For $t=0,1,2,\ldots$, define $A_t:=\{q\in Q:t\sigma\le \|q-x\|_2<(t+1)\sigma\}$ to be the $t$-th annulus of points around $x$.
	
	We first bound $|A_t|$.  The ball $B(x,(t+1)\sigma)$ can be covered by
	$C'_Q(t+1)^d$ Euclidean balls of radius $c\sigma$, for a sufficiently small absolute
	constant $c>0$ and a sufficiently large constant $C'_Q = C'_Q(d)$.  Consider one covering ball $B$ with $B\cap A_t\neq\emptyset$, and fix
	$q_0\in B\cap A_t$.  For every $q\in B\cap A_t$, we have $\|q-q_0\|_2\le 2c\sigma$
	and $K_\sigma(q_0,q)\ge e^{-2c^2}$. Since $D(q_0)\le \theta$, we have
	$$
	|B\cap A_t| \cdot e^{-2c^2} \le \sum_{q\in B\cap A_t} K_\sigma(q_0,q) \le D(q_0) \le \theta.
	$$
	Thus $|B\cap A_t|\le e^{2c^2}\theta$, and summing over the covering balls gives
	$$
	|A_t|\le C'_Q(t+1)^d e^{2c^2}\theta.
	$$
	For $q\in A_t$, we have $K_\sigma(x,q)\le e^{-t^2/2}$.  Therefore
	$$
	\sum_{q\in Q}K_\sigma(x,q) \le \sum_{t\ge0}|A_t|e^{-t^2/2} \le C'_Q e^{2c^2}\theta \sum_{t\ge0}(t+1)^d e^{-t^2/2} \le C_Q\theta,
	$$
	where $\sum_{t\ge0}(t+1)^d e^{-t^2/2}$ converges for fixed $d$.
\end{proof}

The next lemma says that a point of large Gaussian density cannot be isolated: a large
value of $D_i$ forces many input points to have comparable density.

\begin{lemma}
	\label{lem:density-self-regularity}
	For every stream point $p_i$,
	$
	M_p(P) \ge c_K D_i^{p+1}
	$
	for a sufficiently small constant $c_K=c_K(p,d)$.
\end{lemma}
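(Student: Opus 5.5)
Fix a stream point $p_i$ and write $D:=D_i$. The plan is to show that $\Omega_{p,d}(D)$ stream points have density at least $c\,D$ for a constant $c=c(p,d)>0$. Given this, $M_p(P)\ge \Omega(D)\cdot (cD)^p = \Omega_{p,d}(D^{p+1})$, which proves the lemma. The natural source of such points is the stream points near $p_i$.

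First I would show that much of $D$ comes from points within distance $O(\sigma)$ of $p_i$. I apply Lemma~\ref{lem:low-density-packing} with threshold $\theta:=D/(2C_Q)$ to the set $Q$ of stream points whose density is at most $\theta$. This gives
\[
\sum_{q\in Q}K_\sigma(p_i,q)\le C_Q\theta = D/2,
\]
so the set $H$ of high-density points, those with $D(q)>\theta$, satisfies $\sum_{q\in H}K_\sigma(p_i,q)\ge D/2$. Since $K_\sigma\le 1$, this already yields $|H|\ge D/2$, and every point of $H$ has density greater than $D/(2C_Q)$.

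We therefore have at least $D/2$ points, each of density at least $D/(2C_Q)$. This gives
\[
M_p(P)\ge \sum_{q\in H}D(q)^p \ge \frac{D}{2}\left(\frac{D}{2C_Q}\right)^p,
\]
and we may take $c_K=2^{-(p+1)}C_Q^{-p}$. The radius localization considered in the first step is not actually needed, because the $K_\sigma\le 1$ bound counts the points directly.

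The main subtlety is the application of Lemma~\ref{lem:low-density-packing}. It requires $Q$ to be a subset of stream points with densities bounded by $\theta$, and the query point $x=p_i$ is arbitrary, so the lemma applies as stated. I would also check whether $p_i$ itself may lie in $Q$. If it did, then $D\le\theta=D/(2C_Q)$, which is impossible for $C_Q\ge1$; so $p_i\in H$, and no circularity arises.
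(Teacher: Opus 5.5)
Your proposal is correct and matches the paper's proof step for step. You apply Lemma~\ref{lem:low-density-packing} with threshold $D_i/(2C_Q)$ and query point $p_i$, conclude that the high-density set $H$ carries kernel mass at least $D_i/2$ and hence has at least $D_i/2$ points of density above $D_i/(2C_Q)$, and obtain $c_K = 2^{-(p+1)}C_Q^{-p}$. Your remarks about radius localization and about $p_i \notin Q$ are harmless but not needed.
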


\begin{proof}
	Let $\theta=D_i$.  Apply Lemma~\ref{lem:low-density-packing} to
	$
	Q:=\{p_j:D_j\le \theta/(2C_Q)\}
	$
	with query point $p_i$.  The total contribution of $Q$ to $D_i$ is at most
	$\theta/2$.  Hence the complementary subset
	$
	H:=\{p_j:D_j>\theta/(2C_Q)\}
	$
	contributes at least $\theta/2$ to $D_i$.  Since each Gaussian kernel value is at most
	$1$, this implies $|H|\ge \theta/2$.  Therefore
	$$
	M_p(P) \ge \sum_{p_j\in H}D_j^p \ge \frac{\theta}{2}\left(\frac{\theta}{2C_Q}\right)^p = c_K \theta^{p+1}
	$$
	for a small enough constant $c_K$.
\end{proof}

Set $\tau:=n^{1/(p+1)}$.
Since $D_i\ge 1$ for every $i$, we have
\begin{equation}
	\label{eq:density-moment-lower-trivial}
	M_p(P)\ge n=\tau^{p+1}.
\end{equation}

The following lemma bounds the power sums of the densities, which will be used in the subsequent variance analysis of our estimator.

\begin{lemma}
	\label{lem:density-power-sums}
	For every integer $q\in\{0,1,\ldots,p\}$,
	$
	\sum_{i\in[n]} D_i^{2p-q} = O_{p,d} \left(\frac{\left(M_p(P)\right)^2 \log n}{\tau^{q+1}} \right).
	$
\end{lemma}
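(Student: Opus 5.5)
The plan is a dyadic decomposition of the densities by magnitude, combined with two facts: a counting bound from the definition of $M_p(P)$, and the pointwise cap from Lemma~\ref{lem:density-self-regularity}. Write $M:=M_p(P)$. Since $1\le D_i\le n$, partition $[n]$ into $O(\log n)$ levels
$$
I_k:=\{i: 2^k\le D_i<2^{k+1}\}, \qquad k=0,1,\ldots,\lceil\log_2 n\rceil,
$$
and let $N_k:=\abs{I_k}$. It suffices to show that each level contributes $O_{p,d}\left(M^2/\tau^{q+1}\right)$ to $\sum_i D_i^{2p-q}$. The $\log n$ factor in the statement then comes only from the number of levels.

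First, I bound a single level. Every $i\in I_k$ has $D_i^p\ge 2^{kp}$, so $N_k 2^{kp}\le M$. Hence
$$
\sum_{i\in I_k}D_i^{2p-q}\le N_k\,2^{(k+1)(2p-q)}\le 2^{2p}\,M\,2^{k(p-q)}.
$$
Since $q\le p$, the exponent $p-q$ is nonnegative. So the remaining task is to bound the largest possible density scale $2^k$ that can occur.

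Second, I cap that scale using Lemma~\ref{lem:density-self-regularity}. A nonempty level contains some $i$ with $D_i\ge 2^k$, and the lemma gives $c_K D_i^{p+1}\le M$. Therefore $2^{k}\le (M/c_K)^{1/(p+1)}$, and
$$
2^{k(p-q)}\le (M/c_K)^{(p-q)/(p+1)}.
$$
Substituting this into the level bound gives a per-level contribution of $O_{p,d}\left(M^{1+(p-q)/(p+1)}\right)$.

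Third, I compare the exponents. Note that
$$
M^{1+(p-q)/(p+1)}=\frac{M^2}{M^{(q+1)/(p+1)}}.
$$
By \eqref{eq:density-moment-lower-trivial}, $M\ge n=\tau^{p+1}$, so $M^{(q+1)/(p+1)}\ge\tau^{q+1}$. Each level therefore contributes $O_{p,d}(M^2/\tau^{q+1})$. Summing over the $O(\log n)$ nonempty levels gives the claim.

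The only nontrivial ingredient is the cap $D_{\max}\le (M/c_K)^{1/(p+1)}$, which the earlier lemma already supplies. So I expect the argument to be routine bookkeeping. The one point to check is that the constant $2^{2p}$ and the factor $c_K^{-(p-q)/(p+1)}$ depend only on $p$ and $d$, which holds because $c_K=c_K(p,d)$ and $0\le q\le p$.
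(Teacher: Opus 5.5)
Your proof is correct and follows the paper's argument: dyadic density levels, the count $|L_a|\le M_p(P)/2^{ap}$, the cap from Lemma~\ref{lem:density-self-regularity}, and the trivial bound $M_p(P)\ge\tau^{p+1}$; the only difference is that you apply the cap to every nonempty level instead of splitting into the cases $2^a\le\tau$ and $2^a>\tau$. As a side remark, your uniform cap already gives $\sum_{i} D_i^{2p-q}\le (\max_i D_i)^{p-q}M_p(P)\le (M_p(P)/c_K)^{(p-q)/(p+1)}M_p(P)=O_{p,d}\bigl(M_p(P)^2/\tau^{q+1}\bigr)$ directly, without any level decomposition, so the $\log n$ factor is not actually needed.
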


\begin{proof}
	Define the dyadic density levels
	$
	L_a:=\{i:2^a\le D_i<2^{a+1}\}\ \bigl(a=0,1,\ldots,\log n\bigr).
	$
	Since $1\le D_i\le n$, these levels cover all stream points.  
	
	Fix one level $L_a$.  Since $\sum_{i\in L_a}D_i^p\le M_p(P)$, we have
	$|L_a|\le {M_p(P)}/{2^{ap}}$.
	Hence
	\begin{equation}
		\label{eq:density-level-power}
		\sum_{i\in L_a}D_i^{2p-q} = \sum_{i\in L_a}D_i^pD_i^{p-q} \le 2^{(a+1)(p-q)}\sum_{i\in L_a}D_i^p \le C_p M_p(P)\,2^{a(p-q)},
	\end{equation}
	where $C_p=2^p$ is a constant.
	
	We analyze two cases:
	\begin{itemize}
		\item If $2^a\le \tau$, then
		\begin{equation}
			\label{eq:tau-1}
			M_p(P)2^{a(p-q)} \le M_p(P)\tau^{p-q} \le \frac{\left(M_p(P)\right)^2}{\tau^{q+1}},
		\end{equation}
		where the last inequality uses~\eqref{eq:density-moment-lower-trivial}. 
		
		\item If $2^a>\tau$ and
		$L_a$ is nonempty, then Lemma~\ref{lem:density-self-regularity} gives
		$M_p(P)\ge c_K 2^{a(p+1)}$.  Therefore
		\begin{equation}
			\label{eq:tau-2}
			M_p(P)2^{a(p-q)} \le c_K^{-1}\frac{\left(M_p(P)\right)^2}{2^{a(q+1)}} \le
			c_K^{-1}\frac{\left(M_p(P)\right)^2}{\tau^{q+1}}.
		\end{equation}
	\end{itemize}	 
	Combining~\eqref{eq:tau-1} and \eqref{eq:tau-2} with~\eqref{eq:density-level-power} and summing over $O(\log n)$ levels
	proves the lemma.
\end{proof}

For two input points $p_i,p_j$, define their Gaussian overlap
$$
S_{ij}:=\sum_{u\in[n]} K_\sigma(p_i,p_u)K_\sigma(p_j,p_u).
$$
The next lemma is the main geometric correlation bound used in the variance analysis of our algorithm.

\begin{lemma}[Gaussian correlation bound]
	\label{lem:gaussian-correlation-bound}
	For every integer $q\in\{1,2,\ldots,p\}$,
	$$
	\sum_{i,j\in[n]} S_{ij}^q(D_iD_j)^{p-q} \le O_{p,d}\left( \frac{\log n}{\tau^q}{\left(M_p(P)\right)^2}\right).
	$$
\end{lemma}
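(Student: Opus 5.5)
We need to bound $\sum_{i,j} S_{ij}^q (D_iD_j)^{p-q}$. The plan is to split off one factor of $S_{ij}$, bound the remaining $S_{ij}^{q-1}$ pointwise, and then reduce everything to the power sums of Lemma~\ref{lem:density-power-sums}.

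The first step is two pointwise facts. By Cauchy--Schwarz, or simply by $K_\sigma(p_j,p_u)\le 1$, we have $S_{ij}\le \min\{D_i,D_j\}\le (D_iD_j)^{1/2}$. Also, summing over $j$ gives $\sum_j S_{ij}=\sum_u K_\sigma(p_i,p_u)D_u$.

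Using the first fact, $S_{ij}^{q-1}\le \min\{D_i,D_j\}^{q-1}$. The summand is then at most
\[
S_{ij}\,\min\{D_i,D_j\}^{q-1}(D_iD_j)^{p-q}.
\]
By symmetry it suffices to bound the part of the sum with $D_j\le D_i$. There the summand is at most $S_{ij}D_i^{p-q}D_j^{p-1}$, and since $D_j\le D_i$ we may further replace $D_j^{p-1}$ by $D_j^{a}D_i^{p-1-a}$ for any $0\le a\le p-1$.

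The cleanest route I would try is the following. Expanding $S_{ij}$ gives
\[
\sum_{i,j}S_{ij}\,f(i)g(j)=\sum_u \Bigl(\sum_i K_\sigma(p_i,p_u)f(i)\Bigr)\Bigl(\sum_j K_\sigma(p_j,p_u)g(j)\Bigr).
\]
Each inner sum is a Gaussian-weighted sum around $p_u$. I would bound it using a weighted version of Lemma~\ref{lem:low-density-packing}: split the points into dyadic density levels $\{D\approx 2^b\}$. For each level, Lemma~\ref{lem:low-density-packing} bounds $\sum_{i:D_i\le 2^{b+1}}K_\sigma(p_u,p_i)$ by $C_Q2^{b+1}$, and trivially the whole sum is at most $D_u$. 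Hence
\[
\sum_i K_\sigma(p_i,p_u)D_i^{s}\lesssim \sum_b \min\{D_u,2^b\}\,2^{bs}.
\]
But this sum could be dominated by high-density neighbors, so instead I would apply Lemma~\ref{lem:low-density-packing} with $\theta=\max$ over the relevant level of the density of the neighbors. I would also use the easy fact that if $K_\sigma(p_i,p_u)$ is non-negligible then $D_i$ and $D_u$ are comparable up to constants. This holds because densities are roughly Lipschitz in the multiplicative sense at scale $\sigma$: $D(x)\ge e^{-O(1)}D(y)$ when $\|x-y\|\le O(\sigma)$, and more generally $D(x)\ge e^{-\|x-y\|^2/\sigma^2}D(y)^{\ldots}$. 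Making this comparison precise is where I expect trouble. Using it, we get
\[
\sum_i K_\sigma(p_i,p_u)D_i^{s}=O_d(D_u^{s+1}),
\]
with far neighbors absorbed by Gaussian decay against the annulus counts of Lemma~\ref{lem:low-density-packing}.

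Granting this, the symmetric part with $f=g=D^{(p-1)/2+\ldots}$, suitably balanced, gives a total of $O(\sum_u D_u^{2p-q+1})$-type sums. Here I would combine $S_{ij}\le\min\{D_i,D_j\}$ more carefully. If one fewer power is lost, the total is $\sum_u D_u^{2p-q}$ up to constants, which by Lemma~\ref{lem:density-power-sums} (with index $q$) is $O(M_p^2\log n/\tau^{q+1})$, better than needed. The needed bound $M_p^2\log n/\tau^{q}$ corresponds to $\sum_u D_u^{2p-q+1}$, i.e.\ Lemma~\ref{lem:density-power-sums} with $q-1\in\{0,\ldots,p-1\}$. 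So the target is
\[
\sum_{i,j}S_{ij}^q(D_iD_j)^{p-q}\lesssim \sum_u D_u^{2p-q+1},
\]
and there is a spare factor available from the exponent bookkeeping. Concretely, I would bound $S_{ij}^{q-1}\le (D_iD_j)^{(q-1)/2}$, write the sum as
\[
\sum_u\Bigl(\sum_i K_\sigma(p_i,p_u)D_i^{\,p-(q+1)/2}\Bigr)^2\lesssim \sum_u D_u^{2p-q+1}
\]
via the local comparison, and finish with Lemma~\ref{lem:density-power-sums}.

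The main obstacle is the local comparison $\sum_i K_\sigma(p_i,p_u)D_i^{s}=O(D_u^{s+1})$. Densities of far neighbors can be much larger than $D_u$, so this needs the Gaussian decay of $K_\sigma(p_i,p_u)$ to beat the growth of $D_i$. Since $D(x)\ge K_\sigma(x,y)\cdot(\text{mass near }y)$ only gives one-sided control, if this fails I would fall back to splitting by density levels of $i$ relative to $D_u$. Neighbors denser than $D_u$ would then be charged directly via Lemma~\ref{lem:density-self-regularity} and Lemma~\ref{lem:density-power-sums}, at the cost of the $\log n$ factor.
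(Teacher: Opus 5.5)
\textbf{There is a genuine gap.} Your final chain rests on the pointwise ``local comparison'' $\sum_i K_\sigma(p_i,p_u)D_i^{s}=O(D_u^{s+1})$ with $s=p-(q+1)/2$, and that estimate is false. So is the ``easy fact'' that a non-negligible kernel value forces comparable densities.

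Put $m$ coincident points at the origin and one further point $p_u$ at distance $\sigma\sqrt{2\ln m}$. Then every cluster point has $K_\sigma(p_i,p_u)=1/m$ and $D_i\ge m$, while $D_u=2$. Hence $\sum_i K_\sigma(p_i,p_u)D_i^{s}\ge m^{s}$, although $D_u^{s+1}=2^{s+1}$, and $s\ge(p-1)/2>0$. The Gaussian decay is exactly cancelled by multiplicity, so far neighbors are not ``absorbed by Gaussian decay''. No bound on a single $u$ in terms of $D_u$ can hold. Your fallback, splitting neighbors by density relative to $D_u$, is still a pointwise-in-$u$ strategy and is too vague to check.

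What is missing is an \emph{aggregate} argument. Your reduction itself is sound: the loosening $S_{ij}^{q-1}\le(D_iD_j)^{(q-1)/2}$, the identity $\sum_{i,j}S_{ij}f_if_j=\sum_u(\sum_iK_\sigma(p_i,p_u)f_i)^2$, and the exponent count leading to Lemma~\ref{lem:density-power-sums} with index $q-1$. Moreover, with $T'_u:=\sum_iK_\sigma(p_i,p_u)D_i^{s}$, the target $\sum_u(T'_u)^2=O_{p,d}(\sum_iD_i^{2p-q+1})$ is true.

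To prove it, split $T'_u=\sum_aT'_{u,a}$ over the dyadic levels $L_a$. Lemma~\ref{lem:low-density-packing} gives $\max_uT'_{u,a}\le C_Q2^{(a+1)(s+1)}$ uniformly in $u$. On the other hand, $\sum_uT'_{u,b}=\sum_{i\in L_b}D_i^{s+1}$ exactly. Therefore
\[
\sum_u(T'_u)^2\le 2\sum_{a\le b}\Bigl(\max_uT'_{u,a}\Bigr)\sum_uT'_{u,b}=O_{p,d}\Bigl(\sum_iD_i^{2p-q+1}\Bigr),
\]
because the sum over $a\le b$ is geometric.

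The paper uses the same ``uniform maximum times exact total'' device, but keeps all $q$ factors. It expands $S_{ij}^q=\sum_{U\in[n]^q}K_U(i)K_U(j)$ and sets $T_U=\sum_iD_i^{p-q}K_U(i)$, so that $\sum_UT_U=M_p(P)$. It bounds $\max_UT_U=O(\log n\cdot M_p(P)/\tau^q)$ by packing on each level plus Lemma~\ref{lem:density-self-regularity}, and concludes from $\sum_UT_U^2\le\max_UT_U\cdot\sum_UT_U$.

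On your loosened form, a single global max-times-total would overshoot by a polynomial factor when $q\ge2$. One cluster of $\tau$ points among isolated points already shows this. So the level-by-level pairing above is genuinely needed on your route.
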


\begin{proof}
	Expand $S_{ij}^q$ over $q$ probe indices.  For a tuple
	$U=(u_1,\ldots,u_q)\in[n]^q$, define
	$
	K_U(i):=\prod_{a \in [q]} K_\sigma(p_i,p_{u_a}).
	$
	Then
	$$
	S_{ij}^q=\sum_{U\in[n]^q}K_U(i)K_U(j),
	$$
	and hence
	$$
	\sum_{i,j\in[n]} S_{ij}^q(D_iD_j)^{p-q} = \sum_{U\in[n]^q} \left(\sum_{i\in[n]}D_i^{p-q}K_U(i)\right)^2.
	$$
	
	Write
	$
	T_U:=\sum_{i\in[n]}D_i^{p-q}K_U(i).
	$
	First,
	\begin{equation}
		\label{eq:sum-TU}
		\sum_{U\in[n]^q}T_U = \sum_{i\in[n]}D_i^{p-q}\left(\sum_{u\in[n]}K_\sigma(p_i,p_u)\right)^q =
		\sum_{i\in[n]}D_i^{p-q} D_i^q =
		\sum_{i\in[n]}D_i^p = M_p(P).
	\end{equation}
	We next prove an upper bound on $T_U$.  Since each kernel value is at most $1$,
	$
	K_U(i)\le K_\sigma(p_i,p_{u_1}).
	$
	Consider the dyadic density level $L_a=\{i:2^a\le D_i<2^{a+1}\}$.  Applying
	Lemma~\ref{lem:low-density-packing} to the set $Q=\{p_i:i\in L_a\}$ with
	$\theta=2^{a+1}$ and query point $p_{u_1}$ gives
	$$
	\sum_{i\in L_a}K_\sigma(p_i,p_{u_1})\le C_Q 2^{a+1}.
	$$
	Therefore, the contribution of level $L_a$ to $T_U$ is at most (recall that $C_p=2^p$)
	\begin{equation}
		\label{eq:TU-level}
		\left(2^{a+1}\right)^{p-q}\cdot C_Q 2^{a+1} \le C_p C_Q 2^{a(p-q+1)}. 
	\end{equation}
	We again analyze two cases:
	\begin{enumerate}
		\item If $2^a\le \tau$, then using~\eqref{eq:density-moment-lower-trivial} gives
		\begin{equation}
			\label{eq:tau-3}
			2^{a(p-q+1)} \le \tau^{p-q+1} \le \frac{M_p(P)}{\tau^q}.
		\end{equation}
		
		\item  If $2^a>\tau$ and $L_a$ is nonempty,
		then Lemma~\ref{lem:density-self-regularity} gives
		$M_p(P)\ge c_K 2^{a(p+1)}$, and therefore
		\begin{equation}
			\label{eq:tau-4}
			2^{a(p-q+1)} \le c_K^{-1}\frac{M_p(P)}{2^{aq}} \le c_K^{-1}\frac{M_p(P)}{\tau^q}.
		\end{equation}
	\end{enumerate}
	Combining~\eqref{eq:tau-3} and \eqref{eq:tau-4} with~\eqref{eq:TU-level} and summing over $O(\log n)$ levels gives
	\begin{equation}
		\label{eq:TU-uniform}
		T_U\le C_p C_Q c_K^{-1}\log n\cdot \frac{M_p(P)}{\tau^q},
	\end{equation}
	for every $U\in[n]^q$.
	Finally, by~\eqref{eq:sum-TU} and~\eqref{eq:TU-uniform},
	$$
	\sum_{U\in[n]^q}T_U^2 \le \left(\max_U T_U \right) \sum_{U\in[n]^q}T_U \le C_p C_Q c_K^{-1}\log n\cdot \frac{\left(M_p(P)\right)^2}{\tau^q},
	$$
	which proves the lemma.
\end{proof}

\subsection{Algorithm and Analysis}
\label{sec:algo-Fp}

Our algorithm is presented in Algorithm~\ref{alg:density-moments}.  At a high level, it samples candidate points uniformly from the stream, uses independent probe samples to estimate their densities, and averages the resulting unbiased estimates of $D_i^p$.

\begin{algorithm}[!t]
	\caption{One-pass density-moment estimator}
	\label{alg:density-moments}
	\DontPrintSemicolon
	\KwIn{Stream of $n$ points in $\mathbb R^d$ for a fixed dimension $d$, bandwidth $\sigma$, integer moment parameter $p>1$, accuracy parameter $\eps$, and failure probability $\delta$.}
	\KwOut{A $(1+\eps)$-approximation to $M_p(P)$.}
	
	Set $\tau\gets n^{1/(p+1)}$, $b_m\gets C_b\eps^{-2}\tau^p\log n$, and $\kappa_m\gets C_\kappa\log(1/\delta)$ \tcp*[r]{density scale, sample size per repetition, and number of repetitions}
	
	\For{$r=1,\ldots,\kappa_m$}{
		Initialize candidate samplers $S^{(r)}_1,\ldots,S^{(r)}_{b_m}\gets\bot$\;
		Initialize probe samplers $R^{(r)}_{\ell,k}\gets\bot$ for all $\ell\in[p]$ and $k\in[b_m]$\;
	}
	\BlankLine
	\tcp{Streaming: maintain independent uniform samples with replacement}
	\ForEach{$j$-th stream point $p_j$}{
		\For{$r=1,\ldots,\kappa_m$}{
			\For{$a=1,\ldots,b_m$}{
				With probability $1/j$, set $S^{(r)}_a\gets p_j$\;
			}
			\For{$\ell=1,\ldots,p$}{
				\For{$k=1,\ldots,b_m$}{
					With probability $1/j$, set $R^{(r)}_{\ell,k}\gets p_j$\;
				}
			}
		}
	}
	\BlankLine
	\tcp{Post-processing on the stored candidate samples and probe samples}
	\For{$r=1,\ldots,\kappa_m$}{
		\For{$a=1,\ldots,b_m$}{
			\For{$\ell=1,\ldots,p$}{
				Set $\widehat D^{(r)}_{\ell,a}\gets \frac{n}{b_m}\sum_{k \in [b_m]} K_\sigma(S^{(r)}_a,R^{(r)}_{\ell,k})$\;
			}
			Set $Z^{(r)}_a\gets\prod_{\ell \in [p]}\widehat D^{(r)}_{\ell,a}$\;
		}
		Set $\widehat M^{(r)}\gets \frac{n}{b_m}\sum_{a \in [b_m]} Z^{(r)}_a$\;
	}
	\KwRet{$\widehat M_p \gets\operatorname{median}\{\widehat M^{(1)},\ldots,\widehat M^{(\kappa_m)}\}$}.
\end{algorithm}

Recall that $M_p(P)=\sum_{i\in[n]} D_i^p$ and $D_i=\sum_{j\in[n]} K_\sigma(p_i,p_j)$.
If we could sample a point $x$ uniformly from the stream and compute
$D(x)^p = \left(\sum_{j\in[n]}K_\sigma(x,p_j)\right)^p$ exactly, then $n\cdot D(x)^p$
would be an unbiased estimator for $M_p(P)$.  

During the stream, the algorithm maintains $b_m$ independent candidate samplers and, for each $\ell\in[p]$, $b_m$ independent probe samplers. At the end of the stream, the stored points in all samplers are independent uniform samples from the stream, with replacement. Let $S$ denote the multiset of candidate samples, and let $R_\ell$ denote the multiset of samples in the $\ell$-th probe set. For each sampled candidate $x$ and each $\ell\in[p]$, we form the density estimate
$$
\widehat D_\ell(x) = \frac{n}{b_m} \sum_{y\in R_\ell} K_\sigma(x,y).
$$
Each $\widehat D_\ell(x)$ is an unbiased estimator for $D(x)$.  Moreover,
the samples $R_1,\ldots,R_p$ are independent, and therefore
$
\prod_{\ell \in [p]} \widehat D_\ell(x)
$
is an unbiased estimator for $D(x)^p$.  The final estimator is
$$
\widehat M_p = \frac{n}{b_m} \sum_{x\in S} \prod_{\ell \in [p]} \widehat D_\ell(x).
$$

Note that during the stream, we only maintain reservoir samples.  All kernel evaluations are performed after the stream and only between stored candidate and probe points.  

The main technical issue is that the same probe samples are reused for all candidate points, which creates correlations among the summands.  The fixed-dimensional Gaussian geometry is used to control these correlations: roughly, a point of high Gaussian density cannot be isolated, and the low-density regions cannot contribute too much Gaussian mass to any query point.  This gives a density-level correlation bound and a variance bound for the
shared-probe estimator.

For the analysis, it suffices to analyze one repetition.  Dropping the repetition superscript $r$, let
$I_1,\ldots,I_{b_m}$ be the candidate indices sampled by $S_1,\ldots,S_{b_m}$, and let
$R_{\ell,1},\ldots,R_{\ell,b_m}$ denote the probe indices in the $\ell$-th probe set.  These indices are independent and uniform in $[n]$.  For every stored candidate $p_i$ and every $\ell\in[p]$, define
$$
\widehat D_\ell(p_i) := \frac{n}{b_m}\sum_{a \in [b_m]} K_\sigma(p_i,p_{R_{\ell,a}}).
$$
The corresponding base estimator is
\begin{equation}
	\label{eq:density-moment-estimator}
	\widehat M_p:= \frac{n}{b_m}\sum_{a \in [b_m]} \prod_{\ell \in [p]} \widehat D_\ell(p_{I_a}).
\end{equation}

\begin{lemma}
	\label{lem:density-moment-unbiased}
	The estimator in~\eqref{eq:density-moment-estimator} satisfies
	$
	\E[\widehat M_p]=M_p(P).
	$
\end{lemma}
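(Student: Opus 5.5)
The plan is to compute the expectation by conditioning on the candidate indices and using the independence structure of the samples. All indices $I_1,\ldots,I_{b_m}$ and $R_{\ell,k}$ for $\ell\in[p]$, $k\in[b_m]$ are mutually independent and uniform in $[n]$. By linearity of expectation applied to~\eqref{eq:density-moment-estimator}, it suffices to show that for each fixed $a\in[b_m]$,
$$
\E\left[\prod_{\ell\in[p]}\widehat D_\ell(p_{I_a})\right]=\frac{1}{n}M_p(P).
$$
Multiplying by $n/b_m$ and summing over the $b_m$ values of $a$ then gives $M_p(P)$.

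The first step is to condition on $I_a=i$. Since $I_a$ is independent of all probe indices, conditioning leaves the probe distribution unchanged. For a fixed $i$ and $\ell$, each term $K_\sigma(p_i,p_{R_{\ell,k}})$ has expectation $\frac1n\sum_{u\in[n]}K_\sigma(p_i,p_u)=D_i/n$. Hence
$$
\E\left[\widehat D_\ell(p_i)\right]=\frac{n}{b_m}\cdot b_m\cdot\frac{D_i}{n}=D_i.
$$

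The second step handles the product over $\ell$. For fixed $i$, the random variables $\widehat D_1(p_i),\ldots,\widehat D_p(p_i)$ are functions of the disjoint independent families $(R_{\ell,k})_{k\in[b_m]}$, so they are independent. Their product therefore has expectation $D_i^p$. Averaging over the uniform $I_a$ gives $\frac1n\sum_{i\in[n]}D_i^p=M_p(P)/n$, which is the required identity.

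There is no real obstacle. The only point needing care is that the same probe sets are shared across different candidates $a$. This sharing creates correlations between the summands, but correlations do not affect linearity of expectation. What the argument actually uses is independence across different $\ell$ for a single candidate, together with the independence of $I_a$ from the probes. The reservoir samplers in Algorithm~\ref{alg:density-moments} produce exactly this distribution: independent uniform samples with replacement, with separate randomness for each sampler.
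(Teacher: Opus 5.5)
Your proof is correct and follows the paper's argument: each $\widehat D_\ell(p_i)$ is unbiased for $D_i$, independence of the $p$ probe sets gives $D_i^p$, and averaging over the uniform candidate index gives $M_p(P)$. Your explicit observations that $I_a$ is independent of the probes, and that sharing probes across candidates does not affect linearity of expectation, fill in details the paper's brief proof leaves implicit.
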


\begin{proof}
	For every fixed $i$ and every $\ell$,
	$
	\E[\widehat D_\ell(p_i)] = \frac{n}{b_m}\cdot b_m\cdot \frac{D_i}{n} = D_i.
	$
	The $p$ probe samples are independent, so
	$
	\E\left[\prod_{\ell \in [p]}\widehat D_\ell(p_i)\right]=D_i^p.
	$
	Since each candidate sample is uniform in $[n]$,
	$
	\E[\widehat M_p] = n\cdot \frac1n\sum_{i\in[n]}D_i^p = M_p(P).
	$
\end{proof}

We analyze the variance of $\widehat M_p$.
\begin{lemma}
	\label{lem:density-moment-variance}
	$
	\Var[\widehat M_p] \le \eps^2\left(M_p(P)\right)^2.
	$
\end{lemma}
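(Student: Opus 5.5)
The plan is to bound $\Var[\widehat M_p]$ with the law of total variance, conditioning on the probe sets $R_1,\ldots,R_p$. Write $b:=b_m$ and $F(R):=\sum_{i\in[n]}\prod_{\ell\in[p]}\widehat D_\ell(p_i)$. Conditional on the probes, the candidates $I_1,\ldots,I_b$ are i.i.d.\ uniform, so $\E[\widehat M_p\mid R]=F(R)$. This gives
\[
\Var[\widehat M_p]=\E\bigl[\Var(\widehat M_p\mid R)\bigr]+\Var(F(R)).
\]
The first term is the sampling variance of the candidates. The second is the correlation created by reusing the probes. Throughout I will use the key parameter identity
\[
\frac{n}{b\tau}=\frac{\tau^{p}}{b}=\frac{\eps^2}{C_b\log n},
\]
which follows from $n=\tau^{p+1}$ and $b=C_b\eps^{-2}\tau^p\log n$.

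\emph{Candidate term.} For i.i.d.\ candidates, $\Var(\widehat M_p\mid R)\le \frac{n^2}{b}\cdot\frac1n\sum_i\prod_\ell\widehat D_\ell(p_i)^2$. Taking expectations and using independence of the $p$ probe sets gives $\frac{n}{b}\sum_i\prod_\ell\E[\widehat D_\ell(p_i)^2]$. Each $\widehat D_\ell(p_i)$ is an average of $b$ i.i.d.\ terms $nK_\sigma(p_i,p_u)$. Hence its variance is at most $\frac{n}{b}\sum_u K_\sigma(p_i,p_u)^2\le \frac nb D_i$, using $K_\sigma\le1$. So $\E[\widehat D_\ell(p_i)^2]\le D_i^2+\frac nb D_i$. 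Expanding $(D_i^2+\frac nbD_i)^p$ binomially, the term becomes
\[
\sum_{q=0}^p\binom pq\Bigl(\frac nb\Bigr)^{q+1}\sum_iD_i^{2p-q}.
\]
Lemma~\ref{lem:density-power-sums} bounds this by $O_{p,d}\bigl(\sum_q(\frac{n}{b\tau})^{q+1}M_p(P)^2\log n\bigr)$. By the parameter identity, this is $O_{p,d}(\eps^2M_p(P)^2/C_b)$.

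\emph{Probe-correlation term.} Expand $\Var(F(R))=\sum_{i,j}\bigl(\E[\prod_\ell\widehat D_\ell(p_i)\widehat D_\ell(p_j)]-(D_iD_j)^p\bigr)$. By independence across $\ell$, the expectation factors as $\prod_\ell\E[\widehat D_\ell(p_i)\widehat D_\ell(p_j)]$. A direct covariance computation for i.i.d.\ sums gives
\[
\E[\widehat D_\ell(p_i)\widehat D_\ell(p_j)]=D_iD_j+\frac1b\Bigl(nS_{ij}-D_iD_j\Bigr)\le D_iD_j+\frac nbS_{ij}.
\]
Since all quantities are nonnegative,
\[
\Var(F(R))\le\sum_{q=1}^p\binom pq\Bigl(\frac nb\Bigr)^q\sum_{i,j}S_{ij}^q(D_iD_j)^{p-q}.
\]
Lemma~\ref{lem:gaussian-correlation-bound} then gives $O_{p,d}\bigl(\sum_{q\ge1}(\frac{n}{b\tau})^qM_p(P)^2\log n\bigr)=O_{p,d}(\eps^2M_p(P)^2/C_b)$.

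Adding the two terms and choosing $C_b=C_b(p,d)$ large enough yields $\Var[\widehat M_p]\le\eps^2M_p(P)^2$. The main obstacle is the correlation term, where the $n^2$ pairs $(i,j)$ share probes. That difficulty is absorbed entirely by Lemma~\ref{lem:gaussian-correlation-bound}. The step needing care is the exact second-moment expansion: I must verify that the cross term is bounded by $\frac nbS_{ij}$ with the correct power of $n/b$, so that each $q$-th order term matches the $\tau^{-q}$ savings in that lemma. The one-sided bound used above is legitimate because the subtracted $-D_iD_j/b$ only helps.
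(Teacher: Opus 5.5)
Your proposal is correct and follows essentially the same route as the paper: the law of total variance conditioned on the probe samples, the exact one-probe identity $\E[\widehat D(p_i)\widehat D(p_j)] = D_iD_j + \frac{1}{b_m}(nS_{ij}-D_iD_j)$, Lemma~\ref{lem:gaussian-correlation-bound} for the probe-correlation term, and Lemma~\ref{lem:density-power-sums} (via $S_{ii}\le D_i$) for the candidate term, closed off with $n/(b_m\tau)=\eps^2/(C_b\log n)$. The only cosmetic difference is that you keep the binomial coefficients explicit, whereas the paper absorbs them into a constant $C'_Z$.
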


\begin{proof}
	Let $Z_i:=\prod_{\ell \in [p]}\widehat D_\ell(p_i)$ and $Z:=\sum_{i\in[n]} Z_i$.
	We decompose the variance according to the probe samples $R$ and the candidate samples $I$:
	\begin{equation}
		\label{eq:law-total-variance}
		\Var[\widehat M_p] = \Var_{R}\left[\E_I[\widehat M_p\mid R]\right] + \E_R\left[\Var_I(\widehat M_p\mid R)\right].
	\end{equation}
	Conditioned on $R$,
	$$
	\E_I[\widehat M_p\mid R] = \frac{n}{b_m}\cdot b_m\cdot \frac1n\sum_{i\in[n]}Z_i = Z.
	$$
	Thus the first term in~\eqref{eq:law-total-variance} is $\Var_R[Z]$.
	
	We first bound $\Var_R[Z]$.  For one probe sample, a direct calculation gives
	\begin{equation}
		\label{eq:one-probe-second-moment}
		\E[\widehat D(p_i)\widehat D(p_j)] = D_iD_j+\frac{1}{b_m}\left(nS_{ij}-D_iD_j\right) \le D_iD_j+\frac{n}{b_m}S_{ij}.
	\end{equation}
	Since the $p$ probe samples are independent, we have
	\begin{equation}
		\label{eq:Z}
		\E_R[Z_iZ_j] \le \left(D_iD_j+\frac{n}{b_m}S_{ij}\right)^p.
	\end{equation}
	Since $\E_R[Z]=M_p(P)$ and $\E_R[Z^2] = \sum_{i,j\in[n]}\E_R[Z_iZ_j]$, expanding the right-hand side of \eqref{eq:Z} gives
	\begin{eqnarray*}
		\Var_R[Z]
		&=& \E_R[Z^2]-(\E_R[Z])^2 \\
		&\le& \sum_{q=0}^p \binom{p}{q}
		\left(\frac{n}{b_m}\right)^q
		\sum_{i,j\in[n]}S_{ij}^q(D_iD_j)^{p-q}-\left(M_p(P)\right)^2 \\
		&\le& \sum_{i,j\in[n]}(D_iD_j)^p+C'_Z\sum_{q=1}^p
		\left(\frac{n}{b_m}\right)^q
		\sum_{i,j\in[n]}S_{ij}^q(D_iD_j)^{p-q}-\left(M_p(P)\right)^2 \\
		&=& C'_Z\sum_{q=1}^p
		\left(\frac{n}{b_m}\right)^q
		\sum_{i,j\in[n]}S_{ij}^q(D_iD_j)^{p-q}.
	\end{eqnarray*}
	Here, $C'_Z=C'_Z(p)$ is a sufficiently large constant. The last equality holds because
	$
	\sum_{i,j\in[n]}(D_iD_j)^p = \left(\sum_{i\in[n]}D_i^p\right)^2 = \left(M_p(P)\right)^2.
	$
	
	By Lemma~\ref{lem:gaussian-correlation-bound},
	\begin{equation}
		\label{eq:probe-variance-pre-final}
		\Var_R[Z] \le C_Z\log n \cdot \left(M_p(P)\right)^2 \cdot \sum_{q=1}^p
		\left(\frac{n}{b_m\tau}\right)^q,
	\end{equation}
	for a sufficiently large constant $C_Z=C_Z(p,d)$.
	Since $n=\tau^{p+1}$ and $b_m=C_b\eps^{-2}\tau^p\log n$, we have
	$
	\frac{n}{b_m\tau} = \frac{\eps^2}{C_b\log n}.
	$
	Choosing the constant $C_b$ in $b_m$ sufficiently large, \eqref{eq:probe-variance-pre-final} becomes
	\begin{equation}
		\label{eq:probe-variance-final}
		\Var_R[Z] \le C_Z\log n \cdot \left(M_p(P)\right)^2 \cdot \sum_{q=1}^p \left(\frac{\eps^2}{C_b\log n}\right)^q \le \frac{\eps^2}{2}\left(M_p(P)\right)^2.
	\end{equation}
	
	We next bound the candidate sampling term in~\eqref{eq:law-total-variance}.  Conditioned on
	$R$, the candidate samples are independent uniform indices in $[n]$, so
	\begin{eqnarray*}
		\Var_I(\widehat M_p\mid R)
		&=& \Var_I\left(\frac{n}{b_m}\sum_{a \in [b_m]} Z_{I_a} \biggm| R\right) =
		\left(\frac{n}{b_m}\right)^2 \sum_{a \in [b_m]}\Var_I(Z_{I_a}\mid R) \\
		&\le& \left(\frac{n}{b_m}\right)^2
		\sum_{a \in [b_m]}\E_I[Z_{I_a}^2\mid R] \\
		&=& \left(\frac{n}{b_m}\right)^2 b_m \cdot \frac1n\sum_{i\in[n]} Z_i^2 
		= \frac{n}{b_m}\sum_{i\in[n]} Z_i^2.
	\end{eqnarray*}
	Taking expectation over the probe samples, and using
	$S_{ii}=\sum_uK_\sigma(p_i,p_u)^2\le D_i$, we get
	\begin{equation*}
		\E_R[Z_i^2] \stackrel{\text{by \eqref{eq:Z}}}{\le} \left(D_i^2+\frac{n}{b_m}D_i\right)^p \le
		C'_Z\sum_{q=0}^p \left(\frac{n}{b_m}\right)^qD_i^{2p-q}.
	\end{equation*}
	Therefore, by Lemma~\ref{lem:density-power-sums},
	\begin{eqnarray*}
		\E_R\left[\Var_I(\widehat M_p\mid R)\right]
		&\le& C_M \frac{n}{b_m}\log n \left(M_p(P)\right)^2
		\sum_{q=0}^p \left(\frac{n}{b_m}\right)^q\frac1{\tau^{q+1}} \\
		&=& C_M \log n\,\left(M_p(P)\right)^2
		\left(\frac{n}{b_m\tau}\right) \sum_{q=0}^p \left(\frac{n}{b_m\tau}\right)^q \\
		&=& C_M \log n\,\left(M_p(P)\right)^2
		\left(\frac{\eps^2}{C_b\log n}\right) \sum_{q=0}^p \left(\frac{\eps^2}{C_b\log n}\right)^q,
	\end{eqnarray*}
	where $C_M=C_M(p,d)$ is a sufficiently large constant.
	Again, for a sufficiently large constant $C_b$, this is at most
	\begin{equation}
		\label{eq:candidate-variance-final}
		\E_R\left[\Var_I(\widehat M_p\mid R)\right] \le \frac{\eps^2}{2}\left(M_p(P)\right)^2.
	\end{equation}
	Combining~\eqref{eq:probe-variance-final} and~\eqref{eq:candidate-variance-final} proves the lemma.
\end{proof}

\begin{proof}[Proof of Theorem~\ref{thm:Fp}]
	For one copy of the estimator, Lemmas~\ref{lem:density-moment-unbiased} and~\ref{lem:density-moment-variance}, together with Chebyshev's inequality, imply that
	$$
	\Pr\left[\abs{\widehat M_p-M_p(P)}>3\eps M_p(P)\right] \le \frac{1}{9}.
	$$
	Run $O(\log(1/\delta))$ independent copies and return the median.  This boosts the success probability to at least $1-\delta$.  Rescaling the accuracy parameter by a constant factor gives the stated $(1\pm\eps)$ guarantee.
	
	Each copy of the estimator stores $b_m$ candidate samples and $p b_m$ probe samples, where
	$
	b_m=O_{p,d}(\eps^{-2}n^{p/(p+1)}\log n).
	$
	Since $p$ is fixed, the space bound follows after multiplying by
	$O(\log(1/\delta))$ independent repetitions.  
\end{proof}

\subsection{Removing the Assumption of Known Stream Length}
\label{sec:unknown-length}

The algorithm presented above assumes that $n$ is known in
advance, since the sample budget $b_m$ is chosen as a function of $n$.
We now remove this assumption without changing the estimator or its
asymptotic space complexity. The only modification is to the implementation
of the candidate and probe samplers.

We run the median amplification above with failure probability
$\delta/2$, and hence set
$
\kappa_m = C_\kappa \log \frac{2}{\delta}.
$
The remaining failure probability $\delta/2$ is allocated to the sampling
construction below. 

Set $\alpha := \frac{p}{p+1}$. Partition the stream positions into dyadic time blocks
$$
\mathcal{J}_j := \{2^j,2^j+1,\ldots,2^{j+1}-1\}, \qquad j=0,1,2,\ldots .
$$
The final block may be only partially filled.
When the first point in block $\mathcal{J}_j$ arrives, initialize
$s_j := C_s\eps^{-2}\kappa_m 2^{\alpha j}(j+1)$ independent reservoir samplers, where $C_s=C_s(p,d)$ is a sufficiently large constant. These samplers receive only the points belonging to $\mathcal{J}_j$. Consequently, after the stream ends, their
outputs are independent uniform samples, with replacement, from
the points in $\mathcal{J}_j$.

Maintain the current stream length using a counter. After the stream ends,
set $\tau := n^{1/(p+1)}$ and $b_m := C_b\eps^{-2}\tau^p\log n$. The higher-moment estimator requires $Q := (p+1)\kappa_m b_m$ independent uniform samples from the final stream: $b_m$ candidate samples and $p \cdot b_m$ probe samples for each of the $\kappa_m$ repetitions.

For every $j$, let $\nu_j := |\mathcal{J}_j\cap[n]|$ be the length of the $j$-th block. Independently for each $t\in[Q]$, draw a block label $J_t$ according to
$\Pr[J_t=j] = {\nu_j}/{n}$. Let $\xi_j := |\{t\in[Q]:J_t=j\}|$
be the number of requested samples from block $\mathcal{J}_j$. If
$\xi_j>s_j$ for some $j$, declare a sampling overflow and return an
arbitrary value. Otherwise, assign the first $\xi_j$ reservoir outputs
from block $\mathcal{J}_j$ to the sample slots whose label is $j$.
Finally, use any fixed bijection between these $Q$ slots
and the candidate and probe samples
$$
\{S_a^{(r)}:r\in[\kappa_m],\,a\in[b_m]\} \ \cup\ \{R_{\ell,k}^{(r)}: r\in[\kappa_m],\,\ell\in[p],\,k\in[b_m]\}.
$$
Conceptually, we try to merge the reservoir samples maintained separately for different blocks into $Q$ samples from the entire stream.

We first show that, in the absence of an overflow, this procedure produces
exactly the samples required by the analysis above. Couple every block
$\mathcal{J}_j$ with an infinite sequence of independent uniform samples
from that block, and regard the stored samples  as the first
$s_j$ elements of this sequence. For any sample slot $t$ and any stream
position $i\in\mathcal{J}_j\cap[n]$, the probability that slot $t$
selects position $i$ is
$$
\Pr[J_t=j]\cdot\frac{1}{\nu_j} = \frac{\nu_j}{n}\cdot\frac{1}{\nu_j} = \frac{1}{n}.
$$
Because the block labels and all block sample sequences are independent, the resulting stream positions are independent and uniform in $[n]$. Therefore, the conceptual construction produces candidate and probe samples with the same joint distribution as those used in the estimator and variance analysis above. On the event that no block overflows, the actual implementation agrees exactly with this conceptual construction.

It remains to bound the probability of overflow. Let
$j_{\max}:=\lfloor\log_2 n\rfloor$. For each $j\leq j_{\max}$, the random
variable $\xi_j$ is binomial with mean
\begin{eqnarray*}
	\lambda_j :=\E[\xi_j]
	&=& Q \cdot \frac{\nu_j}{n} \\
	&\leq& C_0\eps^{-2}\kappa_m
	n^{\alpha-1}2^j\log n \\
	&\leq& C_0\eps^{-2}\kappa_m 2^{\alpha j}(j_{\max}+1)
	2^{-(j_{\max}-j)/(p+1)} \\
	&\leq& C_1\eps^{-2}\kappa_m 2^{\alpha j}(j+1),
\end{eqnarray*}
where $C_0,C_1$ depend only on $p$ and $d$. To justify the last inequality,
write $r=j_{\max}-j$. Then $j_{\max}+1=j+r+1\leq (j+1)(r+1)$, while
$\sup_{r\geq 0}(r+1)2^{-r/(p+1)} = O_p(1)$ for fixed $p$.

Choose $C_s$ sufficiently large that
$s_j\geq 4\lambda_j$ for every $j$. A Chernoff bound then gives
$$
\Pr[\xi_j>s_j] \leq \exp(-c s_j) \leq \frac{\delta}{2^{j+2}},
$$
where the last inequality follows by increasing $C_s$ and $C_\kappa$ if necessary. A union bound over all dyadic blocks gives that the probability that some sample pool overflows is at most $\delta/2$.

Finally, at any time during block $\mathcal{J}_{j_{\max}}$, the number of
stored reservoir samples is
\begin{eqnarray*}
	\sum_{j=0}^{j_{\max}}s_j
	&=& O_{p,d}\left(\eps^{-2}\kappa_m\sum_{j=0}^{j_{\max}}
	2^{\alpha j}(j+1)\right) \\
	&=& O_{p,d}\left(\eps^{-2}\kappa_m 2^{\alpha j_{\max}}(j_{\max}+1)
	\right) \\
	&=& O_{p,d}\left(\eps^{-2}n^{1-1/(p+1)} \log n \log\frac{1}{\delta}\right).
\end{eqnarray*}
This quantity dominates the total space usage.

Let $E$ denote the event that no sample pool overflows. Conditioned on $E$, the actual implementation agrees exactly with the conceptual estimator, and thus the failure probability in this case is at most $\delta/2$. Since $\Pr[E^c]\le \delta/2$, the overall failure probability is at most
$\delta$.

\subsection{Lower Bound}
\label{sec:Fp-lb}

In this section, we give a lower bound for one-dimensional Gaussian density moments by performing a reduction from the classical frequency-moment problem.  Recall that, for a frequency vector $f=(f_1,\ldots,f_N)$, its $k$-th frequency moment is $F_k(f):=\sum_{a\in[N]} f_a^k$. For every fixed integer $k>2$, the classical lower bound for frequency moments states that, for any $\eps \in [C_k N^{-1/k}, 1/10]$ (for a sufficiently large constant $C_k$), any one-pass randomized algorithm that returns a $(1 + \eps)$-approximation to $F_k$ with constant success probability requires
$
\Omega_k\left(\frac{\eps^{-2}N^{1-2/k}}{\log N}\right)
$
bits of space~\cite{Ganguly12}.

\begin{theorem}
	\label{thm:Fp-lb}
	Fix an integer $p>1$.  For $C_p n^{-1/(p+1)} \le \eps\le 1/20$, where $C_p$ is a sufficiently large constant, any one-pass streaming algorithm that, on every stream of at most $n$ points on the real line, returns a $(1+\eps)$-approximation to $M_p(P)$ with probability at least $2/3$ requires
	$
	\Omega_p\left(\frac{\eps^{-2}n^{1-2/(p+1)}}{\log n}\right)
	$
	bits of space.  
\end{theorem}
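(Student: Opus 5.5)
We reduce from the classical $F_k$ lower bound of~\cite{Ganguly12} with $k:=p+1$, in the same way Theorem~\ref{thm:SS-lb} reduces from $F_2$. Fix $\sigma=1$. Given a frequency stream over $[N]$ with total length $\ell\le n$, map every occurrence of item $a$ to the point $z_a:=aL\in\mathbb R$ with $L:=\lceil\sqrt{2\ln(n/\eta)}\rceil$ and $\eta:=c\,\eps$ for a small constant $c=c(p)$. The algorithm for $M_p$ runs on this point stream, so the streaming memory is unchanged. For every point $v$ at $z_a$,
$$
D_v=f_a+E_v,\qquad 0\le E_v\le \ell\cdot \frac{\eta}{n}\le \eta .
$$

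Next compare $M_p(P)$ with $F_{p+1}(f)$. Since $M_p(P)=\sum_a f_a(f_a+E_a)^p$ and $f_a\ge1$ for every occupied $a$, we have $(f_a+E_a)^p\le f_a^p(1+\eta)^p$. Therefore
$$
F_{p+1}(f)\le M_p(P)\le (1+\eta)^pF_{p+1}(f)\le(1+2p\eta)F_{p+1}(f)
$$
for small $\eta$. Taking $c\le 1/(40p)$, a $(1+\eps/4)$-approximation of $M_p$ gives a $(1+\eps)$-approximation of $F_{p+1}$, up to constant rescaling of $\eps$.

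Finally, match the parameters. The hard $F_k$ instances have universe size $N$ and stream length $\ell$. We take $N$ equal to $n$ (or $\Theta(n)$), and we need $\ell\le n$. For this we must check that the hard instances of~\cite{Ganguly12} for universe size $N$ use stream length $O(N)$, typically $N$ items plus one heavy item of frequency about $(\eps N)^{1/k}$. If instead they use length $\mathrm{poly}(N)$, we set $N$ to a suitable power of $n$ and absorb the loss into constants. Under the assumption $\ell=O(N)$, taking $N=\Theta(n)$ makes the constraint $\eps\ge C_kN^{-1/k}$ become $\eps\ge C_pn^{-1/(p+1)}$ with $k=p+1$. The lower bound $\Omega_k(\eps^{-2}N^{1-2/k}/\log N)$ then becomes $\Omega_p(\eps^{-2}n^{1-2/(p+1)}/\log n)$. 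The upper limit $\eps\le 1/20$ makes the rescaled accuracy at most $1/10$, as the cited bound requires.

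The main obstacle is this parameter bookkeeping: confirming that the hard streams have length linear in the universe size, so that $N$ and $n$ agree up to constants and the stated range of $\eps$ transfers. The geometric encoding itself is routine.
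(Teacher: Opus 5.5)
Your proposal matches the paper's proof: the same encoding $z_a=aL$ with $\eta=\Theta(\eps/p)$, the same sandwich $F_{p+1}(f)\le M_p(P)\le(1+\eta)^pF_{p+1}(f)$, and the same parameter transfer with $N=\Theta(n)$. The paper handles your ``main obstacle'' by noting that the standard hard $F_k$ instances have entries in $\{0,1\}$ apart from at most two heavier coordinates (of sizes $O(\eps N^{1/k})$ and $N^{1/k}$), so their length is at most $2N$, and then takes $N=\lfloor n/2\rfloor$.

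Two small corrections. First, your fallback for $\mathrm{poly}(N)$-length instances would not work, because taking $N=n^{1/c}$ changes the exponent of $n$ rather than a constant factor. Second, to reach accuracy at most $1/10$ from $\eps\le 1/20$ you need a factor-$2$ rescaling, not the factor $4$ that your $(1+\eps/4)$ bookkeeping gives. Choose $\eta$ so that $(1+\eps)(1+\eta)^p\le 1+2\eps$, as the paper does with $\eta=\eps/(100p)$.
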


\begin{proof}
	Set $k:=p+1$. To distinguish it from the frequency stream used in the reduction, we refer to the stream for $M_p$ as the \emph{Gaussian point stream}. Throughout this subsection, we fix the Gaussian kernel bandwidth to $\sigma=1$.
	
	We reduce the insertion-only $F_k$ problem to estimating
	$M_p$.  We use $N$ for the universe size of the frequency stream,	reserving $n$ for the upper bound on the length of the Gaussian point stream. The hard instances underlying the classical $F_k$ lower bound can be chosen to
	have $O(N)$ updates.  More specifically, in the standard reduction the frequency	vector has entries in $\{0,1\}$ except for at most one coordinate of frequency
	$O(\eps N^{1/k})$, together with one additional spike of size $N^{1/k}$; hence the resulting insertion-only stream has length at most $2N$ for sufficiently large $N$.  We therefore set $N:=\left\lfloor\frac n2\right\rfloor$,	so that every frequency stream maps to a Gaussian point stream of length at most $n$.
	
	Let $f=(f_1,\ldots,f_N)$ be the final frequency vector and let
	$\ell:=\sum_{a\in[N]}f_a\le n$ be its stream length.  Choose
	$\eta:=\frac{\eps}{100p}$ and $L:=\sqrt{2\ln\frac n\eta}$.
	For each occurrence of item $a\in[N]$, insert the point
	$z_a:=aL\in\mathbb R$. Thus, the $f_a$ occurrences of item $a$ become $f_a$ coincident Gaussian points at location $z_a$.  Distinct item locations are separated by at least $L$, and	therefore
	$$
	K_1(z_a,z_b) = \exp\left(-\frac{|a-b|^2L^2}{2}\right) \le \exp\left(-\frac{L^2}{2}\right) \le \frac{\eta}{n}
	$$
	for all $a\ne b$.
	
	Every point placed at $z_a$ has the same Gaussian density, which we denote by $D(a)$.  It satisfies
	$$
	D(a) = f_a+\sum_{b\ne a}f_bK_1(z_a,z_b) = f_a+r_a,
	$$
	where $0\le r_a\le \ell\cdot\frac{\eta}{n}\le\eta$. 
	For every active item $a$, we have $f_a\ge1$, and hence
	$$
	f_a\le D(a)\le f_a+\eta\le(1+\eta)f_a.
	$$
	Since there are $f_a$ stream points at $z_a$, it follows that
	$
	M_p(P)=\sum_{a\in[N]}f_aD(a)^p.
	$
	Therefore,
	$$
	F_{p+1}(f)=\sum_{a\in[N]}f_a^{p+1} \le M_p(P) \le (1+\eta)^pF_{p+1}(f).
	$$
	
	For the chosen value of $\eta$ and $\eps\le 1/20$, we have
	$(1+\eta)^p\le e^{\eps/100}\le 1+\frac{\eps}{99}$.
	Hence, if $\widehat M$ is a $(1+\eps)$-approximation to $M_p(P)$, then
	$$
	(1-\eps)F_{p+1}(f) \le \widehat M \le (1+\eps)\left(1+\frac{\eps}{99}\right)F_{p+1}(f) \le (1+2\eps)F_{p+1}(f),
	$$
	so $\widehat M$ is a $(1+2\eps)$-approximation to $F_{p+1}(f)$.  The success probability can be increased from $2/3$ to the constant required by the classical frequency-moment lower bound using a constant number of parallel repetitions.  Since $2\eps\le 1/10$, and $2\eps\ge C_{p+1}N^{-1/(p+1)}$ when $C_{p+1}$ is sufficiently large, the lower bound for $F_{p+1}$ applies with accuracy parameter $2\eps$ and gives
	$$
	\Omega_p\left(\frac{(2\eps)^{-2}N^{1-2/(p+1)}}{\log N}\right) = \Omega_p\left(\frac{\eps^{-2}n^{1-2/(p+1)}}{\log n}\right)
	$$
	bits, which proves the theorem.
\end{proof}
	
	\section{Conclusion}
	
	This paper develops one-pass streaming algorithms for the diversity index and density moments under Gaussian kernel similarity. Several questions remain open. For higher density moments, a small polynomial gap remains between the upper and lower bounds. For the diversity index, our lower bound shows that a dimension-dependent polylogarithmic amount of bit space is necessary; closing the gap between the exponents in the upper and lower bounds remains open. More broadly, it would be interesting to characterize the kernels or geometric similarity functions that admit sublinear-space streaming algorithms.

	\section*{Use of AI assistance}
	The research questions, algorithms, proof strategies, and substantive mathematical arguments in this paper were developed by the author. ChatGPT (GPT-5.5 and GPT-5.6) was used for language editing, proofreading, and to assist with routine algebraic manipulations in the Gaussian-kernel and variance analyses. The resulting text and calculations were verified by the author, who takes full responsibility for the paper’s contents.

%	\bibliographystyle{alpha}
%	\bibliography{paper}
	
\newcommand{\etalchar}[1]{$^{#1}$}

%	\appendix
%	\input{appendix}

\section{Omitted Proofs}

\subsection{Proof of Lemma~\ref{lem:fixed-weight-index}}
\label{sec:proof-lem-index}

\begin{proof}
	Let $X$ be uniform over the vectors in $\{0,1\}^m$ of Hamming weight $m/2$, and let $I$ be uniform in $[m]$ and independent of $X$.  Let $\mathcal R$ denote the public randomness and let $\mathcal M=\mathcal M(X,\mathcal R)$ be Alice's message.  If Bob recovers $X_I$ with error probability at most $1/3$, then the binary form of Fano's inequality gives
	$$
	H(X_I\mid \mathcal M,\mathcal R,I)\le h_2(1/3),
	$$
	where $h_2$ is the binary entropy function.  Since $I$ is uniform and independent of $(X,\mathcal M,\mathcal R)$, we have
	$$
	\sum_{j \in [m]} H(X_j\mid \mathcal M,\mathcal R) = m H(X_I\mid \mathcal M,\mathcal R,I).
	$$
	By subadditivity of conditional entropy,
	$$
	H(X\mid \mathcal M,\mathcal R) \le \sum_{j \in [m]}  H(X_j\mid \mathcal M,\mathcal R) \le m h_2(1/3).
	$$
	On the other hand,
	$$
	H(X)=\log\binom{m}{m/2}=m-O(\log m).
	$$
	It follows that
	$$
	I(X;\mathcal M\mid \mathcal R) = H(X)-H(X\mid \mathcal M,\mathcal R) = \Omega(m).
	$$
	The message length is at least $I(X;\mathcal M\mid \mathcal R)$, proving the lemma.
\end{proof}

\subsection{Proof of Lemma~\ref{lem:SS-continuous-feature}}
\label{sec:Gaussian-identity}

\begin{proof}
	Completing the square gives
	$$
	\norm{z-x}_2^2+\norm{z-y}_2^2 = 2\norm{z-\frac{x+y}{2}}_2^2+\frac{\norm{x-y}_2^2}{2}.
	$$
	Therefore,
	\begin{equation*}
		\int_{\R^d}q_x(z)q_y(z) dz = \left(\frac{2}{\pi\sigma^2}\right)^{d/2}
		\exp\left(-\frac{\norm{x-y}_2^2}{2\sigma^2}\right) \int_{\R^d}
		\exp\left(-\frac{2\norm{z-(x+y)/2}_2^2}{\sigma^2}\right) dz.
	\end{equation*}
	The remaining integral is independent of $x$ and $y$.  With $u=z-(x+y)/2$ and the standard Gaussian integral
	$$
	\int_{\R^d}e^{-a\norm{u}_2^2} du=\left(\frac{\pi}{a}\right)^{d/2},
	\qquad (a>0)
	$$
	we obtain
	$$
	\int_{\R^d}\exp\left(-\frac{2\norm{u}_2^2}{\sigma^2}\right) du = \left(\frac{\pi\sigma^2}{2}\right)^{d/2}.
	$$
	The normalizing constants cancel, and hence
	$$
	\ip{q_x}{q_y}_{L_2(\R^d)} = \exp\left(-\frac{\norm{x-y}_2^2}{2\sigma^2}\right) = K_\sigma(x,y).
	$$
	Finally,
	\begin{eqnarray*}
		\norm{Q}_{L_2(\R^d)}^2
		&=& \ip{Q}{Q}_{L_2(\R^d)}
		=\left\langle \sum_{i \in [n]} q_{p_i},\sum_{j \in [n]} q_{p_j}\right\rangle_{L_2(\R^d)} \\
		&=& \sum_{i \in [n]}\sum_{j \in [n]} \ip{q_{p_i}}{q_{p_j}}_{L_2(\R^d)} \\
		&=& \sum_{i \in [n]}\sum_{j \in [n]} K_\sigma(p_i,p_j)
		=\SUM(P).
	\end{eqnarray*}
\end{proof}

\subsection{Proof of Claim~\ref{lem:discrete-gaussian-tail}}
\label{app:discrete-gaussian-tail}

Recall that
$$
q_u(z) = \left(\frac{2}{\pi\sigma^2}\right)^{d/4} \exp\left(-\frac{\|z-u\|_2^2}{\sigma^2}\right),
$$
and that $\psi(u)_t=h^{d/2}q_u(ht)$, and
$
\phi(u)_t = \psi(u)_t \mathbf{1}\!\left[\|ht-u\|_\infty\leq L\right].
$

Under our choices of parameters, $h=\frac{\sigma}{C_h\sqrt{\Lambda_\eta}}$,
$L=C_L\sigma\sqrt{\Lambda_\eta}$, and $\Lambda_\eta=C_\Lambda\log\frac{dn}{\eta}$.
Since $0<\eta\leq 1/2$ and $n,d\geq 1$, by choosing the constants
sufficiently large we may assume that $h\leq\sigma$ and $L\geq\sigma$.

Define the normalized one-dimensional Gaussian
$$
g_\sigma(s) := \left(\frac{2}{\pi\sigma^2}\right)^{1/2} \exp\left(-\frac{2s^2}{\sigma^2}\right).
$$
Then
$$
q_u(z)^2 = \prod_{r=1}^d g_\sigma(z_r-u_r), \qquad \text{and} \quad \int_{\mathbb R}g_\sigma(s) ds=1.
$$
For $a\in\mathbb R$, define
$$
\mathsf{M}(a) := h\sum_{m\in\mathbb Z}g_\sigma(hm-a)
$$
and
$$
\mathsf{T}_L(a) := h\sum_{\substack{m\in\mathbb Z\\|hm-a|>L}} g_\sigma(hm-a).
$$

We first establish bounds on these one-dimensional shifted-lattice sums.
The function $g_\sigma$ is even and nonincreasing on $[0,\infty)$. On
either half-line, the points of the shifted lattice $h\mathbb Z-a$ are
spaced by $h$. Comparing all but the point closest to the origin with the
integral over the preceding interval of length $h$ gives
$$
\mathsf{M}(a) \leq 2h g_\sigma(0) + 2\int_0^\infty g_\sigma(s) ds =	1+2h g_\sigma(0).
$$
Since $h\leq\sigma$, $\mathsf{M}(a) \leq 1+2\sqrt{\frac{2}{\pi}} <3$.
The same comparison, now applied beyond distance $L$, gives
$$
\mathsf{T}_L(a) \leq 2h g_\sigma(L) + 2\int_L^\infty g_\sigma(s) ds.
$$
Moreover,
$$
\int_L^\infty \exp\left(-\frac{2s^2}{\sigma^2}\right) ds \leq \frac{1}{L} \int_L^\infty s\exp\left(-\frac{2s^2}{\sigma^2}\right) ds = \frac{\sigma^2}{4L} \exp\left(-\frac{2L^2}{\sigma^2}\right).
$$
Using $h\leq\sigma$ and $L\geq\sigma$, we therefore obtain
\begin{equation*}
	\mathsf{T}_L(a) \leq
	2\sqrt{\frac{2}{\pi}}\frac{h}{\sigma}
	\exp\left(-\frac{2L^2}{\sigma^2}\right) +
	\frac{1}{2}\sqrt{\frac{2}{\pi}}\frac{\sigma}{L}
	\exp\left(-\frac{2L^2}{\sigma^2}\right) \leq
	2\exp\left(-\frac{2L^2}{\sigma^2}\right).
\end{equation*}
Both estimates hold uniformly in the shift $a$.

Because $\phi(u)$ is obtained from $\psi(u)$ by deleting the coordinates
outside the truncation region, we have
\begin{equation*}
	\|\psi(u)-\phi(u)\|_2^2 = h^d \sum_{\substack{t\in\mathbb Z^d\\	\|ht-u\|_\infty>L}} q_u(ht)^2 =h^d\sum_{\substack{t\in\mathbb Z^d\\	\|ht-u\|_\infty>L}} \prod_{r \in [d]} g_\sigma(ht_r-u_r).
\end{equation*}
Using
$$
\mathbf{1}\!\left[\|ht-u\|_\infty>L\right] \leq \sum_{r \in [d]} \mathbf{1}\!\left[|ht_r-u_r|>L\right],
$$
and then separating the coordinate sums, we obtain
\begin{equation*}
	\|\psi(u)-\phi(u)\|_2^2 \leq \sum_{r \in [d]}	\mathsf{T}_L(u_r) \prod_{\substack{s \in [d] \\s\neq r}} \mathsf{M}(u_s)  \leq 2d 3^{d-1} \exp\left(-\frac{2L^2}{\sigma^2}\right).
\end{equation*}
Consequently,
$$
\|\psi(u)-\phi(u)\|_2 \leq \sqrt{2d 3^{d-1}}  \exp\left(-\frac{L^2}{\sigma^2}\right).
$$

Substituting $L=C_L\sigma\sqrt{\Lambda_\eta}$ and
$\Lambda_\eta=C_\Lambda\log_2(dn/\eta)$ gives
$$
\|\psi(u)-\phi(u)\|_2 \leq A_d\left(\frac{dn}{\eta}\right)^{-\gamma},
$$
where $A_d:=\sqrt{2d 3^{d-1}}$ and $\gamma:=\frac{C_L^2C_\Lambda}{\ln 2}$.
Since $d$ is fixed and $dn/\eta\geq 2d$, we may choose $C_L$ and
$C_\Lambda$ sufficiently large that $\gamma>1$ and
$A_d(2d)^{-(\gamma-1)}\leq \frac{d}{12}$.
It follows uniformly for all $u\in\mathbb R^d$ that
$$
\|\psi(u)-\phi(u)\|_2 \leq \frac{d}{12}\left(\frac{dn}{\eta}\right)^{-1} = \frac{\eta}{12n}.
$$

\end{document}